\documentclass[12pt,onecolumn,journal,draftclsnofoot,doublespace]{IEEEtran}
\usepackage{epsf,psfrag,amssymb,amsfonts,amsmath,graphicx,cite,subfigure}




\def\boxit#1{\vbox{\hrule\hbox{\vrule\kern3pt
        \vbox{\kern3pt#1\kern3pt}\kern3pt\vrule}\hrule}}

\def\reals{ { {\rm  I \kern-0.15em R }  } }
\def\complex{ {\,{{\rm C} \kern-0.50em \raise0.20ex {  |}}\, }}

\def\Rbf{{\bf R}}

\def\Fc{{\cal F}}

\def\Kc{{\cal K}}

\def\Pc{{\cal P}}

\def\Rc{{\cal R}}

\def\Tc{{\cal T}}

\def\Zc{{\cal Z}}

\def\be{\begin{equation}}
\def\ee{\end{equation}}

\def\defeq{{\stackrel{\Delta}{=}}}
\def\scalefig#1{\epsfxsize #1\textwidth}
%

\def\Rxx{\Rbf_{\ssstyle X\kern-.1em X}}

\let\ssstyle=\scriptscriptstyle


\def\etal{{\it et al. \/}}

\def\ie{{\it i.e.,\ \/}}
\def\Kout{\setbox1=\hbox{\Huge\bf K}\hbox to
1.05\wd1{\hspace{.05\wd1}
\def\Sout{\setbox1=\hbox{\Huge\bf S}\hbox to 1.05\wd1{\hspace{.05\wd1}

\def\ie{{\it i.e.,\ \/}}
\def\etc{{\it etc.}}
\def\defeq{{\stackrel{\Delta}{=}}}

\def\scalefig#1{\epsfxsize #1\textwidth}
\def\nn{{\nonumber}}
\newcommand{\mbbE}{\mathbb{E}}
\newtheorem{lemma}{Lemma}
\newtheorem{theorem}{Theorem}

\newtheorem{definition}{Definition}
\newtheorem{corollary}{Corollary}

\begin{document}
\title{\LARGE Indexability of Restless Bandit Problems and Optimality
of Whittle's Index for Dynamic Multichannel Access}
\author{ Keqin Liu, ~~~ Qing Zhao\\
 University of California, Davis, CA 95616\\
 kqliu@ucdavis.edu, qzhao@ece.ucdavis.edu }

\maketitle


\begin{abstract}
We  \footnotetext{This work was supported by the
Army Research Laboratory CTA on Communication and Networks under
Grant DAAD19-01-2-0011 and by the National Science Foundation under
Grants ECS-0622200 and CCF-0830685.\\ Part of this work was
presented at the 5th IEEE Conference on Sensor, Mesh and Ad Hoc
Communications and Networks (SECON) Workshops (June, 2008) and the
IEEE Asilomar Conference on Signals, Systems, and Computers
(October, 2008).}
consider a class of restless multi-armed bandit problems (RMBP) that arises
in dynamic multichannel access, user/server scheduling, and optimal activation
in multi-agent systems.  For this class of
RMBP, we establish the indexability and obtain Whittle's index in
closed-form for both discounted and average reward criteria. These
results lead to a direct implementation of Whittle's index policy
with remarkably low complexity. When these Markov chains are stochastically
identical, we show that Whittle's index policy is optimal under
certain conditions. Furthermore, it has a semi-universal structure
that obviates the need to know the Markov transition probabilities.
The optimality and the semi-universal structure result from the
equivalency between Whittle's index policy and the myopic policy
established in this work. For non-identical channels, we develop
efficient algorithms for computing  a performance upper bound
given by Lagrangian relaxation. The tightness of the upper
bound and the near-optimal performance of Whittle's index policy are
illustrated with simulation examples.

\end{abstract}

\vspace{-1em}

\begin{IEEEkeywords}
\vspace{-1em} Opportunistic access, dynamic channel selection,
restless multi-armed bandit, Whittle's index, indexability, myopic
policy.
\end{IEEEkeywords}

\section{Introduction}\label{sec:inc}

\subsection{Restless Multi-armed Bandit
Problem}\label{subsec:bandit}

Restless Multi-armed Bandit Process (RMBP) is a generalization of the classical Multi-armed Bandit
Processes (MBP), which has been studied since
1930's~\cite{Mahajan&Teneketzis:07bookchapter}. In an MBP, a player,
with full knowledge of the current state of each arm, chooses
\emph{one} out of $N$ arms to activate at each time and receives a reward
determined by the state of the activated arm. Only the activated arm
changes its state according to a Markovian rule while the states of
passive arms are frozen. The objective is to maximize the long-run
reward over the infinite horizon by choosing which arm to activate
at each time.

The structure of the optimal policy for the classical
MBP was established by Gittins in 1979~\cite{Gittins}, who proved
that an index policy is optimal. The significance of Gittins' result
is that it reduces the complexity of finding the optimal policy for
an MBP from exponential with $N$ to linear with $N$. Specifically,
an index policy assigns an index to each state of each arm and
activates the arm whose current state has the largest index. Arms
are decoupled when computing the index, thus reducing an
$N-$dimensional problem to $N$ independent $1-$dimensional problems.

Whittle generalized MBP to RMBP by allowing multiple $(K\ge 1)$ arms
to be activated simultaneously and allowing passive arms to also
change states~\cite{whittle}. Either of these two generalizations
would render Gittins' index policy suboptimal in general, and
finding the optimal solution to a general RMBP has been shown to be
PSPACE-hard by Papadimitriou and Tsitsiklis~\cite{Complexity}. In fact, merely allowing multiple
plays $(K\ge 1)$ would have fundamentally changed the problem as
shown in the classic work by Anantharam \etal
\cite{Ananthram&etal:87TAC} and by Pandelis and Teneketzis
\cite{Pandelis&Teneketzis:99}.

By considering the Lagrangian relaxation of the problem, Whittle
proposed a heuristic index policy for RMBP~\cite{whittle}. Whittle's
index policy is the optimal solution to RMBP under a relaxed
constraint: the number of activated arms can vary over time provided
that its average over the infinite horizon equals to $K$. This
average constraint leads to decoupling among arms, subsequently, the
optimality of an index policy. Under the strict constraint that
exactly $K$ arms are to be activated at each time, Whittle's index
policy has been shown to be asymptotically
optimal under certain conditions ($N\rightarrow\infty$ stochastically identical
arms)~\cite{Weber}. In the finite regime,
extensive empirical studies have demonstrated its near-optimal
performance, see, for example, \cite{Ansell},~\cite{Glazebrook1}.

The difficulty of Whittle's index policy lies in the complexity of
establishing its existence and computing the index, especially for
RMBP with uncountable state space as in our case. Not every RMBP
has a well-defined Whittle's index; those that admit Whittle's index
policy are called {\it indexable}~\cite{whittle}. The indexability
of an RMBP is often difficult to establish, and computing Whittle's
index can be complex, often relying on numerical approximations.

In this paper, we show that for a significant class
of RMBP most relevant to multichannel
dynamic access applications, the indexability
can be established and Whittle's index can be obtained in closed form.
For stochastically identical arms, we establish the equivalency between
Whittle's index policy and the myopic policy. This result, coupled with
recent findings in \cite{Zhao&etal:08TWC,Ahmad&etal} on the myopic policy for this class of RMBP, shows that
Whittle's index policy achieves the optimal performance under certain conditions
and has a semi-universal structure that is robust against model mismatch and
variations. This class of RMBP is described next.

\subsection{Dynamic Multichannel Access}

Consider the problem of probing $N$ independent Markov chains.
Each chain has two states---``good'' and ``bad''--- with different
transition probabilities across chains (see Fig.~\ref{fig:MC}). At each time,
a player can choose $K~(1\le K< N)$ chains to probe and receives
reward determined by the states of the probed chains. The objective is
to design an optimal policy that governs the selection of $K$ chains
at each time to maximize the long-run reward.

\begin{figure}[htb]
\centerline{
\begin{psfrags}
\psfrag{A}[c]{ $0$} \psfrag{B}[c]{ $1$} \psfrag{A1}[c]{\small (bad)}
\psfrag{B1}[c]{\small (good)} \psfrag{a}[c]{ $p^{(i)}_{01}$}
\psfrag{b}[l]{ $p^{(i)}_{11}$} \psfrag{a1}[r]{ $p^{(i)}_{00}$}
\psfrag{b1}[c]{ $p^{(i)}_{10}$}
\scalefig{0.5}\epsfbox{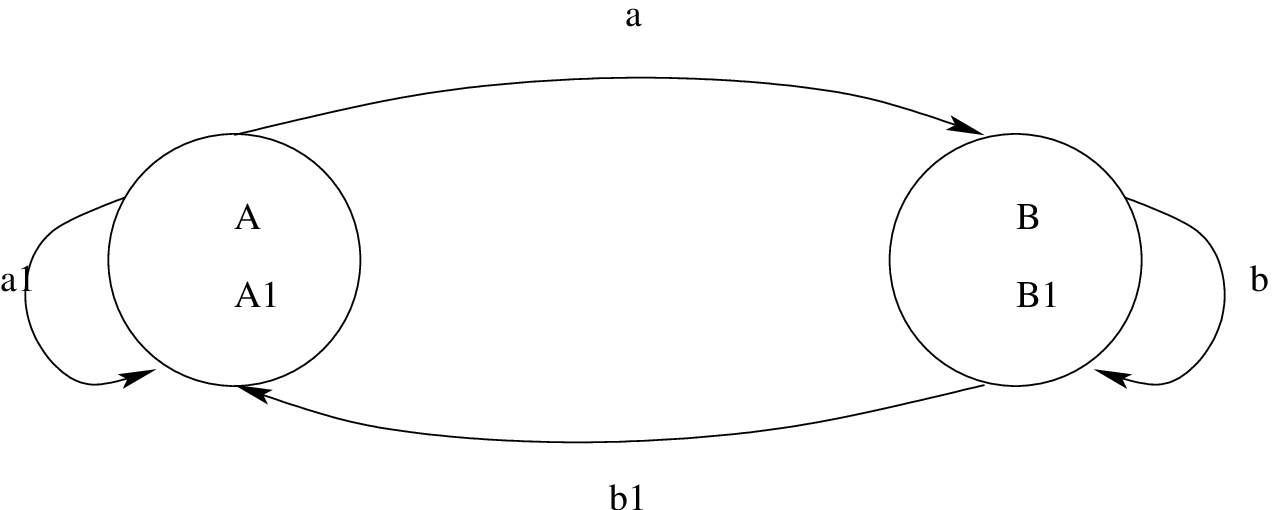}
\end{psfrags}}
\caption{The Gilber-Elliot channel model.} \label{fig:MC}
\end{figure}

The above general problem arises in a wide range of communication
systems, including cognitive radio networks, downlink scheduling in
cellular systems, opportunistic transmission over fading channels,
and resource-constrained jamming and anti-jamming.
In the communications context, the $N$ independent Markov chains corresponds
to $N$ communication channels under the Gilbert-Elliot channel
model~\cite{Gilbert:60}, which has been commonly used to abstract
physical channels with memory (see, for example,
\cite{Zorzi&etal:98COM},~\cite{Johnston&Krishnamurthy:06TWC}).
The state of a channel models the communication quality of this channel and
determines the resultant reward of accessing this channel.
For example, in
cognitive radio networks where secondary users search in the
spectrum for idle channels temporarily unused by primary
users~\cite{Zhao&Sadler:07SPM}, the state of a channel models the
occupancy of the channel. For downlink scheduling in cellular
systems, the user is a base station, and each channel is associated
with a downlink mobile receiver. Downlink receiver scheduling is
thus equivalent to channel selection.

The application of this
problem also goes beyond communication systems. For example, it has
applications in target tracking as considered in~\cite{Leny08ACC},
where $K$ unmanned aerial vehicles are tracking the states of
$N~(N>K)$ targets in each slot.

\subsection{Main Results}\label{subsec:contribution}

Fundamental questions concerning Whittle's index policy since the
day of its invention have been its existence, its performance, and
the complexity in computing the index. What are the necessary and/or
sufficient conditions on the state transition and the reward
structure that make an RMBP indexable? When can Whittle's index be
obtained in closed-form? For which special classes of RMBP is
Whittle's index policy optimal? When numerical evaluation has to be
resorted to in studying its performance, are there easily computable
performance benchmarks?

In this paper, we attempt to address these questions for the class of RMBP
described above. As will be shown, this class of RMBP has an \emph{uncountable} state space,
making the problem highly nontrivial. The underlying two-state
Markov chain that governs the state transition of each arm, however,
brings rich structures into the problem, leading to positive and
surprising answers to the above questions. The wide range of
applications of this class of RMBP makes the results obtained in
this paper generally applicable.

Under both discounted and average reward criteria, we establish the
indexability of this class of RMBP. The basic technique of our
proof is to bound the total amount of time that an arm is made
passive under the optimal policy. The general approach of using the
total passive time in proving indexability was considered by Whittle
in~\cite{whittle} when showing that a classic MBP is always
indexable. Applying this approach to a nontrivial RMBP is, however,
much more involved, and our proof appears to be the first that
extends this approach to RMBP. We hope that this work contributes to
the set of possible techniques for establishing indexability of RMBP.

Based on the indexability, we show that
Whittle's index can be obtained in closed-form for both discounted
and average reward criteria. This result reduces the complexity of
implementing Whittle's index policy to simple evaluations of these
closed-form expressions. This result is particularly significant considering
the uncountable state space which would render numerical approaches
impractical. The monotonically increasing and piecewise concave (for
arms with $p_{11}\ge p_{01}$) or piecewise convex (for arms with
$p_{11}<p_{01}$) properties of Whittle's index are also established.
The monotonicity of Whittle's index leads to an interesting
equivalency with the myopic policy --- the simplest
nontrivial index policy --- when arms are stochastically identical.
This equivalency allows us to work on the myopic index,
which has a much simpler form, when establishing the structure and
optimality of Whittle's index policy for stochastically identical
arms.

As to the performance of Whittle's index policy for this class of
RMBP, we show that under certain conditions, Whittle's index policy
is optimal for stochastically identical arms. This result provides examples for the
optimality of Whittle's index policy in the finite regime. The
approximation factor of Whittle's index policy (the ratio of the
performance of Whittle's index policy to that of the optimal policy)
is analyzed when the optimality conditions do not hold.
Specifically, we show that when arms are stochastically
identical, the approximation factor of Whittle's index policy is at
least $\frac{K}{N}$ when $p_{11}\ge p_{01}$ and at least
$\max\{\frac{1}{2},\frac{K}{N}\}$ when $p_{11}<p_{01}$.

When arms are non-identical, we develop an efficient
algorithm to compute a performance upper bound based on Lagrangian relaxation.
We show that this algorithm runs in at most
$O(N(\log N)^2)$ time to compute the performance upper
bound within $\epsilon$-accuracy for any $\epsilon>0$. Furthermore,
when every channel satisfies $p_{11}<p_{01}$, we can compute the
upper bound without error with complexity $O(N^2\log N)$.


Another interesting finding is that when arms are stochastically
identical, Whittle's index policy has a semi-universal structure
that obviates the need to know the Markov transition probabilities.
The only required knowledge about the Markovian model is the order of
$p_{11}$ and $p_{01}$. This semi-universal structure reveals the
robustness of Whittle's index policy against model mismatch and
variations.

\subsection{Related Work}\label{subsec:Relatedwork}

Multichannel opportunistic access in the context of cognitive radio
systems has been studied in~\cite{Zhao&etal:07JSAC,Chen&etal:08IT}
where the problem is formulated as a Partially Observable Markov
Decision Process (POMDP) to take into account potential correlations
among channels. For stochastically identical and independent
channels and under the assumption of single-channel sensing ($K=1$),
the structure, optimality, and performance of the myopic policy have
been investigated in~\cite{Zhao&etal:08TWC}, where the
semi-universal structure of the myopic policy was established for
all $N$ and the optimality of the myopic policy proved for $N=2$. In
a recent work~\cite{Ahmad&etal}, the optimality of the myopic policy
was extended to $N>2$ under the condition of $p_{11}\ge p_{01}$. These results
have also been extended to cases with probing errors in \cite{Zhao&Krish:08TACsub}.
In this paper, we establish the equivalence relationship between the
myopic policy and Whittle's index policy when channels are
stochastically identical. This equivalency shows that
the results obtained in~\cite{Zhao&etal:08TWC,Ahmad&etal} for the
myopic policy are directly applicable to Whittle's index policy.
Furthermore, we extend these results to multichannel sensing
($K>1$).

Other examples of applying the general RMBP framework to communication
systems include the work by Lott and
Teneketzis~\cite{Lott&Teneketzis:00} and the work by Raghunathan
\etal \cite{Raghunathan&etal:08INFOCOM}.
In~\cite{Lott&Teneketzis:00}, the problem of multichannel allocation
for single-hop mobile networks with multiple service classes was
formulated as an RMBP, and sufficient conditions for the optimality
of a myopic-type index policy were established. In
\cite{Raghunathan&etal:08INFOCOM}, multicast scheduling in wireless
broadcast systems with strict deadlines was formulated as an RMBP
with a finite state space. The indexability was established and
Whittle's index was obtained in closed-form. Recent work by Kleinberg
gives interesting applications of bandit processes to Internet
search and web advertisement placement~\cite{Kleinberg:08}.

In the general context of RMBP, there is a rich literature on
indexability. See~\cite{Nino} for the linear programming
representation of conditions for indexability and \cite{Glazebrook1}
for examples of specific indexable restless bandit processes.
Constant-factor approximation algorithms for RMBP have also been
explored in the literature. For the same class of RMBP as considered
in this paper, Guha and Munagala~\cite{Guha} have developed a
constant-factor (1/68) approximation via LP relaxation under the
condition of $p_{11}>\frac{1}{2}>p_{01}$ for each channel.
In~\cite{Guha1}, Guha \etal have developed a factor-2 approximation
policy via LP relaxation for the so-called monotone bandit
processes.

In~\cite{Leny08ACC}, Le Ny \etal have considered the same class of RMBP motivated by the
applications of target tracking. They have
independently established the indexability and obtained the
closed-form expressions for Whittle's index under the discounted
reward criterion\footnote{A conference version of our result was
published in June, 2008, the same time as~\cite{Leny08ACC}.}.
Our approach to establishing indexability and obtaining Whittle's index is, however,
different from that used in \cite{Leny08ACC}, and the two approaches complement
each other. Indeed, the fact that two completely different applications
lead to the same class of RMBP lends support
for a detailed investigation of this particular
type of RMBP. We also include several results
that were not considered in \cite{Leny08ACC}.  In particular, we
consider both discounted and average reward criterion, develop
algorithms for and analyze the complexity of computing the optimal
performance under the Lagrangian relaxation, and establish the
semi-universal structure and the optimality of Whittle's index
policy for stochastically identical arms.

\subsection{Organization}\label{subsec:organization}

The rest of the paper is organized as follows. In
Sec.~\ref{sec:formulation}, the RMBP formulation is presented. In
Sec.~\ref{sec:indexpolicy}, we introduce the basic concepts of
indexability and Whittle's index. In
Sec.~\ref{sec:discount}, we address the total discounted reward
criterion, where we establish the indexability, obtain Whittle's
index in closed-form, and develop efficient algorithms for computing an upper bound
on the performance of the optimal policy. Simulation examples are
provided to illustrate the tightness of the upper bound and the
near-optimal performance of Whittle's index policy. In
Sec.~\ref{sec:whittleindexpolicyavg}, we consider the average reward
criterion and obtain results parallel to those obtained under the
discounted reward criterion. In Sec.~\ref{sec:identical}, we
consider the special case when channels are stochastically
identical. We show that Whittle's index policy is optimal under certain
conditions and has a simple and
robust structure. The approximation factor of Whittle's
index policy is also analyzed. Sec.~\ref{sec:conclusion} concludes this paper.

\section{Problem Statement and Restless Bandit
Formulation}\label{sec:formulation}

\subsection{Multi-channel Opportunistic Access}
Consider $N$ independent Gilbert-Elliot channels, each with
transmission rate $B_i (i=1,\cdots,N)$. Without loss of generality,
we normalize the maximum data rate:
$\max_{i\in\{1,2,\cdots,N\}}\{B_i\}=1$. The state of channel
$i$---``good''($1$) or ``bad''($0$)--- evolves from slot to slot as
a Markov chain with transition matrix ${\bf
P}_i=\{p_{j,k}^{(i)}\}_{j,k\in\{0,1\}}$ as shown in
Fig.~\ref{fig:MC}.

At the beginning of slot $t$, the user selects $K$ out of $N$
channels to sense. If the state $S_i(t)$ of the sensed channel $i$
is $1$, the user transmits and collects $B_i$ units of reward in
this channel. Otherwise, the user collects no reward in this
channel. Let $U(t)$ denote the set of $k$ channels chosen in slot
$t$. The reward obtained in slot $t$ is thus given by
\[R_{U(t)}(t)=\Sigma_{i\in U(t)}S_i(t)B_i.\] Our objective is to
maximize the expected long-run reward by designing a sensing policy
that sequentially selects $K$ channels to sense in each slot.

\subsection{Restless Multi-armed Bandit Formulation}\label{subsec:banditformula}

The channel states $[S_1(t),...,S_N(t)]\in\{0,1\}^N$ are not
directly observable before the sensing action is made. The user can,
however, infer the channel states from its decision and observation
history. It has been shown that a sufficient statistic for optimal
decision making is given by the conditional probability that each
channel is in state $1$ given all past decisions and observations
\cite{Smallwood&Sondik:71OR}. Referred to as the belief vector or
information state, this sufficient statistic is denoted by
$\Omega(t)\,\defeq\, [\omega_1(t),\cdots,\omega_N(t)]$, where
$\omega_i(t)$ is the conditional probability that $S_i(t)=1$. Given
the sensing action $U(t)$ and the observation in slot $t$, the
belief state in slot $t+1$ can be obtained recursively as follows:

{\begin{equation} \omega_i(t+1)=\left\{\begin{array}{ll}
p^{(i)}_{11}, &  i\in U(t), S_{i}(t)=1\\
p^{(i)}_{01}, &  i\in U(t), S_{i}(t)=0\\
\Tc(\omega_i(t)), & i\notin U(t)\\
\end{array}
\right., \label{eq:omega}
\end{equation}}
where
\[\Tc(\omega_i(t))\triangleq\omega_i(t)p_{11}^{(i)}+(1-\omega_i(t))p_{01}^{(i)}\]
denotes the operator for the one-step belief update for unobserved
channels.

If no information on the initial system state is available, the
$i$-th entry of the initial belief vector $\Omega(1)$ can be set to
the stationary distribution $\omega^{(i)}_o$ of the underlying
Markov chain:

\begin{equation}
\omega^{(i)}_o=\frac{p^{(i)}_{01}}{p^{(i)}_{01}+p^{(i)}_{10}}.
\label{eq:wo}
\end{equation}

It is now easy to see that we have an RMBP, where each channel is
considered as an arm and the state of arm $i$ in slot $t$ is the
belief state $\omega_i(t)$. The user chooses an action $U(t)$
consisting of $K$ arms to activate (sense) in each slot, while other
arms are made passive (unobserved). The states of both active and
passive arms change as given in~\eqref{eq:omega}.

A policy $\pi:\Omega(t)\rightarrow U(t)$ is a function that maps
from the belief vector $\Omega(t)$ to the action $U(t)$ in slot $t$.
Our objective is to design the optimal policy $\pi^*$ to maximize
the expected long-term reward.

There are two commonly used performance measures. One is the
expected total \emph{discounted} reward over the infinite horizon:
\begin{eqnarray}\label{eqn:measure1}
\mathbb{E}_{\pi}[\Sigma_{t=1}^{\infty}\beta^{t-1}R_{\pi(\Omega(t))}(t)|\Omega(1)],
\end{eqnarray}
where $0\le\beta<1$ is the discount factor and
$R_{\pi(\Omega(t))}(t)$ is the reward obtained in slot $t$ under
action $U(t)=\pi(\Omega(t))$ determined by the policy $\pi$. This
performance measure applies when rewards in the future are less
valuable, for example, in delay sensitive communication systems. It
also applies when the horizon length is a geometrically distributed
random variable with parameter $\beta$. For example, a communication
session may end at a random time, and the user aims to maximize the
number of packets delivered before the session ends.

The other performance measure is the expected \emph{average} reward
over the infinite horizon~\cite{arapostathis}:
\begin{eqnarray}\label{eqn:measure2}
\mathbb{E}_{\pi}[\lim_{T\rightarrow\infty}\frac{1}{T}\Sigma_{t=1}^{T}R_{\pi(\Omega(t))}(t)|\Omega(1)].
\end{eqnarray}
This is the common measure of throughput in the context of
communications.
%

For notation convenience, let $(\Omega(1),\{{\bf
P}_i\}_{i=1}^N,\{B_i\}_{i=1}^N,\beta)$ denote the RMBP
with the discounted reward criterion, and $(\Omega(1),\{{\bf
P}_i\}_{i=1}^N,\{B_i\}_{i=1}^N,1)$ the RMBP with the
average reward criterion.

\section{Indexability and Index Policies}\label{sec:indexpolicy}
In this section, we introduce the basic concepts of
indexability and Whittle's index policy.

\subsection{Index Policy}\label{subsec:indexpolicy}

An \emph{index policy} assigns an index for each state of each arm
to measure how rewarding it is to activate an arm at a particular
state. In each slot, the policy activates those $K$ arms whose
current states have the largest indices.

For a strongly decomposable index policy, the index of an arm only
depends on the characteristics (transition probabilities, reward
structure, \etc) of this arm. Arms are thus decoupled when computing
the index, reducing an $N-$dimensional problem to $N$ independent
$1-$dimensional problems.

A myopic policy is a simple example of strongly decomposable index
policies. This policy ignores the impact of the current action on
the future reward, focusing solely on maximizing the expected
immediate reward. The index is thus the expected immediate reward of
activating an arm at a particular state. For the problem at hand,
the myopic index of each state $\omega_i(t)$ of arm $i$ is simply
$\omega_i(t)B_i$. The myopic action $\hat{U}(t)$ under the belief
state $\Omega(t) = [\omega_1(t), \cdots, \omega_N(t)]$ is given by
\begin{equation}
\hat{U}(t)=\arg\max_{U(t)} \Sigma_{i\in U(t)} \, \omega_i(t)B_i.
\label{eq:a*}
\end{equation}


\subsection{Indexability and Whittle's Index Policy}\label{subsec:whittleindexpolicy}

To introduce indexability and Whittle's index, it suffices to
consider a single arm due to the strong decomposability of Whittle's
index. Consider a single-armed bandit process (a single channel)
with transition probabilities $\{p_{j,k}\}_{j,k\in{0,1}}$ and
bandwidth $B$ (here we drop the channel index for notation
simplicity). In each slot, the user chooses one of two possible
actions---$u\in\{0~(\mbox{passive}),~1~(\mbox{active})\}$---to make
the arm passive or active. An expected reward of $\omega B$ is
obtained when the arm is activated at belief state $\omega$, and the
belief state transits according to~\eqref{eq:omega}. The objective
is to decide whether to active the arm in each slot to maximize the
total discounted or average reward. The optimal policy is
essentially given by an optimal partition of the state space $[0,1]$
into a passive set $\{\omega:u^*(\omega)=0\}$ and an active set
$\{\omega:u^*(\omega)=1\}$, where $u^*(\omega)$ denotes the optimal
action under belief state $\omega$.

Whittle's index measures how attractive it is to activate an arm
based on the concept of {\it subsidy for passivity}. Specifically,
we construct a single-armed bandit process that is identical to the
above specified bandit process except that a constant subsidy $m$ is
obtained whenever the arm is made passive. Obviously, this subsidy
$m$ will change the optimal partition of the passive and active
sets, and states that remain in the active set under a larger
subsidy $m$ are more attractive to the user. The minimum subsidy $m$
that is needed to move a state from the active set to the passive
set under the optimal partition thus measures how attractive this
state is.

We now present the formal definition of indexability and Whittle's index. We
consider the discounted reward criterion. Their definitions under
the average reward criterion can be similarly obtained.

Denoted by $V_{\beta,m}(\omega)$, the value function represents the
maximum expected total discounted reward that can be accrued from a
single-armed bandit process with subsidy $m$ when the initial belief
state is $\omega$. Considering the two possible actions in the first
slot, we have
\begin{eqnarray}\label{eqn:value}
V_{\beta,m}(\omega)=\max\{V_{\beta,m}(\omega;u=0),~V_{\beta,m}(\omega;u=1)\},
\end{eqnarray}
where $V_{\beta,m}(\omega;u)$ denotes the expected total discounted
reward obtained by taking action $u$ in the first slot followed by the
optimal policy in future slots. Consider $V_{\beta,m}(\omega;u=0)$.
It is given by the sum of the subsidy $m$ obtained in the first slot
under the passive action and the total discounted future reward
$\beta V_{\beta,m}(\Tc(\omega))$ which is determined by the updated belief
state $\Tc(\omega)$ (see~\eqref{eq:omega}).
$V_{\beta,m}(\omega;u=1)$ can be similarly obtained, and we arrive
at the following dynamic programming.
\begin{eqnarray}\label{eqn:value_a}
V_{\beta,m}(\omega;u=0)&=&m+\beta
V_{\beta,m}(\Tc(\omega)),\\\label{eqn:value_b}
V_{\beta,m}(\omega;u=1)&=&\omega+\beta(\omega
V_{\beta,m}(p_{11})+(1-\omega)V_{\beta,m}(p_{01})).
\end{eqnarray}
The optimal action $u_m^*(\omega)$ for belief state $\omega$ under subsidy $m$ is
given by
\begin{eqnarray}
u_m^*(\omega)=\left\{\begin{array}{l}1,~~~\mbox{if}~V_{\beta,m}(\omega;u=1)>V_{\beta,m}(\omega;u=0)\\
0,~~~\mbox{otherwise}\end{array}\right..
\end{eqnarray}
The passive set $\Pc(m)$ under subsidy $m$ is given by
\begin{eqnarray}\label{eqn:passiveset}
\Pc(m)&=&\{\omega:~u^*_m(\omega)=0\}\\\label{eqn:passiveset1}
&=&\{\omega:~V_{\beta,m}(\omega;u=0)\ge V_{\beta,m}(\omega;u=1)\}
\end{eqnarray}

\begin{definition}
An arm is \emph{indexable} if the passive set $\Pc(m)$ of the
corresponding single-armed bandit process with subsidy $m$
monotonically increases from $\emptyset$ to the whole state space
$[0,1]$ as $m$ increases from $-\infty$ to $+\infty$. An RMBP is
indexable if every arm is indexable.
\end{definition}

Under the indexability condition, Whittle's index is defined as
follows.

\begin{definition}\label{def:whittleindex}
If an arm is indexable, its \emph{Whittle's index} $W(\omega)$ of
the state $\omega$ is the infimum subsidy $m$ such that it is
optimal to make the arm passive at $\omega$. Equivalently, Whittle's
index $W(\omega)$ is the infimum subsidy $m$ that makes the passive
and active actions equally rewarding.
\begin{eqnarray}
W(\omega)&=&\inf_{m}\{m:~u^*_m(\omega)=0\}\\
&=&\inf_{m}\{m:~V_{\beta,m}(\omega;u=0)=V_{\beta,m}(\omega;u=1)\}.
\end{eqnarray}
\end{definition}
%
%
%

In Fig.~\ref{fig:throughput}, we compare the performance
(throughput) of the myopic policy, Whittle's index policy, and the
optimal policy for the RMBP formulated in
Sec.~\ref{sec:formulation}. We observe that Whittle's index policy
achieves a near-optimal performance while the myopic policy suffers
from a significant performance loss.

\begin{figure}[h]
\centerline{
\begin{psfrags}
\scalefig{0.6}\epsfbox{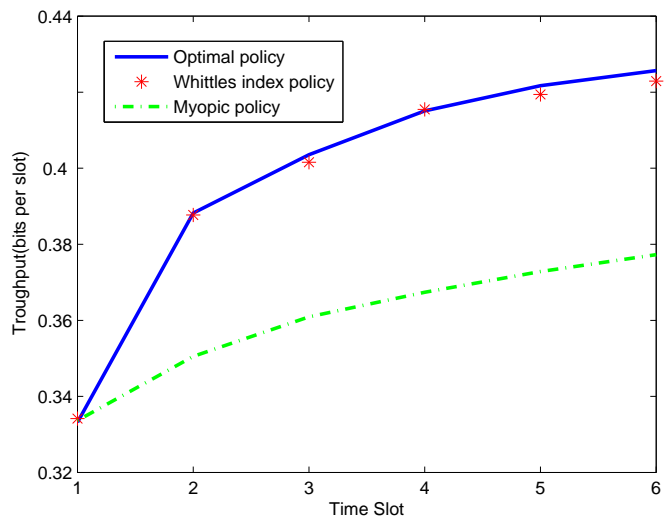}
\end{psfrags}}
\caption{The performance by Whittle's index policy
($K=1,~N=7,~\{p^{(i)}_{01}\}_{i=1}^7=\{0.8,0.6,0.4,0.9,0.8,0.6,
0.7\},~\{p^{(i)}_{11}\}_{i=1}^7=\{0.6,0.4,0.2,0.2,0.4,0.1
,0.3\},~\mbox{and}~B_i=\{0.4998,0.6668,1.0000,0.6296,0.5830
,0.8334,0.6668\}$).} \label{fig:throughput}
\end{figure}

\section{Whittle's index under discounted reward criterion}\label{sec:discount}

In this section, we focus on the discounted reward criterion. We
establish the indexability, obtain Whittle's index in closed-form,
and develop efficient algorithms for computing an upper bound of the
optimal performance to provide a benchmark for evaluating the
performance of Whittle's index policy.

\subsection{Properties of Belief State Transition}\label{subsec:modeldiscount}

To establish indexability and obtain Whittle's index, it suffices to
consider the single-armed bandit process with subsidy $m$. Again, we
drop the channel index from all notations and set $B=1$.

\begin{figure}[h]
\centerline{
\begin{psfrags}
\psfrag{a}[r]{ $\omega$} \psfrag{k}[c]{ $k$} \psfrag{p}[l]{
$\Tc^k(\omega)$}\psfrag{w}[l]{$\omega_o$}
\scalefig{0.8}\epsfbox{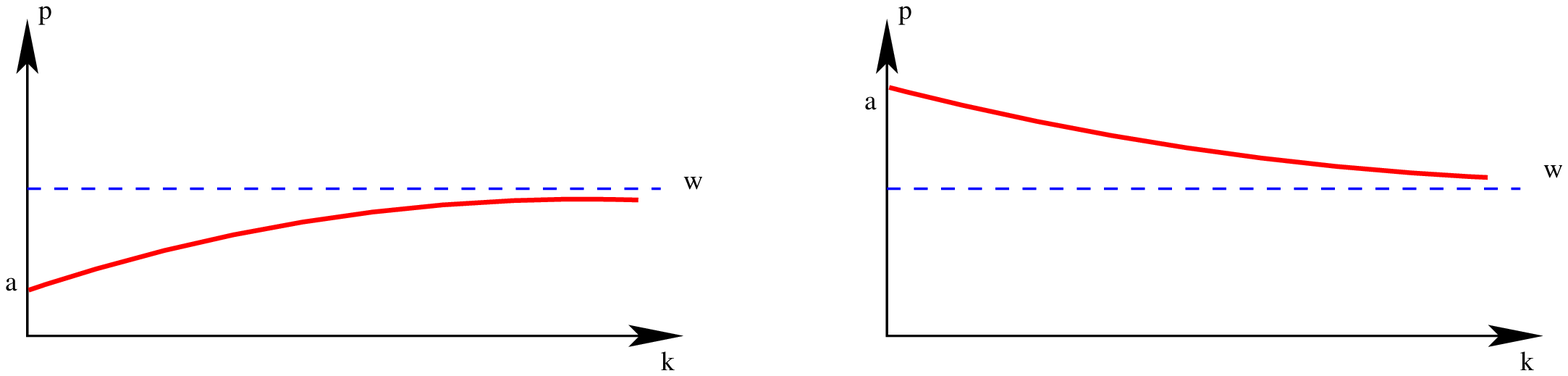}
\end{psfrags}}
\caption{The $k$-step belief update of an unobserved arm $(p_{11}\ge
p_{01}$).} \label{fig:ksteppositive}
\end{figure}

\begin{figure}[h]
\centerline{
\begin{psfrags}
\psfrag{w}[r]{ $\omega$}\psfrag{a}[r]{} \psfrag{b}[r]{}
\psfrag{k}[c]{ $k$} \psfrag{p}[l]{
$\Tc^k(\omega)$}\psfrag{wo}[l]{$\omega_o$}\psfrag{0}[c]{$0$}
\psfrag{1}[c]{$1$}\psfrag{2}[c]{$2$}\psfrag{3}[c]{$3$}
\scalefig{0.8}\epsfbox{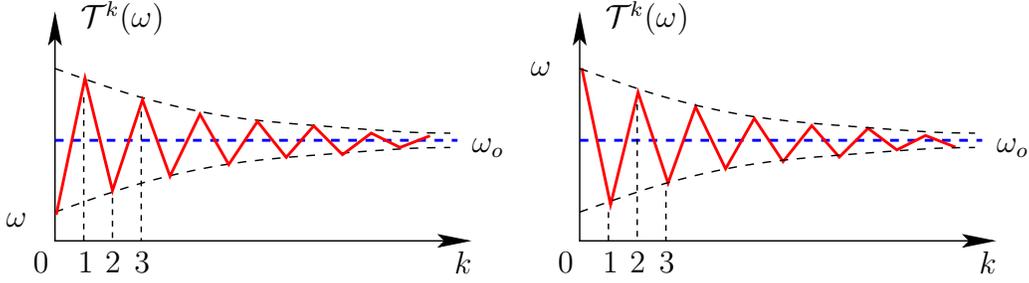}
\end{psfrags}}
\caption{The $k$-step belief update of an unobserved arm
($p_{11}<p_{01}$).} \label{fig:kstepnegative}
\end{figure}

The following lemma establishes properties of belief state
transition that reveal the basic structure of the RMBP considered in
this paper. We resort often to these properties when deriving the
main results.
\begin{lemma}\label{lemma:ksteptransition}
Let $\Tc^k(\omega(t))\defeq \Pr[S(t+k)=1|\omega(t)]~(k=0,1,2,\cdots)$
denote the $k-$step belief update of $\omega(t)$ when the arm is
unobserved for $k$ consecutive slots. We have
\begin{eqnarray}\label{eqn:beliefupdate}
&&\Tc^k(\omega)=\frac{p_{01}-(p_{11}-p_{01})^k(p_{01}-(1+p_{01}-p_{11})\omega)}{1+p_{01}-p_{11}},\\[0.5em]
&&\min\{p_{01},p_{11}\}\le\Tc^k(\omega)\le\max\{p_{01},p_{11}\},~~\forall~\omega\in[0,1],~\forall~k\ge1.
\end{eqnarray}
Furthermore, the convergence of $\Tc^k(\omega)$ to the stationary
distribution $\omega_o=\frac{p_{01}}{p_{01}+p_{10}}$ has the
following property.
\begin{itemize}
\item \emph{Case 1: Positively correlated channel\footnote{
It is easy to show that $p_{11} > p_{01}$ corresponds to the case
where the channel states in two consecutive slots are positively
correlated, \ie for any distribution of $S(t)$, we have
$\mbbE[(S(t)-\mbbE[S(t)])(S(t+1)-\mbbE[S(t+1)])]>0$, where $S(t)$ is
the state of the Gilbert-Elliot channel in slot $t$. Similar,
$p_{11} < p_{01}$ corresponds to the case where $S(t)$ and $S(t+1)$
are negatively correlated, and $p_{11}=p_{01}$ the case where $S(t)$
and $S(t+1)$ are independent.} ($p_{11}\ge p_{01}$).}\\
For any $\omega\in[0,1]$, $\Tc^k(\omega)$ monotonically converges to
$\omega_o$ as $k\rightarrow\infty$ (see Fig.~\ref{fig:ksteppositive}).\\[-1em]
\item \emph{Case 2: Negatively correlated channel ($p_{11}< p_{01}$).}\\
For any $\omega\in[0,1]$, $\Tc^{2k}(\omega)$ and
$\Tc^{2k+1}(\omega)$ converge, from opposite directions, to
$\omega_o$ as $k\rightarrow\infty$ (see
Fig.~\ref{fig:kstepnegative}).
\end{itemize}
\end{lemma}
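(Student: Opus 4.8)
The plan is to exploit the fact that the one-step update $\mathcal{T}(\omega)=p_{01}+(p_{11}-p_{01})\omega$ is an \emph{affine} map in $\omega$, so that the $k$-step update is obtained by solving a first-order linear recursion in closed form, and the remaining two claims then follow by inspection of that closed form.

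First I would establish the expression \eqref{eqn:beliefupdate} by induction on $k$, which is equivalent to summing the geometric series generated by the affine recursion. Writing $r\defeq p_{11}-p_{01}$, the recursion $\mathcal{T}^{k+1}(\omega)=p_{01}+r\,\mathcal{T}^{k}(\omega)$ unrolls to $\mathcal{T}^{k}(\omega)=r^{k}\omega+p_{01}\,\frac{1-r^{k}}{1-r}$; substituting $1-r=1+p_{01}-p_{11}$ and rearranging gives exactly the stated formula. The base case $k=0$ returns $\omega$, and the inductive step is a one-line substitution using the affinity of $\mathcal{T}$. I would also record here that the unique fixed point of $\mathcal{T}$ is precisely $\omega_o=\frac{p_{01}}{1+p_{01}-p_{11}}=\frac{p_{01}}{p_{01}+p_{10}}$, so the formula rewrites in the suggestive form $\mathcal{T}^{k}(\omega)=\omega_o+r^{k}(\omega-\omega_o)$, which is the workhorse for the two remaining parts.

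Second, for the uniform bounds I would argue by invariance rather than by manipulating the formula. Since $\mathcal{T}$ is affine with $\mathcal{T}(0)=p_{01}$ and $\mathcal{T}(1)=p_{11}$, the image of $[0,1]$ under $\mathcal{T}$ is exactly the interval $[\min\{p_{01},p_{11}\},\max\{p_{01},p_{11}\}]$, which is contained in $[0,1]$. Hence $\mathcal{T}$ maps this interval into itself, and by induction $\mathcal{T}^{k}(\omega)$ lands in $[\min\{p_{01},p_{11}\},\max\{p_{01},p_{11}\}]$ for every $k\ge 1$ and every $\omega\in[0,1]$. This yields the claimed bounds with no case analysis.

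Finally, the convergence claims can be read directly off the identity $\mathcal{T}^{k}(\omega)-\omega_o=r^{k}(\omega-\omega_o)$. Since $p_{01},p_{11}\in[0,1]$ we have $|r|<1$ in the nondegenerate (ergodic) case, so $r^{k}\to 0$ and $\mathcal{T}^{k}(\omega)\to\omega_o$. In Case~1 ($p_{11}\ge p_{01}$) we have $r\in[0,1)$, so $r^{k}$ is nonnegative and nonincreasing; thus $\mathcal{T}^{k}(\omega)-\omega_o$ retains the sign of $\omega-\omega_o$ and its magnitude shrinks monotonically, giving one-sided monotone convergence. In Case~2 ($p_{11}<p_{01}$) we have $r\in(-1,0)$, so $r^{k}$ alternates in sign: $r^{2k}>0$ while $r^{2k+1}<0$, whence $\mathcal{T}^{2k}(\omega)$ and $\mathcal{T}^{2k+1}(\omega)$ lie on opposite sides of $\omega_o$, with both $|r^{2k}|$ and $|r^{2k+1}|$ decreasing to zero---exactly the claimed two-sided convergence. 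The only point demanding care is the bookkeeping of the sign of $r^{k}(\omega-\omega_o)$, so I expect the alternating-sign analysis of Case~2 to be the most delicate step, though it remains entirely elementary.
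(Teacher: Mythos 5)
Your proposal is correct, and it reaches the closed form \eqref{eqn:beliefupdate} by a genuinely different route than the paper. The paper first decomposes the belief probabilistically, $\mathcal{T}^k(\omega)=\omega\,\mathcal{T}^k(1)+(1-\omega)\,\mathcal{T}^k(0)$, and then imports the $k$-step transition probabilities $\mathcal{T}^k(1)$ and $\mathcal{T}^k(0)$ from the standard eigen-decomposition of the $2\times 2$ transition matrix (citing Gallager), after which it simply asserts that the bounds and convergence claims ``follow directly'' from the formula. You instead solve the affine one-step recursion $\mathcal{T}^{k+1}(\omega)=p_{01}+r\,\mathcal{T}^k(\omega)$ with $r=p_{11}-p_{01}$ by induction, which is fully self-contained (no spectral theory, no appeal to an external reference) and, more importantly, yields the fixed-point form $\mathcal{T}^k(\omega)=\omega_o+r^k(\omega-\omega_o)$ that makes the remaining claims transparent: the invariance argument $\mathcal{T}([0,1])=[\min\{p_{01},p_{11}\},\max\{p_{01},p_{11}\}]$ gives the uniform bounds without computation, and the sign and magnitude of $r^k$ immediately deliver the monotone convergence for $r\in[0,1)$ and the alternating two-sided convergence for $r\in(-1,0)$ --- details the paper leaves implicit. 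What the paper's route buys is brevity given the textbook formula; what yours buys is an elementary, self-verifying derivation and explicit proofs of the parts the paper waves off. One point worth keeping in your write-up: your caveat that $|r|<1$ holds only in the nondegenerate case is apt, since for the periodic chain $p_{11}=0,\ p_{01}=1$ (where $r=-1$) the convergence claim in Case~2 actually fails; the paper makes this exclusion only implicitly (via a footnote elsewhere ruling out absorbing states), so flagging it explicitly is a small improvement, not a defect.
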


\vspace{0.5em}

\begin{proof}
$\Tc^k(\omega)=\omega\Tc^k(1)+(1-\omega)\Tc^k(0)$, where
$\Tc^k(1)=\Pr[S(t+k)=1|S(t)=1]$ is the $k-$step transition
probability from $1$ to $1$, and $\Tc^k(0)=\Pr[S(t+k)=1|S(t)=0]$ is
the $k-$step transition probability from $0$ to $1$. From the
eigen-decomposition of the transition matrix $\bf{P}$ (see
\cite{Gallager:95book}), we have
$\Tc^k(1)=\frac{p_{01}+(1-p_{11})(p_{11}-p_{01})^k}{1+p_{01}-p_{11}}$
and $\Tc^k(0)=\frac{p_{01}(1-(p_{11}-p_{01})^k)}{1+p_{01}-p_{11}}$,
which leads to~\eqref{eqn:beliefupdate}. Other properties follow
directly from~\eqref{eqn:beliefupdate}.
\end{proof}

\vspace{1em}

Next, we define an important quantity $L(\omega,\omega')$. Referred
to as the {\it crossing time}, $L(\omega,\omega')$ is the minimum
amount of time required for a passive arm to transit across
$\omega'$ starting from $\omega$.
\[L(\omega,\omega')\defeq\min\{k:~\Tc^k(\omega)>\omega'\}.\]
For a positively correlated arm, we have, from
Lemma~\ref{lemma:ksteptransition},
\begin{eqnarray}\label{eqn:Lpositive}
L(\omega,\omega')=\left\{\begin{array}{ll} 0,~~
&\mbox{if}~\omega>\omega'\\
\lfloor\log_{p_{11}-p_{01}}^{\frac{p_{01}-\omega'(1-p_{11}+p_{01})}{p_{01-\omega(1-p_{11}+p_{01})}}}\rfloor+1,~~
&\mbox{if}~\omega\le\omega'<\omega_o\\
\infty,~~&\mbox{if}~\omega\le\omega'~\mbox{and}~\omega'\ge\omega_o
\end{array}\right..
\end{eqnarray}
For a negatively correlated arm, we have\\
\begin{eqnarray}\label{eqn:Lnegative}
L(\omega,\omega')=\left\{\begin{array}{ll} 0,~~
&\mbox{if}~\omega>\omega'\\
1,~~ &\mbox{if}~\omega\le\omega'~\mbox{and}~\Tc(\omega)>\omega'\\
\infty,~~&\mbox{if}~\omega\le\omega'~\mbox{and}~\Tc(\omega)\le\omega'\end{array}\right..
\end{eqnarray}

\subsection{The Optimal Policy}\label{subsec:vpdiscount}

In this subsection, we show that the optimal policy for the
single-armed bandit process with subsidy $m$ is a threshold policy.
This threshold structure provides the key to establishing the
indexability and solving for Whittle's index policy in closed-form
as shown in Sec.~\ref{subsec:indexabilitydiscount}.

This threshold structure is obtained by examining the value
functions $V_{\beta,m}(\omega;u=0)$ and $V_{\beta,m}(\omega;u=1)$
given in~\eqref{eqn:value_a} and~\eqref{eqn:value_b}.
From~\eqref{eqn:value_b}, we observe that $V_{\beta,m}(\omega;u=1)$
is a linear function of $\omega$. Following the general result on
the convexity of the value function of a POMDP~\cite{Sondik}, we
conclude that $V_{\beta,m}(\omega;u=0)$ given in~\eqref{eqn:value_a}
is convex in $\omega$. These properties of $V_{\beta,m}(\omega;u=1)$
and $V_{\beta,m}(\omega;u=0)$ lead to the lemma below.

\begin{lemma}\label{lemma:thresholdPolicy} The optimal policy for the single-armed bandit process with subsidy $m$ is a
threshold policy, \ie there exists an $\omega_{\beta}^*(m)\in\mathbb{R}$
such that
\begin{eqnarray}
u_m^*(\omega)=\left\{\begin{array}{cc}1~~&\mbox{if}~\omega>\omega_{\beta}^*(m)\\
0~~&\mbox{if}~\omega\le\omega_{\beta}^*(m)\end{array}\right.,\nn
\end{eqnarray}
and $V_{\beta,m}(\omega_{\beta}^*(m);u=0)=V_{\beta,m}(\omega_{\beta}^*(m);u=1)$.
\end{lemma}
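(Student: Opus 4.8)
The plan is to exploit the two structural facts just recorded—that $V_{\beta,m}(\omega;u=1)$ is affine in $\omega$ (from \eqref{eqn:value_b}) while $V_{\beta,m}(\omega;u=0)$ is convex in $\omega$ (from \eqref{eqn:value_a} together with the convexity of the POMDP value function)—and to reduce the claim to a single-crossing property of their difference. Define $h(\omega)\defeq V_{\beta,m}(\omega;u=0)-V_{\beta,m}(\omega;u=1)$. As the difference of a convex and an affine function, $h$ is convex on $[0,1]$, and by \eqref{eqn:passiveset1} the passive set is exactly $\Pc(m)=\{\omega:~h(\omega)\ge0\}$. The first step is then the observation that, for convex $h$, the sublevel set $\{\omega:~h(\omega)<0\}$ is an open interval, so $\Pc(m)$ necessarily has the form $[0,a]\cup[b,1]$ for some $a\le b$ (with either piece possibly empty). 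To upgrade this to a genuine threshold policy I must rule out the upper piece $[b,1]$, which is precisely the content that convexity alone does not supply: it forces the active set to be an interval, but not to be an upper interval.

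The key step is a boundary computation. Using $\Tc(0)=p_{01}$ and $\Tc(1)=p_{11}$ in \eqref{eqn:value_a}--\eqref{eqn:value_b}, the discounted future-reward terms of the two value functions coincide at the endpoints (both equal $\beta V_{\beta,m}(p_{01})$ at $\omega=0$ and $\beta V_{\beta,m}(p_{11})$ at $\omega=1$), so a direct evaluation yields $h(0)=m$ and $h(1)=m-1$; convexity further gives the chord bound $h(\omega)\le m-\omega$ on $[0,1]$. Hence whenever $m<1$ we have $h(1)=m-1<0$, so $\omega=1$ belongs to the active set, which forces the upper piece $[b,1]$ of $\Pc(m)$ to be empty and leaves $\Pc(m)=[0,a]$. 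Setting $\omega_\beta^*(m)\defeq a=\inf\{\omega:~h(\omega)<0\}$ then produces the stated threshold form, and continuity of the value functions in $\omega$ gives $h(\omega_\beta^*(m))=0$, i.e. $V_{\beta,m}(\omega_\beta^*(m);u=0)=V_{\beta,m}(\omega_\beta^*(m);u=1)$. For the intermediate regime $0<m<1$ the sign change $h(0)=m>0$ and $h(1)=m-1<0$ places the threshold in the interior $(0,1)$, while for $m\le 0$ both endpoints are active and $\Pc(m)=\emptyset$, an all-active policy recovered by taking $\omega_\beta^*(m)<0$.

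It remains to treat $m\ge1$, the single case the endpoint argument does not settle: since $h(0)=m\ge0$ and $h(1)=m-1\ge0$, convexity of $h$ still permits an interior active interval, so the orientation argument above does not apply. Here I would instead verify directly that always-passive is optimal: substituting the guess $V_{\beta,m}\equiv m/(1-\beta)$ into \eqref{eqn:value_a}--\eqref{eqn:value_b} gives $V_{\beta,m}(\omega;u=1)-V_{\beta,m}(\omega;u=0)=\omega-m\le0$ for all $\omega\in[0,1]$, which confirms the Bellman optimality of passivity and shows $\Pc(m)=[0,1]$, so we take $\omega_\beta^*(m)=1$. I expect this split—the clean convexity-plus-endpoint argument for $m<1$ versus the separate verification of all-passivity for $m\ge1$—to be the main obstacle, precisely because convexity delivers that the active set is an interval but the value $h(1)=m-1$ (or, for $m\ge1$, the direct Bellman check) is what supplies the missing orientation that makes it an \emph{upper} interval.
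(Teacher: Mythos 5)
Your proposal is correct and follows essentially the same route as the paper: both exploit the linearity of $V_{\beta,m}(\omega;u=1)$ and the convexity of $V_{\beta,m}(\omega;u=0)$ in $\omega$, evaluate the difference at the endpoints (obtaining $h(0)=m$, $h(1)=m-1$), and split into the cases $m<0$, $0\le m<1$, $m\ge 1$, with the $m\ge1$ case handled by observing that passivity dominates since the active reward is bounded by $1$ (your explicit Bellman verification of $V\equiv m/(1-\beta)$ is just a more detailed version of the paper's remark). If anything, your sublevel-set argument is slightly more careful than the paper's bare assertion of a ``unique intersection,'' since you explicitly rule out the upper passive piece $[b,1]$ that convexity alone would permit, using the sign $h(1)=m-1<0$.
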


\begin{figure}[th]
\centerline{
\begin{psfrags}
\psfrag{0}[c]{$0$}\psfrag{1}[c]{$1$}
\psfrag{g}[c]{$V_{\beta,m}(\omega;u=1)$ }\psfrag{g1}[c]{}
\psfrag{h}[c]{ }\psfrag{h1}[c]{~$V_{\beta,m}(\omega;u=0)$}
\psfrag{wt}[c]{$\omega_{\beta}^*(m)$} \psfrag{p}[c]{ Passive Set}
\psfrag{a}[c]{ Active Set}
\psfrag{s}[c]{~$\omega<\omega_{\beta}^*(m)$} \psfrag{l}[c]{
$\omega>\omega_{\beta}^*(m)$}\psfrag{w}[c]{$\omega$}
\scalefig{0.6}\epsfbox{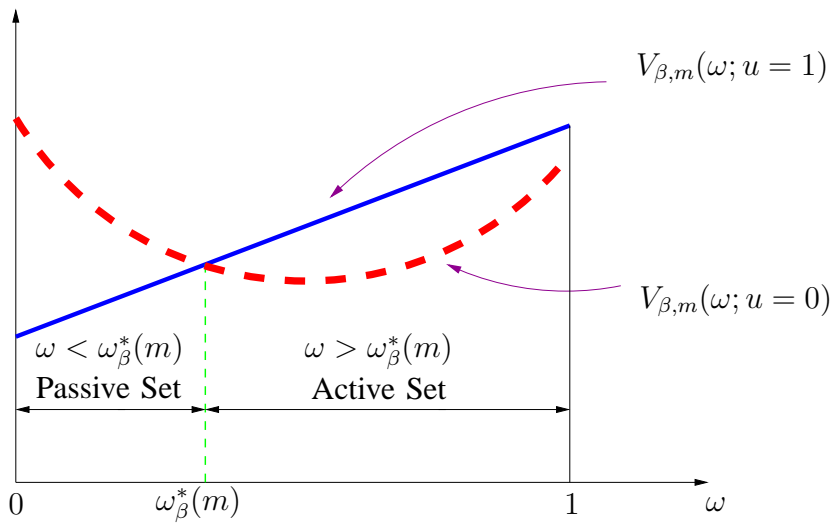}
\end{psfrags}}\caption{The optimality of a threshold policy ($0\le m<1).$}\label{fig:valuefunc}
\end{figure}

\begin{proof}
Consider first $0\le m<1$. We have the following inequality
regarding the end points of $V_{\beta,m}(0;u=1)$ and
$V_{\beta,m}(0;u=0)$ (see Fig.~\ref{fig:valuefunc}).
\begin{eqnarray}
V_{\beta,m}(0;u=1)&=&\beta V_{\beta,m}(p_{01})\le m+\beta
V_{\beta,m}(p_{01})=V_{\beta,m}(0;u=0),\\
V_{\beta,m}(1;u=1)&=&1+\beta V_{\beta,m}(p_{11})>m+\beta
V_{\beta,m}(p_{11})=V_{\beta,m}(1;u=0).
\end{eqnarray}
Since $V_{\beta,m}(\omega;u=1)$ is linear in $\omega$ and
$V_{\beta,m}(\omega;u=0)$ is convex in $\omega$,
$V_{\beta,m}(\omega;u=1)$ and $V_{\beta,m}(\omega;u=0)$ must have
one unique intersection at some point $\omega_{\beta}^*(m)$ as shown
in Fig.~\ref{fig:valuefunc}.

When $m\ge 1$, it is optimal to make the arm passive
all the time since the expected immediate reward $\omega$ by
activating the arm is uniformly upper bounded by $1$ (see
Fig.~\ref{fig:valuefunc2}). We can thus choose
$\omega_{\beta}^*(m)=c~\mbox{for any}~c>1$.

When $m<0$, we have (see Fig.~\ref{fig:valuefunc1})
\begin{eqnarray}
V_{\beta,m}(0;u=1)&=&\beta V_{\beta,m}(p_{01})> m+\beta
V_{\beta,m}(p_{01})=V_{\beta,m}(0;u=0),\\
V_{\beta,m}(1;u=1)&=&1+\beta V_{\beta,m}(p_{11})>m+\beta
V_{\beta,m}(p_{11})=V_{\beta,m}(0;u=0).
\end{eqnarray}
Based on the convexity of $V_{\beta,m}(\omega;u=0)$ in $\omega$, we
have $V_{\beta,m}(\omega;u=1)>V_{\beta,m}(\omega;u=0)$ for any
$\omega\in[0,1]$. It is thus optimal to always activate the arm, and
we can choose $\omega_{\beta}^*(m)=b~\mbox{for any}~b<0$.
Lemma~\ref{lemma:thresholdPolicy} thus follows. The expressions of
$V_{\beta,m}(0;u=1)$ and $V_{\beta,m}(0;u=0)$ given in
Fig.~\ref{fig:valuefunc2} and Fig.~\ref{fig:valuefunc1} are obtained
from the closed-form expression of the value function, which will be
shown in the next subsection.
\begin{figure}
\begin{minipage}{4in}
\hspace{-3em}\centerline{
\begin{psfrags}
\psfrag{0}[c]{$0$}\psfrag{1}[c]{$1$} \psfrag{w}[c]{$\omega$}
\psfrag{g}[c]{
$V_{\beta,m}(\omega;u=1)=\omega+\frac{m\beta}{1-\beta}$
}\psfrag{h}[c]{$V_{\beta,m}(\omega;u=0)=\frac{m}{1-\beta}$}
\scalefig{0.85}\epsfbox{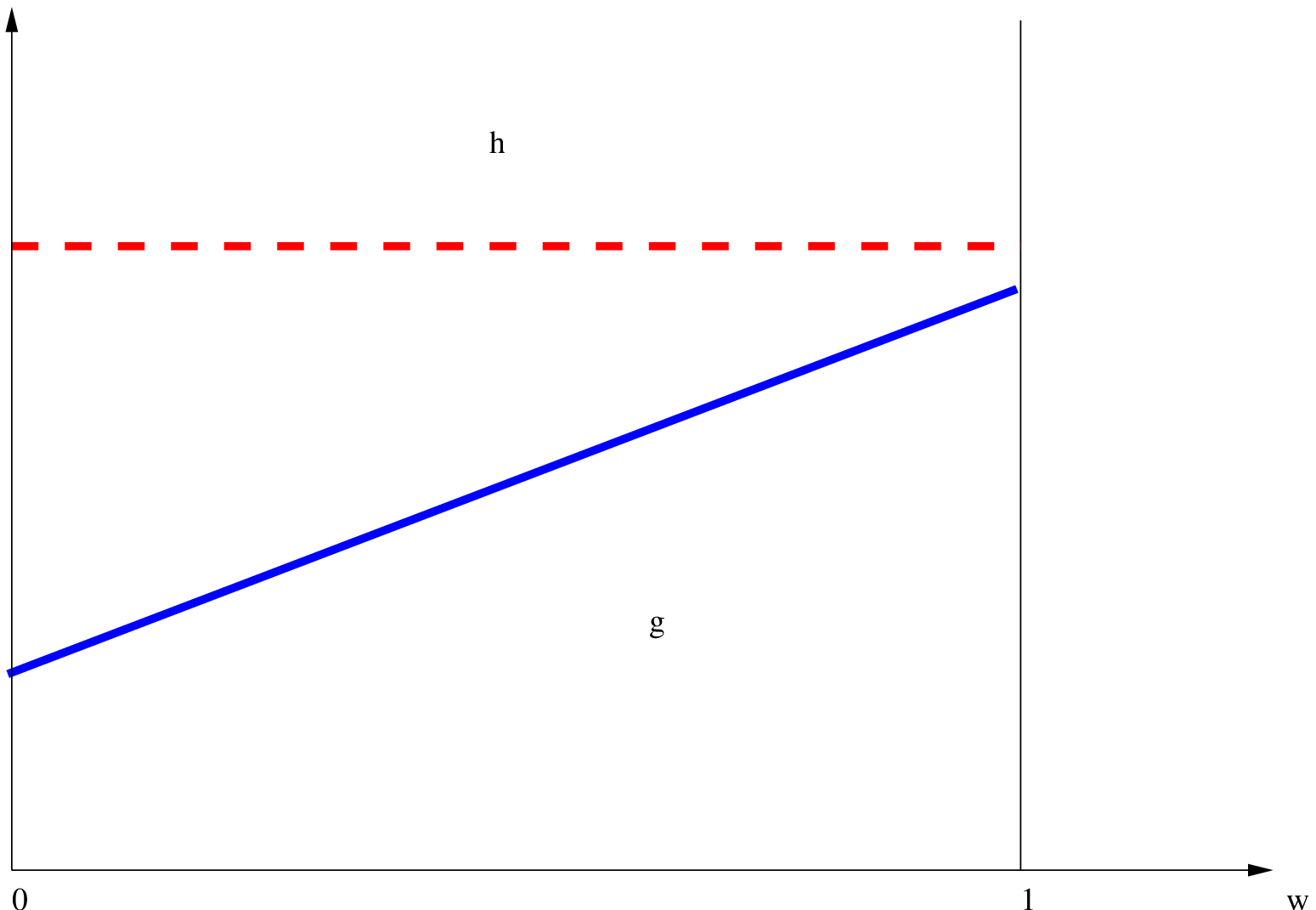}
\end{psfrags}}\caption{The optimality of a threshold policy ($m\ge1$).}\label{fig:valuefunc2}
\end{minipage}
\begin{minipage}{4in}
\hspace{-3em}\centerline{
\begin{psfrags}
\psfrag{0}[c]{$0$}\psfrag{1}[c]{$1$} \psfrag{w}[c]{$\omega$}
\psfrag{g}[c]{\small~~~~~~~~~~~~~~~
$V_{\beta,m}(\omega;u=1)=\frac{\omega(1-\beta)+p_{01}\beta}{(1-\beta)(1-\beta
p_{01}+\beta p_{01})}$
}\psfrag{h}[c]{\small~~~~~~~~~~~~~~~~~~~$V_{\beta,m}(\omega;u=0)=m+\beta\frac{\Tc^1(\omega)(1-\beta)+p_{01}\beta}{(1-\beta)(1-\beta
p_{01}+\beta p_{01})}$} \scalefig{0.85}\epsfbox{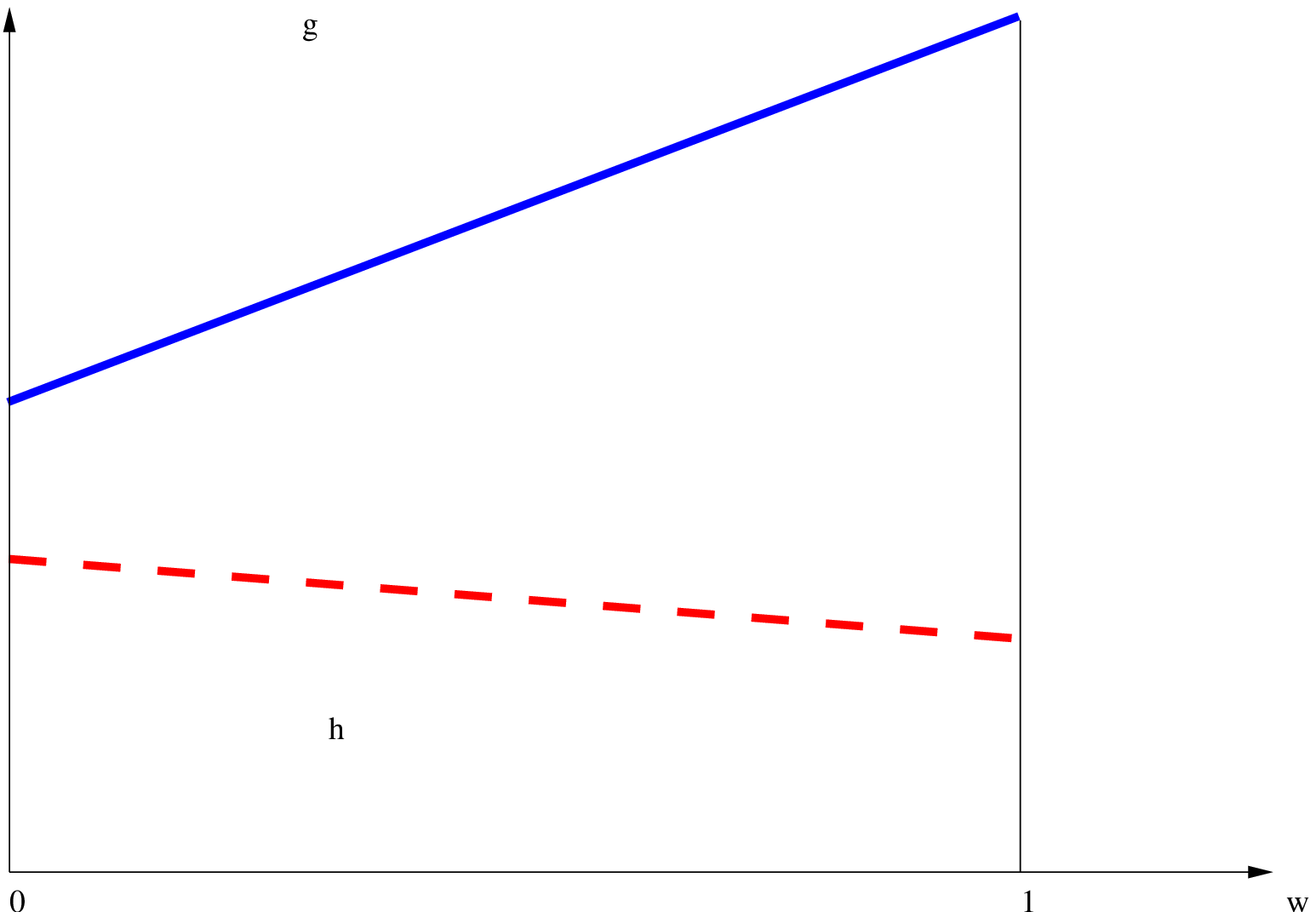}
\end{psfrags}}\caption{The optimality of a threshold policy ($m<0$.)}\label{fig:valuefunc1}
\end{minipage}
\end{figure}
\end{proof}

\subsection{Closed-form Expression of The Value
Function}\label{subsec:closeV}

In this subsection, we obtain closed-form expressions for the value
function $V_{\beta,m}(\omega)$. This result is fundamental to
calculating Whittle's index in closed-form and analyzing the
performance of Whittle's index policy.

Based on the threshold structure of the optimal policy, the value
function $V_{\beta,m}(\omega)$ can be expressed in terms of
$V_{\beta,m}(\Tc^k(\omega);u=1)$ for some
$t_0\in\Zc^+\cup\{\infty\}$, where
$t_0=L(\omega,\omega_{\beta}^*(m))+1$ is the index of the slot when
the belief $\omega$ transits across the threshold $\omega^*_{\beta}(m)$ for the
first time (recall that $L(\omega,\omega_{\beta}^*(m))$ is the crossing time
given in~\eqref{eqn:Lpositive} and~\eqref{eqn:Lnegative}).
Specifically, in the first $L(\omega,\omega_{\beta}^*(m))$ slots,
the subsidy $m$ is obtained in each slot. In slot
$t_0=L(\omega,\omega_{\beta}^*(m))+1$, the belief state transits
across the threshold $\omega_{\beta}^*(m)$ and the arm is activated.
The total reward thereafter is
$V_{\beta,m}(\Tc^{L(\omega,\omega_{\beta}^*(m))}(\omega);u=1)$. We
thus have, considering the discount factor,
\begin{eqnarray}
V_{\beta,m}(\omega)=
\frac{1-\beta^{L(\omega,\omega_{\beta}^*(m))}}{1-\beta}m+\beta^{L(\omega,\omega_{\beta}^*(m))}V_{\beta,m}
(\Tc^{L(\omega,\omega_{\beta}^*(m))}(\omega);u=1).\label{eq:vcloseform}
\end{eqnarray}
Since $V_{\beta,m}(\Tc^k(\omega);u=1)$ is a function of
$V_{\beta,m}(p_{01})$ and $V_{\beta,m}(p_{11})$ as shown
in~\eqref{eqn:value_a}, we only need to solve for
$V_{\beta,m}(p_{01})$ and $V_{\beta,m}(p_{11})$. Note that $p_{01}$
and $p_{11}$ are simply two specific values of $\omega$; both
$V_{\beta,m}(p_{01})$ and $V_{\beta,m}(p_{11})$ can be written as
functions of themselves through~\eqref{eq:vcloseform}. We can thus
solve for $V_{\beta,m}(p_{01})$ and $V_{\beta,m}(p_{11})$ as given
in Lemma~\ref{lemma:valuefunction}.
\begin{lemma}\label{lemma:valuefunction}
Let $\omega_{\beta}^*(m)$ denote the threshold of the optimal policy
for the single-armed bandit process with subsidy $m$. The value
functions $V_{\beta,m}(p_{01})$ and $V_{\beta,m}(p_{11})$ can be
obtained in closed-form as given below.
\begin{itemize}
\item {\em Case 1: Positively correlated channel ($p_{11}\ge p_{01}$)}
\end{itemize}
\begin{eqnarray}\label{eq:PcloseformVp01}
V_{\beta,m}(p_{01})&=&\left\{\begin{array}{ll}
\frac{p_{01}}{(1-\beta)(1-\beta p_{11}+\beta p_{01})}, & \mbox{if}~
\omega^*_{\beta}(m)<p_{01}\\
\frac{(1-\beta
p_{11})(1-\beta^{L(p_{01},\omega_{\beta}^*(m))})m+(1-\beta)\beta^{L(p_{01},\omega_{\beta}^*(m))}\Tc^{L(p_{01},\omega_{\beta}^*(m))}(p_{01})}
{(1-\beta
p_{11})(1-\beta)(1-\beta^{L(p_{01},\omega_{\beta}^*(m))+1})+(1-\beta)^2\beta^{L(p_{01},\omega_{\beta}^*(m))+1}\Tc^{L(p_{01},\omega_{\beta}^*(m))}(p_{01})},
& \mbox{if}~ p_{01}\le\omega^*_{\beta}(m)<\omega_o\\
\frac{m}{1-\beta}, & \mbox{if}~ \omega^*_{\beta}(m)\ge\omega_o
\end{array} \right.~~~\label{eq:PcloseformVp11}\\[0.5em]
V_{\beta,m}(p_{11})&=&\left\{\begin{array}{ll}
\frac{p_{11}+\beta(1-p_{11})V_{\beta,m}(p_{01})}{1-\beta p_{11}}, &
\mbox{if}~\omega^*_{\beta}(m)<p_{11}\\
\frac{m}{1-\beta}, & \mbox{if}~ \omega^*_{\beta}(m)\ge p_{11}
\end{array} \right..
\end{eqnarray}
Note that $V_{\beta,m}(p_{01})$ is given explicitly
in~\eqref{eq:PcloseformVp01} while $V_{\beta,m}(p_{11})$ is given in
terms of $V_{\beta,m}(p_{01})$ for the ease of presentation.
\begin{itemize}

\newpage

\item {\em Case 2: Negatively correlated channel ($p_{11}< p_{01}$)}
\end{itemize}
\begin{eqnarray}\label{eq:NcloseformVp11}
V_{\beta,m}(p_{11})&=&\left\{\begin{array}{ll}
\frac{p_{11}(1-\beta)+\beta p_{01}}{(1-\beta)(1-\beta p_{11}+\beta
p_{01})}, & \mbox{if}~\omega^*_{\beta}(m)<p_{11}\\
\frac{m(1-\beta(1-p_{01}))+\beta\Tc(p_{11})(1-\beta)+\beta^2p_{01}}
{1-\beta(1-p_{01})-\beta^2\Tc(p_{11})(1-\beta)-\beta^3p_{01}}, &
\mbox{if}~
p_{11}\le\omega^*_{\beta}(m)<\Tc(p_{11})\\
\frac{m}{1-\beta}, & \mbox{if}~ \omega^*_{\beta}(m)\ge\Tc(p_{11})
\end{array} \right..\label{eq:NcloseformVp01}\\[0.5em]
V_{\beta,m}(p_{01})&=&\left\{\begin{array}{ll} \frac{p_{01}+\beta
p_{01}V_{\beta,m}(p_{11})}{1-\beta(1-p_{01})}, & \mbox{if}~
\omega^*_{\beta}(m)<p_{01}\\
\frac{m}{1-\beta}, & \mbox{if}~ \omega^*_{\beta}(m)\ge p_{01}
\end{array} \right..
\end{eqnarray}
Note that $V_{\beta,m}(p_{11})$ is given explicitly
in~\eqref{eq:NcloseformVp11} while $V_{\beta,m}(p_{01})$ is given in
terms of $V_{\beta,m}(p_{11})$ for the ease of presentation.
\end{lemma}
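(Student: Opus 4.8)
The plan is to build directly on the threshold structure of Lemma~\ref{lemma:thresholdPolicy} and the closed-form expansion~\eqref{eq:vcloseform}. The crucial observation is that, by~\eqref{eqn:value_b}, the active value $V_{\beta,m}(\omega;u=1)$ is an affine function of $\omega$ whose coefficients involve only the two numbers $V_{\beta,m}(p_{01})$ and $V_{\beta,m}(p_{11})$; hence, once these two values are known, $V_{\beta,m}(\omega)$ is determined at every $\omega$ through~\eqref{eq:vcloseform}. Since $p_{01}$ and $p_{11}$ are themselves admissible initial beliefs, I would substitute $\omega=p_{01}$ and $\omega=p_{11}$ into~\eqref{eq:vcloseform}, using $V_{\beta,m}(\Tc^{k}(\cdot);u=1)=\Tc^k(\cdot)+\beta(\Tc^k(\cdot)V_{\beta,m}(p_{11})+(1-\Tc^k(\cdot))V_{\beta,m}(p_{01}))$. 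This turns~\eqref{eq:vcloseform} into a self-referential pair of linear equations in the unknowns $V_{\beta,m}(p_{01})$ and $V_{\beta,m}(p_{11})$, which can then be solved in closed form.

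To make the linear system explicit I would first pin down the two crossing times $L(p_{01},\omega_{\beta}^*(m))$ and $L(p_{11},\omega_{\beta}^*(m))$ appearing in the substitution. These are read off from the crossing-time formulas~\eqref{eqn:Lpositive} and~\eqref{eqn:Lnegative} together with the convergence behaviour of $\Tc^k$ established in Lemma~\ref{lemma:ksteptransition}. The case partition is dictated entirely by where the threshold $\omega_{\beta}^*(m)$ sits relative to the fixed belief values. In Case~1 ($p_{11}\ge p_{01}$) monotone convergence gives the ordering $p_{01}\le\omega_o\le p_{11}$, and the relevant breakpoints for the threshold are $p_{01}$, $\omega_o$, and $p_{11}$. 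In Case~2 ($p_{11}<p_{01}$) the belief oscillates, $\Tc$ is decreasing, and a short computation gives $p_{11}<\omega_o<\Tc(p_{11})<p_{01}$, so the breakpoints are $p_{11}$, $\Tc(p_{11})$, and $p_{01}$; here the crossing time only takes the values $0$, $1$, or $\infty$ by~\eqref{eqn:Lnegative}.

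With the crossing times fixed in each region, I would solve the reduced equations case by case. The two extreme regions are immediate: when $\omega_{\beta}^*(m)$ lies below every reachable belief the arm is always active, so both crossing times are $0$ and the pair reduces to the ``always-active'' system $V_{\beta,m}(p_{01})=p_{01}+\beta(p_{01}V_{\beta,m}(p_{11})+(1-p_{01})V_{\beta,m}(p_{01}))$ and its $p_{11}$-analogue, whose solution collapses (after the denominator factors as $(1-\beta)(1-\beta p_{11}+\beta p_{01})$) to the top lines of~\eqref{eq:PcloseformVp01} and~\eqref{eq:PcloseformVp11}; when $\omega_{\beta}^*(m)$ lies above every reachable belief the arm is always passive and the geometric subsidy sum gives $V_{\beta,m}=\frac{m}{1-\beta}$, the bottom lines. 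The intermediate regions are where the real work lies. For Case~1 with $p_{01}\le\omega_{\beta}^*(m)<\omega_o$ one has $L(p_{11},\omega_{\beta}^*(m))=0$ but $L(p_{01},\omega_{\beta}^*(m))$ finite and positive, so $V_{\beta,m}(p_{11})$ is given by its active expression while $V_{\beta,m}(p_{01})$ incurs a finite block of subsidies before re-entering the active phase; eliminating $V_{\beta,m}(p_{11})$ from the resulting coupled system yields the middle line of~\eqref{eq:PcloseformVp01}.

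I expect the main obstacle to be the intermediate region of Case~2, namely $p_{11}\le\omega_{\beta}^*(m)<\Tc(p_{11})$. Here the oscillation of $\Tc^k$ means the belief started at $p_{11}$ is passive for one slot, jumps to $\Tc(p_{11})$ above the threshold, becomes active, and upon observation returns to either $p_{11}$ (re-entering the passive phase) or $p_{01}$ (staying active, with its own self-loop); tracking this cycle produces a single equation in which $V_{\beta,m}(p_{11})$ appears on both sides with several powers of $\beta$ accumulated over the cycle. The careful bookkeeping of these discount factors, rather than any conceptual difficulty, is what produces the heavier middle line of~\eqref{eq:NcloseformVp11}, and is the step most prone to algebraic error. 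The remaining formulas---$V_{\beta,m}(p_{01})$ in Case~1 above $\omega_o$, and $V_{\beta,m}(p_{01})$, $V_{\beta,m}(p_{11})$ in Case~2---then follow by back-substituting the already-solved value into the corresponding single active equation, giving the stated expressions that write $V_{\beta,m}(p_{11})$ in terms of $V_{\beta,m}(p_{01})$ and vice versa.
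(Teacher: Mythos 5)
Your proposal is correct and follows essentially the same route as the paper's Appendix~A: substituting $\omega=p_{01}$ and $\omega=p_{11}$ into \eqref{eq:vcloseform} to obtain a self-referential pair of linear equations in $V_{\beta,m}(p_{01})$ and $V_{\beta,m}(p_{11})$, pinning down the crossing times $L(p_{01},\omega_{\beta}^*(m))$ and $L(p_{11},\omega_{\beta}^*(m))$ from \eqref{eqn:Lpositive}, \eqref{eqn:Lnegative}, and Lemma~\ref{lemma:ksteptransition} according to which region the threshold $\omega_{\beta}^*(m)$ lies in, and then solving the system region by region. Your case analysis (the orderings $p_{01}\le\omega_o\le p_{11}$ and $p_{11}<\omega_o<\Tc(p_{11})<p_{01}$, and the one-slot-passive cycle through $\Tc(p_{11})$ in Case~2) matches the paper's argument, merely spelled out in more detail than the paper's terse appendix.
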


\begin{proof}
The key to the closed-form expressions for $V_{\beta,m}(p_{01})$ and
$V_{\beta,m}(p_{11})$ is finding the first slot that the optimal
action is to activate the arm (\ie the belief state transits across
the threshold $\omega^*_{\beta}(m)$). This can be done by applying
the transition properties of the belief state given in
Lemma~\ref{lemma:ksteptransition}. See Appendix A for the complete
proof.
\end{proof}

\subsection{The Total Discounted Time of Being Passive}

In this subsection, we study the total discounted time that the
single-armed bandit process with subsidy $m$ is made passive. This
quantity plays the central role in our proof of indexability and in
the algorithms of computing an upper bound of the optimal
performance as shown in Sec.~\ref{subsec:indexabilitydiscount} and
Sec.~\ref{subsec:performancediscount}.

Let $D_{\beta,m}(\omega)$ denote the total discounted time that the
single-armed bandit process with subsidy $m$ is made passive under
the optimal policy when the initial belief state is $\omega$. It has
been shown by Whittle that $D_{\beta,m}(\omega)$ is the derivative
of the value function $V_{\beta,m}(\omega)$ with respect to
$m$~\cite{whittle}:
\[D_{\beta,m}(\omega)=\frac{d(V_{\beta,m}(\omega))}{dm}.\]
This result is intuitive: when the subsidy for passivity $m$
increases, the rate at which the total discounted reward
$V_{\beta,m}(\omega)$ increases is determined by how often the arm
is made passive.

Based on the threshold structure of the optimal policy, we can
obtain the following dynamic programming equation for
$D_{\beta,m}(\omega)$ similar to that for $V_{\beta,m}(\omega)$
given in~\eqref{eq:vcloseform}.
\begin{eqnarray}\label{eqn:closeformD}
D_{\beta,m}(\omega)=
\frac{1-\beta^{L(\omega,\omega^*_{\beta}(m))}}{1-\beta}+\beta^{L(\omega,\omega^*_{\beta}(m))+1}(\Tc^{L(\omega,\omega^*_{\beta}(m))}(\omega)
D_{\beta,m}(p_{11})+(1-\Tc^{L(\omega,\omega^*_{\beta}(m))}(\omega))D_{\beta,m}(p_{01})).
\label{eq:dcloseform}
\end{eqnarray}
Specifically, the first term in~\eqref{eqn:closeformD} is the total
discounted time of the first $L(\omega,\omega^*_{\beta}(m))$ slots
when the arm is made passive. In slot
$L(\omega,\omega^*_{\beta}(m))+1$, the arm is activated. With
probability $\Tc^{L(\omega,\omega^*_{\beta}(m))}(\omega)$, the
channel is in the good state in this slot, and the total future discounted
passive time is
$D_{\beta,m}(p_{11})$. With probability
$1-\Tc^{L(\omega,\omega^*_{\beta}(m))}(\omega)$, the channel is in
the bad state in this slot, and the total future discounted passive time
is $D_{\beta,m}(p_{01})$.

By considering $\omega=p_{01}$ and $\omega=p_{11}$, both
$D_{\beta,m}(p_{01})$ and $D_{\beta,m}(p_{11})$ can be written as
functions of themselves through~\eqref{eq:dcloseform}. We can thus
solve for $D_{\beta,m}(p_{01})$ and $D_{\beta,m}(p_{11})$ as given
in Lemma~\ref{lemma:closeformD}.
\begin{lemma}\label{lemma:closeformD}
Let $\omega_{\beta}^*(m)$ denote the threshold of the optimal policy
for the single-armed bandit process with subsidy $m$. The total
discounted passive times $D_{\beta,m}(p_{01})$ and $D_{\beta,m}(p_{11})$ are
given as follows.
\begin{itemize}
\item {\em Case 1: Positively correlated channel ($p_{11}\ge p_{01}$)}
\end{itemize}
\begin{eqnarray}
D_{\beta,m}(p_{01})&=&\left\{\begin{array}{ll} 0, & \mbox{if}~
\omega^*_{\beta}(m)<p_{01}\\
\frac{(1-\beta p_{11})(1-\beta^{L(p_{01},\omega_{\beta}^*(m))})}
{(1-\beta
p_{11})(1-\beta)(1-\beta^{L(p_{01},\omega_{\beta}^*(m))+1})+(1-\beta)^2\beta^{L(p_{01},\omega_{\beta}^*(m))+1}\Tc^{L(p_{01},\omega_{\beta}^*(m))}(p_{01})},
& \mbox{if}~ p_{01}\le\omega^*_{\beta}(m)<\omega_o\\
\frac{1}{1-\beta}, & \mbox{if}~\omega^*_{\beta}(m)\ge\omega_o
\end{array} \right..~~~\label{eq:PcloseformDp01}\\
D_{\beta,m}(p_{11})&=&\left\{\begin{array}{ll}
\frac{\beta(1-p_{11})D_{\beta,m}(p_{01})}{1-\beta p_{11}}, &
\mbox{if}~\omega^*_{\beta}(m)<p_{11}\\
\frac{1}{1-\beta}, &\mbox{if}~ \omega^*_{\beta}(m)\ge p_{11}
\end{array} \right.,\label{eq:PcloseformDp11}
\end{eqnarray}

\begin{itemize}
\item {\em Case 2: Negatively correlated channel ($p_{11}< p_{01}$)}
\end{itemize}
\begin{eqnarray}
D_{\beta,m}(p_{11})&=&\left\{\begin{array}{ll} 0, &
\mbox{if}~\omega^*_{\beta}(m)<p_{11}\\
\frac{1-\beta(1-p_{01})}
{1-\beta(1-p_{01})-\beta^2\Tc(p_{11})(1-\beta)-\beta^3p_{01}}, &
\mbox{if}~
p_{11}\le\omega^*_{\beta}(m)<\Tc(p_{11})\\
\frac{1}{1-\beta}, & \mbox{if}~ \omega^*_{\beta}(m)\ge\Tc(p_{11})
\end{array} \right..\label{eq:NcloseformDp11}\\
D_{\beta,m}(p_{01})&=&\left\{\begin{array}{ll} \frac{\beta
p_{01}D_{\beta,m}(p_{11})}{1-\beta(1-p_{01})}, &
\mbox{if}~ \omega^*_{\beta}(m)<p_{01}\\
\frac{1}{1-\beta}, & \mbox{if}~ \omega^*_{\beta}(m)\ge p_{01}
\end{array} \right.,\label{eq:NcloseformDp01}
\end{eqnarray}
\end{lemma}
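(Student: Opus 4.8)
The plan is to solve the fixed-point equation \eqref{eq:dcloseform} for the two unknowns $D_{\beta,m}(p_{01})$ and $D_{\beta,m}(p_{11})$, exactly paralleling the treatment of the value function in Lemma~\ref{lemma:valuefunction}. Since Lemma~\ref{lemma:thresholdPolicy} guarantees that the optimal policy is a threshold policy with threshold $\omega_{\beta}^*(m)$, the recursion \eqref{eq:dcloseform} is valid, and the behavior of $D_{\beta,m}$ at the two anchor states $p_{01}$ and $p_{11}$ is completely controlled by the two crossing times $L(p_{01},\omega_{\beta}^*(m))$ and $L(p_{11},\omega_{\beta}^*(m))$. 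First I would substitute $\omega=p_{01}$ and $\omega=p_{11}$ into \eqref{eq:dcloseform} to obtain a $2\times 2$ linear system in $D_{\beta,m}(p_{01})$ and $D_{\beta,m}(p_{11})$, whose coefficients become explicit once the two crossing times are known.

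The next step is to partition the range of the threshold into the regimes appearing in the statement and to read off the crossing times in each regime from \eqref{eqn:Lpositive} and \eqref{eqn:Lnegative}. For the positively correlated case ($p_{11}\ge p_{01}$), recall from Lemma~\ref{lemma:ksteptransition} that $p_{01}\le\omega_o\le p_{11}$ and that $\Tc^k(p_{01})$ increases monotonically to $\omega_o$: if $\omega_{\beta}^*(m)<p_{01}$ both anchors are immediately active and $L=0$ for each; if $p_{01}\le\omega_{\beta}^*(m)<\omega_o$ then $L(p_{11},\omega_{\beta}^*(m))=0$ while $L(p_{01},\omega_{\beta}^*(m))$ is finite and given by \eqref{eqn:Lpositive}; and if $\omega_{\beta}^*(m)\ge\omega_o$ then $\Tc^k(p_{01})$ never crosses the threshold, so $L(p_{01},\omega_{\beta}^*(m))=\infty$. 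For the negatively correlated case ($p_{11}<p_{01}$) I would use the ordering $p_{11}<\Tc(p_{11})<p_{01}$ together with the fact that \eqref{eqn:Lnegative} forces every finite crossing time to be $0$ or $1$; the three regimes $\omega_{\beta}^*(m)<p_{11}$, $p_{11}\le\omega_{\beta}^*(m)<\Tc(p_{11})$, and $\omega_{\beta}^*(m)\ge\Tc(p_{11})$ then give crossing times $0$, $1$, and $\infty$ for the $p_{11}$-anchor, respectively.

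With the crossing times pinned down, solving the linear system is routine, and in every regime it in fact decouples: the anchor that is immediately active (crossing time $0$) yields a relation expressing its passive time linearly in the other anchor's passive time, which upon back-substitution reduces the remaining equation to a single linear equation that I would solve directly, reproducing the middle rows of \eqref{eq:PcloseformDp01} and \eqref{eq:NcloseformDp11}. The boundary regimes with $L=\infty$ are handled uniformly by observing that $\beta^{L}\to 0$ and $\tfrac{1-\beta^{L}}{1-\beta}\to\tfrac{1}{1-\beta}$ in \eqref{eq:dcloseform}, so a trapped anchor is passive in every slot and contributes $\tfrac{1}{1-\beta}$; substituting this value then yields the remaining rows of \eqref{eq:PcloseformDp11} and \eqref{eq:NcloseformDp01}.

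I expect the main obstacle to be the bookkeeping of the case analysis rather than any single hard step: the delicate point is correctly identifying when a crossing time is infinite, that is, when the monotone (or oscillatory) convergence of $\Tc^k$ to $\omega_o$ established in Lemma~\ref{lemma:ksteptransition} prevents the belief from ever crossing $\omega_{\beta}^*(m)$; once each regime is matched to the correct pair of crossing times the algebra is elementary. As an independent check I would verify the results by differentiating the closed-form value functions of Lemma~\ref{lemma:valuefunction} with respect to the explicit $m$-dependence, since $D_{\beta,m}(\omega)=dV_{\beta,m}(\omega)/dm$; by the envelope theorem the variation of the threshold $\omega_{\beta}^*(m)$ contributes nothing, so this differentiation must return the same closed forms. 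The full details would be deferred to the appendix, as was done for Lemma~\ref{lemma:valuefunction}.
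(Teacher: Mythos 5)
Your proposal is correct and follows essentially the same route as the paper, whose proof of Lemma~\ref{lemma:closeformD} simply states that one solves for $D_{\beta,m}(p_{01})$ and $D_{\beta,m}(p_{11})$ exactly as for $V_{\beta,m}(p_{01})$ and $V_{\beta,m}(p_{11})$ in Lemma~\ref{lemma:valuefunction} (i.e., substituting the two anchor states into the recursion and resolving the crossing times $L(p_{01},\omega_{\beta}^*(m))$, $L(p_{11},\omega_{\beta}^*(m))$ regime by regime), and also notes the alternative of differentiating the closed-form value functions with respect to $m$. Both of your ingredients---the $2\times 2$ linear system with the case analysis on the threshold's location, and the derivative check $D_{\beta,m}(\omega)=d V_{\beta,m}(\omega)/dm$ (where, per the paper's convention, one takes the right derivative at kinks)---are precisely the two routes the paper indicates.
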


\begin{proof}
The process of solving for $D_{\beta,m}(p_{01})$ and
$D_{\beta,m}(p_{11})$ is similar to that of solving for
$V_{\beta,m}(p_{01})$ and $V_{\beta,m}(p_{11})$. Details are
omitted. $D_{\beta,m}(p_{01})$ and $D_{\beta,m}(p_{11})$ can also be
obtained by taking the derivatives of $V_{\beta,m}(p_{01})$ and
$V_{\beta,m}(p_{11})$ with respect to $m$.
\end{proof}

\vspace{0.5em}

We point out that $V_{\beta,m}(\omega)$ is not differentiable in $m$
at every point (\ie the left derivative may not equal to the right
derivative). Suppose that $V_{\beta,m}(\omega)$ is not
differentiable at $m_0$. Then it can be shown that the left
derivative at $m_0$ corresponds to the case when the threshold
$\omega^*_{\beta}(m_0)$ is included in the active set while the
right derivative corresponds to the case when
$\omega^*_{\beta}(m_0)$ is included in the passive set. In this
paper, we include the threshold in the passive set
(see~\eqref{eqn:passiveset1}), \ie we choose the passive action when
both actions are optimal. As a consequence, we consider the right
derivative of $V_{\beta,m}(\omega)$ when it is not differentiable.

The following lemma shows the piecewise constant (a stair function) and monotonically increasing
properties of $D_{\beta,m}(\omega)$ as a function of
$m$. These properties allow us to develop an efficient algorithm for computing
a performance upper bound as shown in Sec.~\ref{subsec:performancediscount}.

\begin{lemma}\label{lemma:passivetimeform}
The total discounted passive time $D_{\beta,m}(\omega)$ as a function of $m$ is monotonically increasing
and piecewise constant (with countable pieces for $p_{11}\ge p_{01}$ and finite pieces
for $p_{11}< p_{01}$). Equivalently, the value function
$V_{\beta,m}(\omega)$ is piecewise linear and convex in $m$.
\end{lemma}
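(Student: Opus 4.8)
The plan is to read off monotonicity and the piecewise-constant structure directly from the closed-form expressions for $D_{\beta,m}(p_{01})$ and $D_{\beta,m}(p_{11})$ in Lemma~\ref{lemma:closeformD}, then extend these properties to a general state $\omega$ via the dynamic programming relation~\eqref{eq:dcloseform}. The key observation driving everything is that the only way $m$ enters $D_{\beta,m}(\omega)$ is through the threshold $\omega_\beta^*(m)$, and in turn through the crossing times $L(\cdot,\omega_\beta^*(m))$, which are \emph{integer-valued} by~\eqref{eqn:Lpositive} and~\eqref{eqn:Lnegative}. Thus as $m$ varies continuously, $\omega_\beta^*(m)$ moves continuously (it is nondecreasing in $m$, which follows from indexability/the threshold structure of Lemma~\ref{lemma:thresholdPolicy}), but the crossing times jump only at the discrete values of $m$ where $\omega_\beta^*(m)$ crosses one of the points $\{\Tc^k(p_{01})\}$ or $\{\Tc^k(p_{11})\}$. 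Between consecutive such jumps, $L(p_{01},\omega_\beta^*(m))$ and $L(p_{11},\omega_\beta^*(m))$ are constant, so each branch of~\eqref{eq:PcloseformDp01}--\eqref{eq:NcloseformDp01} is a \emph{constant} in $m$. This immediately gives the piecewise-constant claim.

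Next I would count the pieces. In the positively correlated case ($p_{11}\ge p_{01}$), Lemma~\ref{lemma:ksteptransition} shows $\Tc^k(\omega)$ converges monotonically to $\omega_o$, so $\omega_\beta^*(m)$ sweeps through the countably many values $\{\Tc^k(p_{01})\}_{k\ge 0}$ accumulating at $\omega_o$ as $m$ ranges over $[0,1)$; this yields countably many pieces. In the negatively correlated case ($p_{11}<p_{01}$), the crossing time in~\eqref{eqn:Lnegative} takes only the values $0$, $1$, or $\infty$, so $D_{\beta,m}$ can assume only finitely many values and hence has finitely many pieces, matching the statement. I would verify these piece counts by inspecting the three-branch structure of the closed forms together with the behavior of $L$.

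For monotonicity in $m$, I would argue that as $m$ increases, the passive set $\Pc(m)$ grows (by the threshold structure, equivalently $\omega_\beta^*(m)$ is nondecreasing), so the arm is made passive at least as often; since $D_{\beta,m}(\omega)$ is the total discounted passive time under the optimal policy, a larger subsidy can only (weakly) increase it. More carefully, I would confirm that across each jump point the constant value of $D_{\beta,m}$ increases: as $\omega_\beta^*(m)$ moves up through a new threshold level, the relevant crossing time $L$ increases by one, and one checks from the explicit branch formulas that this strictly increases the passive time. The final equivalence with $V_{\beta,m}(\omega)$ being piecewise linear and convex in $m$ then follows from the Whittle identity $D_{\beta,m}(\omega)=dV_{\beta,m}(\omega)/dm$ stated earlier: a monotonically increasing piecewise-constant derivative is exactly the derivative of a convex, piecewise-linear function.

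The main obstacle I anticipate is not the monotonicity but the bookkeeping at the jump points and the handling of non-differentiability. Since $V_{\beta,m}$ need not be differentiable at the transition values $m_0$ (as the excerpt itself flags, the left and right derivatives differ depending on whether $\omega_\beta^*(m_0)$ is placed in the active or passive set), I must be careful to work with the right derivative throughout, consistent with the convention~\eqref{eqn:passiveset1} that assigns the threshold to the passive set. The delicate step is therefore showing that the right-continuous version of $D_{\beta,m}(\omega)$ is genuinely piecewise constant with jumps exactly at these countably (resp. finitely) many values of $m$, and that no accumulation of jump points occurs except at the endpoint $m$ corresponding to $\omega_\beta^*(m)\to\omega_o$; establishing rigorously that the jump locations are isolated (except at that accumulation point) is where the real care is needed.
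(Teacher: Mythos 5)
Your treatment of the piecewise-constant claim matches the paper's own proof, which likewise reads the property directly off the dynamic program \eqref{eq:dcloseform} and the closed forms of Lemma~\ref{lemma:closeformD}: the subsidy $m$ enters only through the integer-valued crossing times, which change only when $\omega^*_{\beta}(m)$ passes one of the points $\Tc^k(p_{01})$, $\Tc^k(p_{11})$, with countably many pieces accumulating near $\omega_o$ when $p_{11}\ge p_{01}$ and finitely many when $p_{11}<p_{01}$, exactly as you describe. The gap is in your monotonicity argument. You assert that $\omega^*_{\beta}(m)$ is nondecreasing in $m$ ``by the threshold structure of Lemma~\ref{lemma:thresholdPolicy}''; but Lemma~\ref{lemma:thresholdPolicy} only says that for each \emph{fixed} $m$ the optimal policy is a threshold policy. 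Monotonicity of the threshold in $m$ is precisely the indexability statement of Theorem~\ref{thm:indexability}, which in this paper is a nontrivial result proved \emph{after} Lemma~\ref{lemma:passivetimeform} (Appendix B), not a consequence of the threshold structure. So as written your argument presupposes the hardest result of the section. It is not strictly circular --- the proof of Theorem~\ref{thm:indexability} uses only Lemmas~\ref{lemma:thresholdPolicy} and~\ref{lemma:closeformD}, so you could reorder the results and invoke indexability explicitly --- but you must say so, and the step you defer (``one checks from the explicit branch formulas that this strictly increases the passive time,'' i.e., monotonicity of the closed forms in $L$) is itself a nontrivial algebraic verification that your proposal does not carry out.

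The paper sidesteps all of this with a short policy-interchange argument that needs neither the threshold structure nor indexability: taking the optimal policy $\pi$ for subsidy $\alpha m_1+(1-\alpha)m_2$ and applying it unchanged under subsidies $m_1$ and $m_2$ yields the suboptimality inequalities \eqref{eqn:ddd1} and \eqref{eqn:ddd2}, whose combination gives convexity of $V_{\beta,m}(\omega)$ in $m$; since $D_{\beta,m}(\omega)$ is the (right) derivative of this convex function, it is automatically nondecreasing. Equivalently, $V_{\beta,m}(\omega)$ is a supremum over policies of functions affine in $m$ (the slope being that policy's discounted passive time), hence convex. This argument is policy-generic --- it is Whittle's observation for arbitrary restless bandits --- and delivers monotonicity in one stroke with no dependence on the model-specific closed forms. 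You correctly handle the non-differentiability convention (right derivatives, threshold in the passive set) and the accumulation of jump points at $W_{\beta}(\omega_o)$; but to repair the proof you should either adopt the convexity argument, or explicitly invoke Theorem~\ref{thm:indexability} (noting its proof does not rely on this lemma) and supply the missing monotonicity-in-$L$ computation.
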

\begin{proof}
The piecewise constant property follows directly from~\eqref{eq:dcloseform}
and Lemma~\ref{lemma:closeformD} and is illustrated in Fig.~\ref{fig:ND} and
Fig.~\ref{fig:PD}.
The monotonicity of $D_{\beta,m}(\omega)$ applies to a general restless bandit and has been stated
without proof by Whittle~\cite{whittle}. We provide a proof below
for completeness.

We show that $V_{\beta,m}(\omega)$ is convex in $m$, \ie for any
$0\le\alpha\le1, m_1,m_2\in\Rc$,
\begin{eqnarray}\label{eq:convexV}
\alpha V_{\beta,m_1}(\omega)+(1-\alpha)V_{\beta,m_2}(\omega)\ge
V_{\beta,\alpha m_1+(1-\alpha)m_2}(\omega).
\end{eqnarray}
Consider the optimal policy $\pi$ under subsidy $\alpha
m_1+(1-\alpha)m_2$. If we apply $\pi$ to the system with subsidy
$m_1$, the total discounted reward will be
\[V_{\beta,\alpha m_1+(1-\alpha)m_2}(\omega)+D_{\beta,\alpha
m_1+(1-\alpha)m_2}(\omega)((1-\alpha)(m_1-m_2)).\] Since $\pi$ may
not be the optimal policy under subsidy $m_1$, we have
\begin{eqnarray}
V_{\beta,m_1}(\omega)\ge V_{\beta,\alpha
m_1+(1-\alpha)m_2}(\omega)+D_{\beta,\alpha
m_1+(1-\alpha)m_2}(\omega)((1-\alpha)(m_1-m_2)).\label{eqn:ddd1}
\end{eqnarray}
Similarly,
\begin{eqnarray}
V_{\beta,m_2}(\omega)\ge V_{\beta,\alpha
m_1+(1-\alpha)m_2}(\omega)+D_{\beta,\alpha
m_1+(1-\alpha)m_2}(\omega)(\alpha(m_2-m_1)).\label{eqn:ddd2}
\end{eqnarray}
\eqref{eq:convexV} thus follows from \eqref{eqn:ddd1} and
\eqref{eqn:ddd2}.
\end{proof}

\subsection{Indexability and Whittle's Index Policy}\label{subsec:indexabilitydiscount}

With the threshold structure of the optimal policy and the
closed-form expressions of the value function and discounted passive
time, we are ready to establish the indexability and solve for Whittle's index.

\begin{theorem}\label{thm:indexability} The restless multi-armed bandit
process $(\Omega(1), \{{\bf P}_i\}_{i=1}^N, \{B_i\}_{i=1}^N, \beta)$
is indexable.
\end{theorem}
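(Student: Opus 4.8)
The plan is to verify the indexability condition from the Definition directly: I must show that the passive set $\Pc(m)$ grows monotonically from $\emptyset$ to $[0,1]$ as $m$ sweeps from $-\infty$ to $+\infty$. The work in the preceding subsections has already set this up almost completely through the threshold structure of Lemma~\ref{lemma:thresholdPolicy}. Since the optimal policy is a threshold policy with threshold $\omega_\beta^*(m)$, and the passive set (with the threshold assigned to passivity, per~\eqref{eqn:passiveset1}) is exactly $\Pc(m)=\{\omega:\omega\le\omega_\beta^*(m)\}=[0,\omega_\beta^*(m)]$, the passive set is an interval determined entirely by $\omega_\beta^*(m)$. Therefore establishing indexability reduces to the single clean claim that $\omega_\beta^*(m)$ is monotonically nondecreasing in $m$, with $\omega_\beta^*(m)<0$ (so $\Pc(m)=\emptyset$) for $m<0$ and $\omega_\beta^*(m)>1$ (so $\Pc(m)=[0,1]$) for $m\ge 1$. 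The two boundary behaviors are already dispatched inside the proof of Lemma~\ref{lemma:thresholdPolicy}: for $m<0$ it is optimal to always activate, and for $m\ge 1$ it is optimal to always stay passive.

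The heart of the proof is thus the monotonicity of the threshold $\omega_\beta^*(m)$ in $m$. Here I would invoke the machinery of the total discounted passive time $D_{\beta,m}(\omega)$. The key fact is Lemma~\ref{lemma:passivetimeform}: $D_{\beta,m}(\omega)=\frac{d}{dm}V_{\beta,m}(\omega)$ is monotonically increasing in $m$ for every fixed $\omega$. I would argue by monotonicity of the passive time in $m$ combined with the intuition made precise in the introduction --- the total amount of time an arm is made passive under the optimal policy is bounded and grows with the subsidy. Concretely, once a state $\omega$ has entered the passive set at some subsidy level $m_0$ (i.e. $V_{\beta,m_0}(\omega;u=0)\ge V_{\beta,m_0}(\omega;u=1)$), I want to show it never leaves the passive set for any $m>m_0$. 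The clean way to see this is to compare the two action-value functions: increasing $m$ raises $V_{\beta,m}(\omega;u=0)$ at rate equal to the discounted passive time $D_{\beta,m}(\omega)$ following the passive action, which is at least as large as the corresponding derivative along the active branch, because taking the passive action first guarantees at least one more unit of discounted passive time than taking the active action. Hence $V_{\beta,m}(\omega;u=0)-V_{\beta,m}(\omega;u=1)$ is nondecreasing in $m$, so the passive set only grows --- equivalently, $\omega_\beta^*(m)$ is nondecreasing.

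Putting these together: as $m$ increases from $-\infty$, the threshold $\omega_\beta^*(m)$ increases monotonically from below $0$ to above $1$, so $\Pc(m)=[0,\omega_\beta^*(m)]\cap[0,1]$ expands monotonically from $\emptyset$ to $[0,1]$, which is precisely indexability of a single arm; since the argument applies to each arm $i$ separately using its own transition matrix ${\bf P}_i$ and bandwidth $B_i$ (restoring the channel index and the scaling $B$ dropped at the start of Sec.~\ref{subsec:modeldiscount}), the whole RMBP $(\Omega(1),\{{\bf P}_i\}_{i=1}^N,\{B_i\}_{i=1}^N,\beta)$ is indexable. I expect the main obstacle to be the rigorous monotonicity step, specifically handling the non-differentiability of $V_{\beta,m}(\omega)$ in $m$ noted after Lemma~\ref{lemma:closeformD}; the comparison argument must be phrased in terms of one-sided (right) derivatives, or equivalently via the convexity of $V_{\beta,m}(\omega)$ in $m$, so that the slope comparison between the passive and active branches is valid at the kink points where the threshold crosses $p_{01}$, $p_{11}$, $\Tc(p_{11})$, or $\omega_o$.
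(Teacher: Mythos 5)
Your reduction matches the paper's skeleton exactly: Appendix~B likewise uses Lemma~\ref{lemma:thresholdPolicy} to identify $\Pc(m)=[0,\omega_\beta^*(m)]$, dispatches $m<0$ and $m\ge 1$ as boundary cases, and (via a contradiction argument, their Lemma~\ref{lemma:condindex}) reduces indexability to the slope comparison at the threshold, which after substituting \eqref{eqn:value_a}--\eqref{eqn:value_b} is precisely
\begin{equation*}
\beta\bigl(\omega_\beta^*(m)\,D_{\beta,m}(p_{11})+(1-\omega_\beta^*(m))\,D_{\beta,m}(p_{01})\bigr)<1+\beta\,D_{\beta,m}\bigl(\Tc^1(\omega_\beta^*(m))\bigr).
\end{equation*}
Up to this point you are correct. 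The genuine gap is your justification of this inequality: you assert it holds ``because taking the passive action first guarantees at least one more unit of discounted passive time than taking the active action.'' That coupling intuition is false in the critical region $\min\{p_{01},p_{11}\}\le\omega_\beta^*(m)<\omega_o$. Take a positively correlated arm there: $\Tc^1(\omega^*)>\omega^*$, so the post-passive belief is \emph{active} and $D_{\beta,m}(\Tc^1(\omega^*))=\beta\bigl(\Tc^1(\omega^*)D_{\beta,m}(p_{11})+(1-\Tc^1(\omega^*))D_{\beta,m}(p_{01})\bigr)$; since $D_{\beta,m}(p_{11})=\frac{\beta(1-p_{11})}{1-\beta p_{11}}D_{\beta,m}(p_{01})<D_{\beta,m}(p_{01})$ and $\Tc^1(\omega^*)>\omega^*$, the continuation passive time after the passive action is \emph{strictly smaller} than the mixture $\omega^* D_{\beta,m}(p_{11})+(1-\omega^*)D_{\beta,m}(p_{01})$ after the active action. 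The inequality is therefore a race between the immediate $+1$ and a continuation deficit whose terms scale like $\frac{1}{1-\beta}$, so no argument with uniform slack can close it.

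This is exactly why the paper's Appendix~B is long: in Region~3 it substitutes the closed forms of $D_{\beta,m}(p_{01})$, $D_{\beta,m}(p_{11})$ from Lemma~\ref{lemma:closeformD}, replaces $\omega_\beta^*(m)$ by its extremal value ($\omega_o$ in the negatively correlated case, $\Tc^{L-1}(p_{01})$ with $L=L(p_{01},\omega_\beta^*(m))$ in the positively correlated case), and proves a resulting polynomial inequality $f(\beta)>0$ for $0\le\beta<1$ --- with $f(1)=0$ in both cases, confirming the margin vanishes as $\beta\to 1$ and the inequality is tight. Your appeal to Lemma~\ref{lemma:passivetimeform} does not substitute for this: monotonicity of $D_{\beta,m}(\omega)$ in $m$ (equivalently convexity of $V_{\beta,m}(\omega)$ in $m$) compares a single state's passive time across subsidies, whereas what is needed is a comparison of the passive- and active-branch derivatives at a \emph{fixed} $m$, which is a statement about the belief dynamics and requires the case analysis and closed forms. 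Your remarks on right derivatives at kinks are apt but peripheral; the missing content is the Region~3 computation, which is the heart of the theorem.
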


\begin{proof}
The proof is based on Lemma~\ref{lemma:thresholdPolicy} and Lemma~\ref{lemma:closeformD}. Details are
given in Appendix B.
\end{proof}

\vspace{1em}

\begin{theorem}\label{thm:whittleindex}

Whittle's index $W_{\beta}(\omega)\in\mathbb{R}$ for arm $i$ of the
RMBP $(\Omega(1), \{{\bf P}_i\}_{i=1}^N, \{B_i\}_{i=1}^N, \beta)$ is
given as follows.
\begin{itemize}
\item \emph{Case 1: Positively correlated channel ($p^{(i)}_{11}\ge p^{(i)}_{01}$).}
\end{itemize}
\begin{eqnarray}
W_{\beta}(\omega)=\left\{\begin{array}{ll} \omega B_i, &
\mbox{if}~\omega\le p^{(i)}_{01}~\mbox{or}
~\omega\ge p^{(i)}_{11}\\
\frac{\omega}{1-\beta p^{(i)}_{11}+\beta\omega}B_i, &
\mbox{if}~\omega^{(i)}_o\le
\omega<p^{(i)}_{11}\\
\frac{\omega-\beta\Tc^1(\omega)+C_2(1-\beta)(\beta(1-\beta
p^{(i)}_{11})-\beta(\omega-\beta\Tc^1(\omega)))} {1-\beta
p^{(i)}_{11}-C_1(\beta(1-\beta
p^{(i)}_{11})-\beta(\omega-\beta\Tc^1(\omega)))}B_i,
 & \mbox{if}~p^{(i)}_{01}<
\omega<\omega^{(i)}_o
\end{array}\right.,\label{eqn:whittleindexcase1}
\end{eqnarray}
where $C_1=\frac{(1-\beta
p^{(i)}_{11})(1-\beta^{L(p^{(i)}_{01},\omega)})} {(1-\beta
p^{(i)}_{11})(1-\beta^{L(p^{(i)}_{01},\omega)+1})+(1-\beta)\beta^{L(p^{(i)}_{01},\omega)+1}
\Tc^{L(p^{(i)}_{01},\omega)}(p^{(i)}_{01})}$,\\[0.5em]
$~~~~~~~~~C_2=\frac{\beta^{L(p^{(i)}_{01},\omega)}\Tc^{L(p^{(i)}_{01},\omega)}(p^{(i)}_{01})}
{(1-\beta
p^{(i)}_{11})(1-\beta^{L(p^{(i)}_{01},\omega)+1})+(1-\beta)\beta^{L(p^{(i)}_{01},\omega)+1}\Tc^{L(p^{(i)}_{01},\omega)}(p^{(i)}_{01})}$.\\
\begin{itemize}

\item \emph{Case 2: Negatively correlated channel ($p^{(i)}_{11}<p^{(i)}_{01}$).}
\end{itemize}
\begin{eqnarray}
W_{\beta}(\omega)=\left\{\begin{array}{ll} \omega B_i, &
\mbox{if}~\omega\le p^{(i)}_{11}~\mbox{or}
~\omega\ge p^{(i)}_{01}\\
\frac{\beta
p^{(i)}_{01}+\omega(1-\beta)}{1+\beta(p^{(i)}_{01}-\omega)}B_i, &
\mbox{if}~\Tc^1(p^{(i)}_{11})\le
\omega<p^{(i)}_{01}\\
\frac{(1-\beta+\beta C_4)(\beta
p^{(i)}_{01}+\omega(1-\beta))}{1-\beta(1-p^{(i)}_{01})-C_3(\beta^2p^{(i)}_{01}+\beta\omega-\beta^2\omega)}B_i,
 & \mbox{if}~\omega^{(i)}_o\le\omega<\Tc^1(p^{(i)}_{11})\\[1em]
\frac{(1-\beta)(\beta
p^{(i)}_{01}+\omega-\beta\Tc^1(\omega))-C_4\beta(\beta\Tc^1(\omega)-\beta
p^{(i)}_{01}-\omega)}
{1-\beta(1-p^{(i)}_{01})+C_3\beta(\beta\Tc^1(\omega)-\beta
p^{(i)}_{01}-\omega)}B_i,
 & \mbox{if}~p^{(i)}_{11}<\omega<\omega^{(i)}_o\\
\end{array}\right.,\label{eqn:whittleindexcase2}
\end{eqnarray}
where $C_3=\frac{1-\beta(1-p^{(i)}_{01})}{1+(1+\beta)\beta
p^{(i)}_{01}-\beta^2\Tc^1(p^{(i)}_{11})}$ and
$C_4=\frac{\beta\Tc^1(p^{(i)}_{11})(1-\beta)+\beta^2p^{(i)}_{01}}{1+(1+\beta)\beta
p^{(i)}_{01}-\beta^2\Tc^1(p^{(i)}_{11})}$.
\end{theorem}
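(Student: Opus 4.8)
The plan is to exploit the threshold structure from Lemma~\ref{lemma:thresholdPolicy} together with the closed-form value functions of Lemma~\ref{lemma:valuefunction}. The starting observation is that, by Definition~\ref{def:whittleindex} and the threshold characterization, Whittle's index $W_\beta(\omega)$ is exactly the subsidy $m$ that makes $\omega$ the threshold, i.e. $\omega=\omega^*_\beta(m)$. Hence $W_\beta(\omega)$ is the solution $m$ of the indifference equation $V_{\beta,m}(\omega;u=0)=V_{\beta,m}(\omega;u=1)$ obtained by treating $\omega$ as the threshold of the optimal policy. Since both the accrued reward and the subsidy scale linearly with the bandwidth $B$, it suffices to solve this equation for $B=1$ and then multiply the resulting $m$ by $B_i$, which accounts for the factor $B_i$ appearing in every branch of~\eqref{eqn:whittleindexcase1} and~\eqref{eqn:whittleindexcase2}.

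First I would write the two action values from~\eqref{eqn:value_a} and~\eqref{eqn:value_b},
\[
V_{\beta,m}(\omega;u=0)=m+\beta V_{\beta,m}(\Tc(\omega)),\qquad
V_{\beta,m}(\omega;u=1)=\omega+\beta\bigl(\omega V_{\beta,m}(p_{11})+(1-\omega)V_{\beta,m}(p_{01})\bigr),
\]
and then partition the state line according to where the threshold $\omega$ falls, using exactly the breakpoints $p_{01},\omega_o,p_{11}$ (Case~1) and $p_{11},\omega_o,\Tc(p_{11}),p_{01}$ (Case~2) that organize Lemma~\ref{lemma:valuefunction}. In each region, Lemma~\ref{lemma:ksteptransition} fixes the direction of the passive drift, which tells me whether $\Tc(\omega)$ sits above or below the threshold $\omega$, and hence whether $V_{\beta,m}(\Tc(\omega))$ is evaluated through its active branch $V_{\beta,m}(\Tc(\omega);u=1)$ or through its passive recursion; simultaneously, Lemma~\ref{lemma:valuefunction} selects the correct branch for $V_{\beta,m}(p_{01})$ and $V_{\beta,m}(p_{11})$.

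The structural point that makes the computation tractable is that, in each region, the two sides are \emph{linear} in $m$: the subsidy enters only through the explicit $m$ in $V_{\beta,m}(\omega;u=0)$ and through whichever of $V_{\beta,m}(p_{01}),V_{\beta,m}(p_{11})$ currently lies in its $m$-dependent (threshold-above) branch. Solving this scalar linear equation for $m$ then yields each closed-form expression. In the extreme regions---threshold below $\min\{p_{01},p_{11}\}$ or above $\max\{p_{01},p_{11}\}$---the probe-outcome beliefs $p_{01},p_{11}$ are either both active (their value functions are $m$-free) or both frozen passive (both equal $m/(1-\beta)$), and the equation collapses to $m=\omega$, giving the myopic index $\omega B_i$. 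In the remaining intermediate regions the coefficients $C_1,C_2$ (Case~1) and $C_3,C_4$ (Case~2) are read directly off the $m$-dependent branches of Lemma~\ref{lemma:valuefunction}, where the crossing time $L(p_{01},\omega)$ and the associated powers $\beta^{L}$ appear.

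I expect the main obstacle to be the bookkeeping in the intermediate regions, $p_{01}<\omega<\omega_o$ for Case~1 and the two bands $p_{11}<\omega<\omega_o$ and $\omega_o\le\omega<\Tc(p_{11})$ for Case~2. There the recursive, $m$-dependent branch of the value function is active, the crossing time $L(p_{01},\omega)$ enters explicitly, and one must verify self-consistency---that with threshold $\omega$ the passive trajectory $\Tc^k(p_{01})$ indeed first crosses $\omega$ at slot $L(p_{01},\omega)+1$---before substituting. Case~2 is the more delicate of the two because the passive drift oscillates (Lemma~\ref{lemma:ksteptransition}, Case~2): $\Tc(\omega)$ can overshoot $\omega_o$, so I must separately track the active/passive status of $\Tc(\omega)$ and $\Tc^2(\omega)$ when evaluating $V_{\beta,m}(\Tc(\omega))$. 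As a final sanity check I would verify that the expressions in~\eqref{eqn:whittleindexcase1} and~\eqref{eqn:whittleindexcase2} agree at the region boundaries, confirming continuity of $W_\beta(\cdot)$.
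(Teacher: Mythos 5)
Your proposal is correct and takes essentially the same route as the paper: the paper's proof likewise characterizes $W_\beta(\omega)$ as the subsidy $m$ solving the indifference equation $V_{\beta,m}(\omega;u=1)=V_{\beta,m}(\omega;u=0)$ and solves it region by region by substituting the closed-form branches of $V_{\beta,m}(p_{01})$, $V_{\beta,m}(p_{11})$, and $V_{\beta,m}(\Tc^1(\omega))$ from Lemma~\ref{lemma:valuefunction}. Your additional bookkeeping---the linearity of the equation in $m$, the $B_i$ scaling, and tracking the active/passive status of $\Tc^1(\omega)$ in each band---is precisely the computation the paper leaves implicit.
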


\vspace{1em}

\begin{proof} By the definition of Whittle's index, for a given belief state $\omega$,
its Whittle's index is the subsidy $m$ that is the solution to the
following equation of $m$:
\begin{eqnarray}\underbrace{\omega+\beta(\omega
V_{\beta,m}(p_{11})+(1-\omega)V_{\beta,m}(p_{01}))}_{V_{\beta,m}(\omega;u=1)}=\underbrace{m+\beta
V_{\beta,m}(\Tc^1(\omega))}_{V_{\beta,m}(\omega;u=0)}.\label{eqn:whittleindex}
\end{eqnarray}
From the closed-form expressions for $V_{\beta,m}(p_{11})$,
$V_{\beta,m}(p_{01})$ and $V_{\beta,m}(\Tc^1(\omega))$ given in
Lemma~\ref{lemma:valuefunction}, we can
solve~\eqref{eqn:whittleindex} and obtain Whittle's index.
\end{proof}

\vspace{1em}

The following properties of Whittle's index $W_{\beta}(\omega)$
follow from Theorem~\ref{thm:indexability} and
Theorem~\ref{thm:whittleindex}.

\begin{corollary}\label{cor:Wproperty}
{\it Properties of Whittle's Index}
\begin{itemize}
\item $W_\beta(\omega)$ is a monotonically increasing function of
$\omega$. As a consequence, Whittle's index policy is equivalent
to the myopic policy for stochastically identical arms.
\item For a positively correlated channel ($p_{11}\ge p_{01}$),
$W_\beta(\omega)$ is piecewise concave with countable pieces. More
specifically, $W_{\beta}(\omega)$ is linear in $[0,p_{01}]$ and
$[p_{11},1]$, concave in $[\omega_o,p_{11})$, and piecewise concave
with countable pieces in $(p_{01},\omega_0)$ (see
Fig.~\ref{fig:plotindex}-left).
\item For a negatively correlated channel ($p_{11}<p_{01}$),
$W_\beta(\omega)$ is piecewise convex with finite pieces. More
specifically, $W_{\beta}(\omega)$ is linear in $[0,p_{11}]$ and
$[p_{01},1]$, concave in $(p_{11},\omega_o)$,
$[\omega_o,\Tc(p_{11}))$, and $[\Tc(p_{11}),p_{01})$ (see
Fig.~\ref{fig:plotindex}-right).
\end{itemize}
\end{corollary}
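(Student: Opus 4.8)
The plan is to obtain all three properties from the threshold characterization of Lemma~\ref{lemma:thresholdPolicy}, the indexability of Theorem~\ref{thm:indexability}, and the closed-form index of Theorem~\ref{thm:whittleindex}. I would begin with monotonicity. By Lemma~\ref{lemma:thresholdPolicy}, under subsidy $m$ the passive action is optimal at $\omega$ exactly when $\omega\le\omega^*_\beta(m)$, so that
\[
W_\beta(\omega)=\inf\{m:\ \omega\le\omega^*_\beta(m)\}.
\]
Theorem~\ref{thm:indexability} states that the passive set $\Pc(m)=\{\omega:\ \omega\le\omega^*_\beta(m)\}$ grows monotonically with $m$, i.e.\ $\omega^*_\beta(m)$ is nondecreasing in $m$. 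Hence $\omega_1<\omega_2$ forces the inclusion $\{m:\ \omega_2\le\omega^*_\beta(m)\}\subseteq\{m:\ \omega_1\le\omega^*_\beta(m)\}$, and taking infima gives $W_\beta(\omega_1)\le W_\beta(\omega_2)$. For stochastically identical arms every arm carries the same monotone $W_\beta(\cdot)$ and common $B$, so ordering arms by $W_\beta(\omega_i)$ coincides with ordering by $\omega_i$, and hence with the myopic ordering $\omega_iB$ of~\eqref{eq:a*}; selecting the $K$ largest therefore produces identical actions, establishing the equivalence.

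For the curvature claims I would argue branch by branch from the closed forms~\eqref{eqn:whittleindexcase1} and~\eqref{eqn:whittleindexcase2}. The structural observation that makes this tractable is that $\Tc^1(\omega)=p_{01}+(p_{11}-p_{01})\omega$ is affine in $\omega$, while the crossing time $L(p_{01},\omega)$---and therefore the constants $C_1,C_2$ (resp.\ $C_3,C_4$)---is \emph{constant} on each maximal subinterval on which it does not jump. Consequently, on every such subinterval each branch of $W_\beta$ is a linear-fractional function $\frac{a\omega+b}{c\omega+d}$ of $\omega$ (scaled by $B_i>0$), whose second derivative
\[
W_\beta''(\omega)=\frac{-2c\,(ad-bc)}{(c\omega+d)^3}\,B_i
\]
keeps a fixed sign wherever $c\omega+d\neq0$. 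Thus each branch is affine when $ad-bc=0$ and otherwise strictly concave or convex, with the curvature dictated by the sign of $c$. The linear branches $W_\beta(\omega)=\omega B_i$ on $[0,p_{01}]\cup[p_{11},1]$ (Case~1) and $[0,p_{11}]\cup[p_{01},1]$ (Case~2) are read off directly.

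It then remains to fix the sign on the nontrivial branches. In Case~1 the denominator slope is $c=\beta>0$ and one checks $ad-bc>0$, so $W_\beta''\le0$ (concave); for instance on $[\omega_o,p_{11})$ one has $W_\beta=\frac{\omega}{1-\beta p_{11}+\beta\omega}B_i$ with $c=\beta$, $ad-bc=1-\beta p_{11}>0$, and positive denominator, hence concavity. In Case~2 the corresponding denominators carry a \emph{negative} slope $c=-\beta<0$ (because $\omega$ enters with a minus sign), which flips $W_\beta''\ge0$ and yields convexity; e.g.\ on $[\Tc(p_{11}),p_{01})$, $W_\beta=\frac{\beta p_{01}+\omega(1-\beta)}{1+\beta(p_{01}-\omega)}B_i$ has $ad-bc=1-\beta(1-p_{01})>0$ and $c=-\beta$, so $W_\beta''\ge0$. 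I would carry out the same sign check for the $C_1,C_2$ branch of Case~1 and the $C_3,C_4$ branches of Case~2, using the bounds $\min\{p_{01},p_{11}\}\le\Tc^k(\cdot)\le\max\{p_{01},p_{11}\}$ of Lemma~\ref{lemma:ksteptransition} to keep the denominators from vanishing. Counting the pieces then finishes the proof: since $\Tc^k(p_{01})\uparrow\omega_o$ (Lemma~\ref{lemma:ksteptransition}), $L(p_{01},\omega)\to\infty$ as $\omega\uparrow\omega_o$, so $(p_{01},\omega_o)$ splits into countably many concave pieces in Case~1, whereas~\eqref{eqn:Lnegative} shows the relevant crossing time takes only the values $0,1,\infty$, so the three listed subintervals already exhaust $(p_{11},p_{01})$ in Case~2, giving finitely many (convex) pieces.

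The step I expect to be the main obstacle is this sign verification on the $C_1,C_2$ and $C_3,C_4$ branches. There $a,b,c,d$ are rational expressions in $\beta,p_{01},p_{11}$ and in the crossing-time-dependent constants, so confirming both the sign of $ad-bc$ and the fixed sign of $c\omega+d$ over the whole subinterval is an algebra-heavy computation; everything else reduces to bookkeeping, and it is precisely here that the belief-update bounds of Lemma~\ref{lemma:ksteptransition} are needed to control the denominators.
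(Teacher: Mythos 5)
Your proposal is correct and takes essentially the same route as the paper, which states the corollary as a direct consequence of Theorem~\ref{thm:indexability} (monotone threshold $\omega^*_\beta(m)$, hence monotone index) and the closed forms of Theorem~\ref{thm:whittleindex}; your linear-fractional observation with $W_\beta''(\omega)=\frac{-2c(ad-bc)}{(c\omega+d)^3}B_i$ on each constant-$L$ subinterval simply makes explicit the branch-by-branch inspection the paper leaves implicit, and your piece-counting via $L(p_{01},\omega)\to\infty$ as $\omega\uparrow\omega_o$ versus $L\in\{0,1,\infty\}$ matches the countable/finite distinction. One remark: your conclusion that the Case-2 branches are \emph{convex} (e.g., on $[\Tc(p_{11}),p_{01})$ one has $c=-\beta$, $ad-bc=1-\beta(1-p_{01})>0$, positive denominator, hence $W_\beta''>0$) is the correct one and agrees with the headline claim ``piecewise convex with finite pieces''; the word ``concave'' in the corollary's third bullet is evidently a typo.
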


\vspace{1em}


The equivalency between Whittle's index policy and the myopic policy
is particularly important. It allows us to establish the structure and
optimality of Whittle's index policy by examining the myopic policy which
has a very simple index form.

Note that the region of $[p_{01},\omega_o)$ for a positively
correlated arm is the most complex. The infinite but countable
concave pieces of Whittle's index in this region correspond to each
possible value of the crossing time
$L(p_{01},\omega)\in\{1,2,\cdots\}$. This region presents most of
the difficulties in analyzing the performance of Whittle's index
policy as shown in the next subsection.

\begin{figure}[h]
\begin{minipage}{4in}
\hspace{-3em}\centerline{
\begin{psfrags}
\scalefig{1}\epsfbox{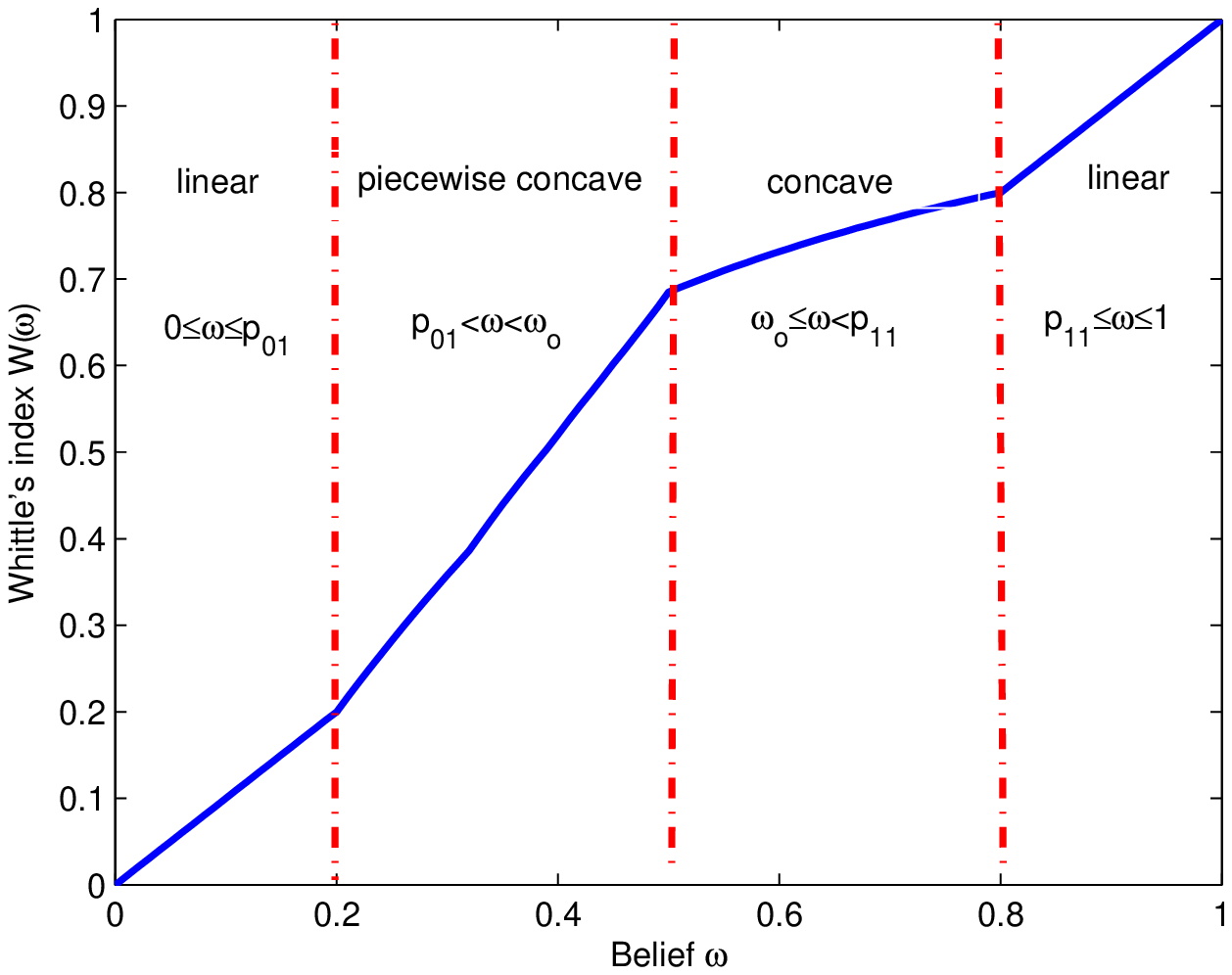}
\end{psfrags}}
\end{minipage}
\begin{minipage}{4in}
\hspace{-4em}\centerline{
\begin{psfrags}
\scalefig{1}\epsfbox{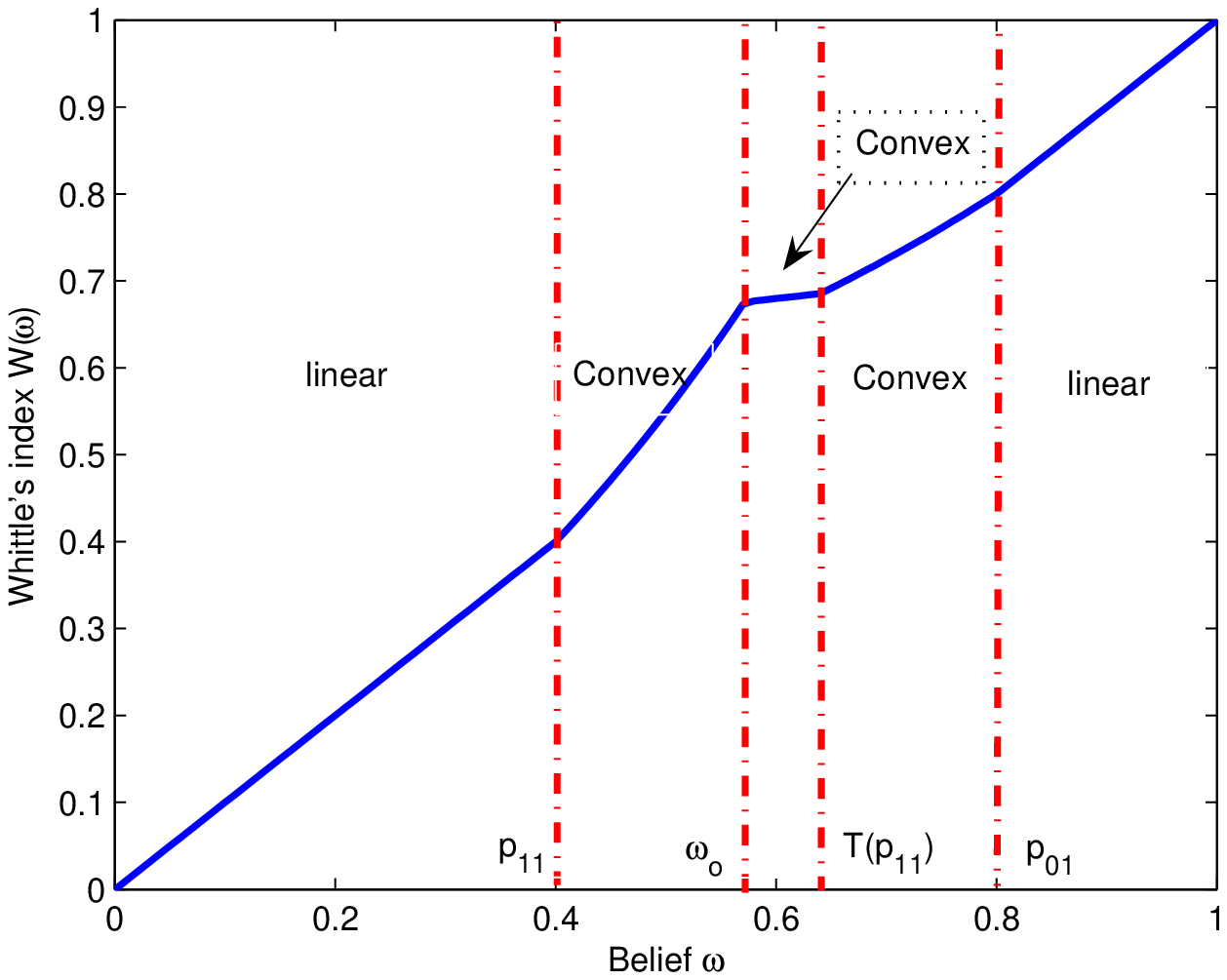}
\end{psfrags}}
\end{minipage}
\caption{Whittle's index (left: $p_{11}=0.8,~p_{01}=0.2,~\beta=0.9$;
right: $p_{11}=0.4,~p_{01}=0.8,~\beta=0.9$).} \label{fig:plotindex}
\end{figure}

\subsection{Performance of Whittle's Index
Policy}\label{subsec:performancediscount}

\subsubsection{The optimality of Whittle's Index Policy under a Relaxed Constraint}

Whittle's index policy is the optimal solution to a Lagrangian
relaxation of RMBP~\cite{whittle}. Specifically, the number of
activated arms can vary over time provided that its discounted
average over the infinite horizon equals to $K$. Let $K(t)$ denote
the number of arms activated in slot $t$. The relaxed constraint is
given by
\begin{eqnarray}\label{eq:relaxCon}
\mathbb{E}_{\pi}[(1-\beta)\Sigma_{t=1}^{\infty}\beta^{t-1}K(t)]=K.
\end{eqnarray}
Let $\bar{V}_{\beta}(\Omega(1))$ denote the maximum expected total
discounted reward that can be obtained under this relaxed constraint
when the initial belief vector is $\Omega(1)$. Based on the
Lagrangian multiplier theorem, we have~\cite{whittle}
\begin{eqnarray}\label{eqn:relax}
\bar{V}_{\beta}(\Omega(1))=
\inf_{m}\{\Sigma_{i=1}^NV_{\beta,m}^{(i)}(\omega_i(1))-m\frac{(N-K)}{1-\beta}\},
\end{eqnarray}
where $V_{\beta,m}^{(i)}(\omega)$ is the value function of the
single-armed bandit process with subsidy $m$ that corresponds to
the $i$-th channel.

The above equation reveals the role of the subsidy $m$ as the
Lagrangian multiplier and the optimality of Whittle's index policy
for RMBP under the relaxed constraint given in~\eqref{eq:relaxCon}.
Specifically, under the relaxed constraint, Whittle's index policy
is implemented by activating, in each slot, those arms whose current
states have a Whittle's index greater than a constant $m^*$. This
constant $m^*$ is the Lagrangian multiplier that makes the relaxed
constraint given in~\eqref{eq:relaxCon} satisfied, or equivalently,
the Lagrangian multiplier that achieves the infimum
in~\eqref{eqn:relax}. It is not difficult to see that Whittle's
index policy implemented by comparing to a constant $m^*$ is the
optimal policy (\ie achieves $\bar{V}_{\beta}(\Omega(1))$) for RMBP
under the relaxed constraint.

\subsubsection{An Upper Bound of The Optimal Performance}

Under the strict constraint of $K(t)=K$ for all $t$, Whittle's index
policy is implemented by activating those $K$ arms with the largest
indices in each slot. Its optimality is lost in general.

\begin{figure}[h]
\centerline{
\begin{psfrags}
\psfrag{g}[c]{$G_{\beta}(\Omega(1),m)$
}\psfrag{v}[c]{$\bar{V}_{\beta}(\Omega(1))$}
\psfrag{m1}[c]{$m^*$}\psfrag{m}[c]{$m$}
\scalefig{0.6}\epsfbox{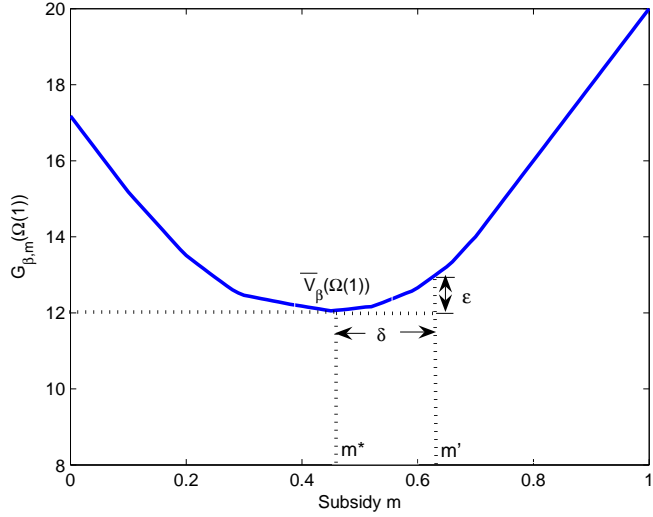}
\end{psfrags}}
\caption{The optimal performance under relaxed constraint ($N
=8,~M=4,~
\{p^{(i)}_{01}\}_{i=1}^8=[0.2,0.5,0.8,0.1,0.6,0.2,0.3,0.8],~
\{p^{(i)}_{11}\}_{i=1}^8=[0.4,0.1,0.3,0.6,0.2,0.8,0.7,0.6],~   
B_i=1~\mbox{for all}~i=1,\hdots,8,~ \gamma=0.8$).}
\label{fig:perform}
\end{figure}

Let $V_{\beta}(\Omega(1))$ denote the maximum expected total
discounted reward of the RMBP under the strict constraint that $K(t)=K$ for all $t$.
It is obvious that
\[V_{\beta}(\Omega(1))\le\bar{V}_{\beta}(\Omega(1)).\]
$\bar{V}_{\beta}(\Omega(1))$ thus provides a performance benchmark
for all RMBP policies, including Whittle's index policy.
Unfortunately, $\bar{V}_{\beta}(\Omega(1)$ as given in
\eqref{eqn:relax} is, in general, difficult to obtain due to the
complexity of calculating the value functions of all arms and
searching for the infimum over an uncountable space. For the problem
at hand, however, we have obtained $V_{\beta,m}^{(i)}(\omega_i(1))$
in closed-form as given in Lemma~\ref{lemma:valuefunction}.
Furthermore, the piecewise constant structure of the discounted
passive time $D_{\beta,m}^{(i)}(\omega_i(1))$ given in
Lemma~\ref{lemma:passivetimeform} leads to efficient algorithms for
searching for the infimum of the value functions over $m$ as shown
below.

Let
\[G_{\beta,m}(\Omega(1))=\Sigma_{i=1}^NV_{\beta,m}^{(i)}(\omega_i(1))-m\frac{(N-K)}{1-\beta}.\]
We then have $\bar{V}_{\beta}(\Omega(1))=\inf_m
G_{\beta,m}(\Omega(1),m)$. From Lemma~\ref{lemma:passivetimeform},
it is easy to see that $G_{\beta,m}(\Omega(1))$ is convex in $m$
as illustrated in Fig.~\ref{fig:perform}. The infimum of
$G_{\beta,m}(\Omega(1))$ is achieved at $m^*$ at which the
derivative of $G_{\beta,m}(\Omega(1))$ with respect to $m$ becomes
nonnegative for the first time (note that $G_{\beta,m}(\Omega(1))$
is not differentiable at every $m$, and we
consider the right derivative when it is not differentiable).
Equivalently,
\[
m^*=\sup\{m:
\frac{d(G_{\beta,m}(\Omega(1)))}{dm}=\Sigma_{i=1}^ND_{\beta,m}^{(i)}(\omega_i(1))-\frac{(N-K)}{1-\beta}\le
0\}.
\]

\begin{figure}[h]
\centerline{
\begin{psfrags}
\psfrag{w1}[c]{\footnotesize$W_{\beta}(\omega(1))$}
\psfrag{p1}[c]{\footnotesize$W_{\beta}(p_{11})$}
\psfrag{p0}[c]{\footnotesize$W_{\beta}(p_{01})$}\psfrag{tw1}[c]{\footnotesize
~~$W_{\beta}(\Tc^1(\omega(1)))$} \psfrag{tp1}[c]{\footnotesize
$W_{\beta}(\Tc^1(p_{11}))$} \psfrag{0}[c]{$0$}\psfrag{1}[c]{$1$}
\psfrag{D}[c]{$D_{\beta,m}(\omega(1))$} \psfrag{m}[c]{$m$}
\scalefig{0.9}\epsfbox{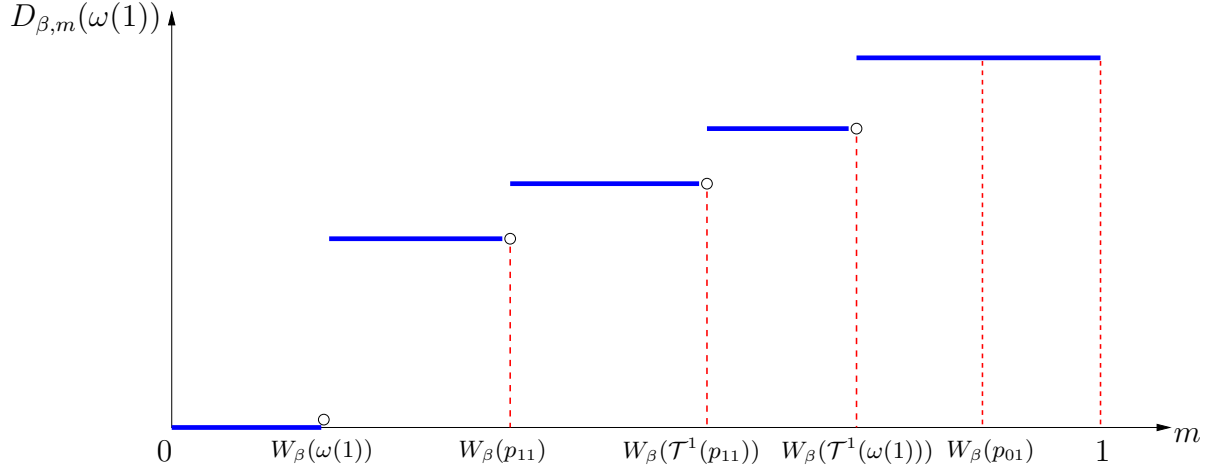}
\end{psfrags}}
\caption{The passive time for different regions ($p_{11}<p_{01}$).}
\label{fig:ND}
\end{figure}

\begin{figure}[h]
\centerline{
\begin{psfrags}
\psfrag{ga}[c]{Gray Area (infinite pieces)}
\psfrag{p11}[c]{~~~~\footnotesize $W_{\beta}(p_{11})$}
\psfrag{barw}[l]{\footnotesize$W_{\beta}(\bar{\omega})$}
\psfrag{wo}[c]{\footnotesize$W_{\beta}(\omega_o)$}
\psfrag{w1}[c]{\footnotesize$W_{\beta}(\omega(1))$}
\psfrag{p1}[c]{\footnotesize$W_{\beta}(p_{11})$}
\psfrag{p0}[c]{\footnotesize$W_{\beta}(p_{01})$}\psfrag{tp0}[c]{\footnotesize
~~$W_{\beta}(\Tc^1(p_{01}))$} \psfrag{tlp0}[c]{\footnotesize
~~~$W_{\beta}(\Tc^{L(p_{01},\bar{\omega})-1}(p_{01}))$}
\psfrag{0}[c]{$0$}\psfrag{1}[c]{$1$}
\psfrag{D}[c]{$D_{\beta,m}(\omega(1))$~~~~} \psfrag{m}[c]{~~$m$}
\scalefig{0.9}\epsfbox{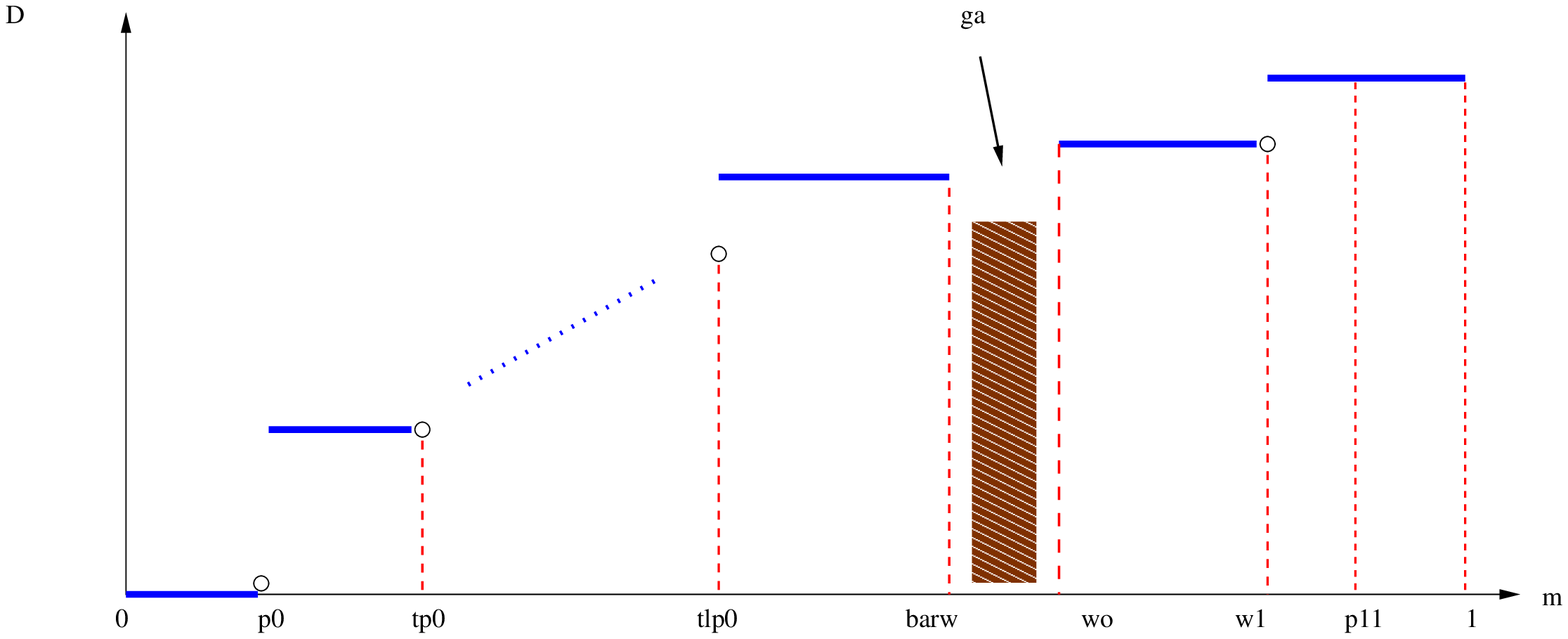}
\end{psfrags}}
\caption{The passive time for different regions ($p_{11}\ge
p_{01}$).} \label{fig:PD}
\end{figure}

From Lemma~\ref{lemma:passivetimeform},
$D_{\beta,m}^{(i)}(\omega_i(1))$ is piecewise constant for each
channel (see Fig.~\ref{fig:ND} and Fig.~\ref{fig:PD}). We can thus
partition the range of $m$ into disjoint regions such that
$\frac{d(G_{\beta,m}(\Omega(1)))}{dm}$ is constant in each region.
To obtain $m^*$, we only need to check each region successively
until $\frac{d(G_{\beta,m}(\Omega(1)))}{dm}$ becomes nonnegative for
the first time (due to the monotonically increasing property of $D_{\beta,m}^{(i)}(\omega_i(1))$
in $m$). The difficulty is that for a positively correlated
channel, there are infinite constant regions of
$D_{\beta,m}^{(i)}(\omega_i(1))$ (see
Fig.~\ref{fig:PD}). However, we can find an arbitrarily small interval
$(W_{\beta}(\bar{\omega}), W_{\beta}(\omega)]$---referred to as the gray area---outside
which there are only finite number of constant regions of $D_{\beta,m}^{(i)}(\omega_i(1))$.
By setting the gray area for each positively correlated channel small
enough, we can find an $m'$ that is arbitrarily close to $m^*$ so that
$G_{\beta,m'}(\Omega(1)))-G_{\beta,m^*}(\Omega(1))\le\epsilon$ for
any $\epsilon>0$. Specifically, we set the length of the gray area
for each positively correlated channel to $\frac{\delta}{N}$ (\ie
$W_{\beta}(\omega_o)-W_{\beta}(\bar{\omega})\le\frac{\delta}{N}$)
where $\delta=\frac{\epsilon(1-\beta)}{K}$. The total length of the
gray area over all channels is thus at most $\delta$, \ie
$m'-m^*\le\delta$. Based on the convexity of
$G_{\beta,m}(\Omega(1))$, the maximum derivative of
$G_{\beta,m}(\Omega(1))$ for $m^*\le m\le1$ is achieved at $m=1$,
which is equal to $\frac{K}{1-\beta}$. Thus, we have
\[G_{\beta,m'}(\Omega(1)))-G_{\beta,m^*}(\Omega(1))\le\frac{K}{1-\beta}(m'-m^*)
\le\frac{\delta K}{1-\beta}=\epsilon.\]

We point out that if $m^*$ does
not fall into the gray area, the algorithm will obtain $m^*$ and
$\bar{V}_{\beta}(\Omega(1))$ without error. In the special case when
every channel is negatively correlated, the algorithm will always output the exact value of $m^*$
and $\bar{V}_{\beta}(\Omega(1))$. The detailed algorithm is given in
Fig.~\ref{fig:upperalg}. The complexity of this algorithm is given
in the following theorem.

\begin{figure}[htbp]
\begin{center}
\noindent\fbox{
\parbox{6.5in}
{ \centerline{\underline{{\bf Computing the Performance Upper Bound within $\epsilon$-Accuracy}}}
Input an $\epsilon>0$. Set
$\delta=\frac{\epsilon(1-\beta)}{K}$ and $j=0$.
\begin{enumerate}
\item For each negatively correlated channel $i$, calculate $W_{\beta}(p_{11}^{(i)})$,
$W_{\beta}(p_{01}^{(i)})$, and $W_{\beta}(\Tc(p_{11}^{(i)}))$. If
$\omega_i(1)<\omega_o^{(i)}$, calculate
$W_{\beta}(\omega_i(1))$ and $W_{\beta}(\Tc^1(\omega_i(1)))$;
otherwise only calculate $W_{\beta}(\omega_i(1))$.
\item For each positively correlated channel $i$, calculate
$W_{\beta}(p_{01}^{(i)})$, $W_{\beta}(p_{11}^{(i)})$, and
$W_{\beta}(\omega_o^{(i)})$. Search for an
$\bar{\omega}^{(i)}\in[\omega_o^{(i)}-\frac{\delta}{N},\omega_o^{(i)})$
such that $W_{\beta}(\omega_o^{(i)})\ge
W_{\beta}(\bar{\omega}^{(i)})-\frac{\delta}{N}$. Let $l_i$ be the
smallest integer such that
$\Tc^{l_i}(p_{01}^{(i)})>\bar{\omega}^{(i)}$. Calculate
$W_{\beta}(\Tc^k(p_{01}^{(i)}))$ for all $1\le k\le l_i$. If
$\omega_i(1)<\omega_o^{(i)}$, then let $d_i$ be the smallest integer
such that $\Tc^{d_i}(\omega_i(1))>\bar{\omega}^{(i)}$ and calculate
$W_{\beta}(\Tc^k(\omega_i(1)))$ for all $1\le k\le d_i$; otherwise
only calculate $W_{\beta}(\omega_i(1))$. Set the gray area
$V=\cup_{i}[\min\{W_{\beta}(\Tc^{l_i}(p_{01}^{(i)})),W_{\beta}(\Tc^{d_i}(\omega_i(1)))\},W_{\beta}(\omega_o^{(i)}))$.
\item  Order all Whittle's indices calculated in Step 1 and 2 by
the ascending order. Let $[a_1,...a_h]$ denote the ordered Whittle's indices.
Set $a_0=0$ and $a_{h+1}=1$.
\item If $[a_j,a_{j+1})\nsubseteq V$, calculate
$D=\Sigma_{k=1}^N D_{\beta,m}^{(k)}(\omega_k(1))-\frac{(N-K)}{1-\beta}$
for $m\in [a_j,a_{j+1})$ according
to~\eqref{eq:dcloseform} (note that every $D_{\beta,m}^{(k)}(\omega_k(1))$ is constant
for $m\in [a_j,a_{j+1})$). If $D$ is nonnegative, go to Step 5;
otherwise set $j=j+1$ and repeat Step 4.
\item Calculate $G=G_{\beta,m}(\Omega(1))$ when $m\in [a_j,a_{j+1})$
according to~\eqref{eq:vcloseform}. Output $m'=a_j$ and $G$.
\end{enumerate}
}} \caption{Algorithm for computing the upper bound of the optimal
performance.}\label{fig:upperalg}
\end{center}
\end{figure}
\begin{theorem}\label{thm:algorithm}
For any $\epsilon>0$, the algorithm given in~Fig.~\ref{fig:upperalg}
runs in at most $O(N^2\log N)$ time to output a value $G$ that is
within $\epsilon$ of $\bar{V}_{\beta}(\Omega(1))$ for any $\epsilon>0$.
\end{theorem}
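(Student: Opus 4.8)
The plan is to separate the claim into a correctness part (the returned value $G$ lies within $\epsilon$ of $\bar{V}_\beta(\Omega(1))$) and a running-time part (the procedure terminates in $O(N^2\log N)$ operations). The correctness part is already furnished by the discussion preceding the theorem: since $G_{\beta,m}(\Omega(1))$ is convex in $m$ with minimizer $m^*$ (Lemma~\ref{lemma:passivetimeform}), and since the gray area assigned to each positively correlated channel has index-length at most $\delta/N$ with $\delta=\epsilon(1-\beta)/K$, the output $m'=a_j$ satisfies $|m'-m^*|\le\delta$; bounding the slope of $G_{\beta,m}$ on $[m^*,1]$ by $K/(1-\beta)$ then gives $G_{\beta,m'}(\Omega(1))-\bar{V}_\beta(\Omega(1))\le\epsilon$. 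I would restate this in one line and devote the rest of the proof to the complexity count.

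First I would bound $h$, the total number of Whittle's-index values produced in Steps 1 and 2. Step 1 contributes $O(1)$ values per negatively correlated channel, hence $O(N)$ in all. Step 2 contributes, for each positively correlated channel $i$, a constant number of base values plus $l_i+d_i$ further values, where $l_i=L(p_{01}^{(i)},\bar\omega^{(i)})$ and $d_i=L(\omega_i(1),\bar\omega^{(i)})$ are the crossing times into the gray area. The key estimate is $l_i,d_i=O(\log N)$ (for fixed $\beta,\epsilon$): by Lemma~\ref{lemma:ksteptransition} the iterate $\Tc^k(p_{01}^{(i)})$ approaches $\omega_o^{(i)}$ geometrically at rate $|p_{11}^{(i)}-p_{01}^{(i)}|^k$, and the closed form~\eqref{eqn:Lpositive} shows $L(p_{01}^{(i)},\bar\omega^{(i)})$ grows only like $\log_{1/|p_{11}^{(i)}-p_{01}^{(i)}|}(1/(\omega_o^{(i)}-\bar\omega^{(i)}))$; since $\omega_o^{(i)}-\bar\omega^{(i)}$ is, by the monotonicity of $W_\beta$ (Corollary~\ref{cor:Wproperty}) and the choice of the gray area, of order $\delta/N=\Theta(\epsilon(1-\beta)/(KN))$ and $K\le N$, this is $O(\log(N/\delta))=O(\log N)$. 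Summing over channels gives $h=O(N\log N)$, and each individual index value is evaluated in $O(1)$ time from the closed form of Theorem~\ref{thm:whittleindex}, the crossing times $L$ themselves being single logarithm evaluations.

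With $h=O(N\log N)$ in hand, the remaining steps are routine to cost out. Sorting the $h$ indices in Step 3 takes $O(h\log h)=O(N(\log N)^2)$. In Steps 4 and 5 the sorted indices partition the subsidy axis into $O(h)$ regions on each of which every $D_{\beta,m}^{(k)}(\omega_k(1))$ is constant; for each region the algorithm forms $D=\sum_{k=1}^N D_{\beta,m}^{(k)}(\omega_k(1))-(N-K)/(1-\beta)$, which by Lemma~\ref{lemma:closeformD} costs $O(1)$ per channel and hence $O(N)$ per region, for a total of $O(hN)=O(N^2\log N)$. This term dominates the $O(N(\log N)^2)$ of the sort, as $N\gg\log N$, so the overall running time is $O(N^2\log N)$, completing the proof. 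I expect the one genuinely delicate step to be the estimate $l_i,d_i=O(\log N)$: it is where the geometric convergence of the belief iterate, the explicit crossing-time formula~\eqref{eqn:Lpositive}, the continuity and monotonicity of $W_\beta$ near $\omega_o$, and the scaling $\delta=\epsilon(1-\beta)/K$ with $K\le N$ must all be combined to turn the infinitely many index pieces of a positively correlated channel into only logarithmically many lying outside the gray area.
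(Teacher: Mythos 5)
Your proposal is correct and follows essentially the same route as the paper: the accuracy claim is handled exactly as in the discussion preceding the theorem (gray area of index-length $\delta/N$ per positively correlated channel, total $\delta=\epsilon(1-\beta)/K$, slope bound $K/(1-\beta)$ on $[m^*,1]$), and the complexity count matches the paper's Appendix C step by step ($O(N)$ for Step 1, $O(\log N)$ crossing-time truncation per channel giving $O(N\log N)$ indices in Step 2, $O(N(\log N)^2)$ for sorting, and the dominant $O(N)$-per-region sweep over $O(N\log N)$ regions in Step 4). Your elaboration of the one step the paper states without detail --- that $l_i,d_i=O(\log(N/\delta))=O(\log N)$ via the geometric convergence of $\Tc^k$ and the closed form of the crossing time --- is accurate and fills in exactly what the paper's terse bound $O(\log N)$ presupposes.
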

\begin{proof}
See Appendix C.
\end{proof}

To find the infimum of $G_{\beta}(\Omega(1),m)$, we can also carry out a
binary search on subsidy $m$. 
It can
be shown that this algorithm runs in $O(N(\log N)^2)$ time. However,
it cannot output the exact value of $m^*$ and
$\bar{V}_{\beta}(\Omega(1))$.

Fig.~\ref{fig:relaxpolicy} shows an example of the performance
of Whittle's index policy. It demonstrates the near optimal performance
of Whittle's index policy and the tightness of the performance upper bound.

\begin{figure}[htb]
\centerline{
\begin{psfrags}
\scalefig{0.6}\epsfbox{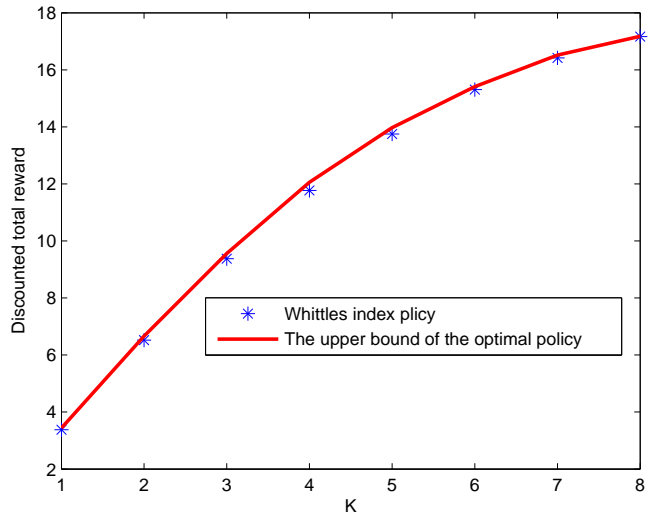}
\end{psfrags}}
\caption{The Performance of Whittle's index policy ($N=8,~
\{p^{(i)}_{01}\}_{i=1}^8= \{0.2,0.5,0.8,0.1,0.6,0.2,0.3,
0.8\},~\{p^{(i)}_{11}\}_{i=1}^8 =\{0.4,0.1,0.3,0.6,0.2,0.8,0.7,
0.6\},~B_i=1~\mbox{for}~i=1,\hdots,8,~\mbox{and}~\beta=0.8$).}
\label{fig:relaxpolicy}
\end{figure}

\section{Whittle's index under Average reward criterion}\label{sec:whittleindexpolicyavg}

In this section, we investigate Whittle's index policy under the
average reward criterion and establish results parallel to those
obtained under the discounted reward criterion in
Sec.~\ref{sec:discount}.

%
%
%
%

\subsection{The Value Function and The Optimal
Policy}\label{subsec:vpavg}

First, we present a general result by Dutta~\cite{Dutta} on the
relationship between the value function and the optimal policy under
the total discounted reward criterion and those under the average
reward criterion. This result allows us to study Whittle's index
policy under the average reward criterion by examining its limiting
behavior as the discount factor $\beta\rightarrow1$.

{\it Dutta's Theorem}~\cite{Dutta}.
Let $\Fc$ be the belief space of a POMDP and $V_{\beta}(\Omega)$ the
value function with discount factor $\beta$ for belief
$\Omega\in\Fc$. The POMDP satisfies the value boundedness condition
if there exist a belief $\Omega'$, a real-valued function
$c_1(\Omega):\Fc\rightarrow\Rc$, and a constant $c_2<\infty$ such
that
\[c_1(\Omega)\le V_{\beta}(\Omega)-V_{\beta}(\Omega')\le
c_2,\] for any $\Omega\in\Fc$ and $\beta\in[0,1)$. Under the
value-boundedness condition, if a series of optimal policies
$\pi_{\beta_k}$ for a POMDP with discount factor $\beta_k$ pointwise
converges to a limit $\pi^*$ as $\beta_k\rightarrow1$, then $\pi^*$
is the optimal policy for the POMDP under the average reward
criterion. Furthermore, let $J(\Omega)$ denote the maximum expected
average reward over the infinite horizon starting from the initial
belief $\Omega$. We have
\[J(\Omega)=\lim_{\beta_k\rightarrow
1}(1-\beta_k)V_{\beta_k}(\Omega)\] and $J(\Omega)=J$ is independent
of the initial belief $\Omega$.

Next, we will show that the single-armed bandit process with subsidy $m$
under the discounted reward criterion (see
Sec.~\ref{subsec:whittleindexpolicy}) satisfies the valueboundedness
condition.


\begin{lemma}\label{lemma:valueboundedness}
The single-armed bandit process with subsidy under the discounted
reward criterion satisfies the value-boundedness condition. More
specifically, we have\footnote{Here we do not consider the trivial
case that the arm has absorbing states.}
\begin{eqnarray}
|V_{\beta,m}(\omega)-V_{\beta,m}(\omega')|\le c+1,~~\mbox{for
all}~\omega,\omega'\in[0,1],
\end{eqnarray}
where $c=\max\{\frac{2}{1-p_{11}}, \frac{2}{p_{01}}\}$.
\end{lemma}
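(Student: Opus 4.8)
The plan is to establish the uniform bound $|V_{\beta,m}(\omega)-V_{\beta,m}(\omega')|\le c+1$ by exhibiting, for any two belief states $\omega,\omega'$, a specific suboptimal policy for the process started at $\omega$ that mimics the optimal policy for the process started at $\omega'$ after the two trajectories have coupled, and controlling the reward discrepancy accumulated before coupling. Since both $V_{\beta,m}(\omega;u=1)$ and $V_{\beta,m}(\omega;u=0)$ are affine in the per-slot reward structure and the discounted rewards/subsidies are uniformly bounded (each slot contributes at most $\max\{1,m\}$ and at least $\min\{0,m\}$), the heart of the matter is not the magnitude of per-slot rewards but the \emph{time to couple} the two belief trajectories.

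The key observation is the coupling mechanism built into the dynamics~\eqref{eq:omega}: whenever the arm is activated, the next belief state collapses to either $p_{11}$ (if the observation is good) or $p_{01}$ (if bad), \emph{independently of the current belief}. First I would argue that from any starting belief one can, with a suitable action, force an activation and hence reach a state in $\{p_{01},p_{11}\}$; once both the $\omega$-process and the $\omega'$-process have each been activated and landed on the same observed outcome, their future belief evolution is identical and they can be driven by the same policy from that point on with zero further reward difference. The expected time to couple is governed by how quickly the arm reaches the good or bad state, which is why the constant $c=\max\{\frac{2}{1-p_{11}},\frac{2}{p_{01}}\}$ appears: the quantities $1-p_{11}$ and $p_{01}$ are (up to the two-state structure) the controlling probabilities for hitting the bad state from $1$ and the good state from $0$, so $\frac{1}{1-p_{11}}$ and $\frac{1}{p_{01}}$ bound the expected hitting times, and the factor $2$ absorbs the two trajectories that must each couple. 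The reward difference accumulated over the (expected) coupling horizon is then at most $c$ in magnitude, and the residual additive constant $1$ accounts for the single-slot discrepancy in immediate reward (bounded by the normalized bandwidth $B=1$) incurred at the coupling step.

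Concretely, the steps I would carry out are: (i) invoke Lemma~\ref{lemma:ksteptransition} to confirm that the belief stays in $[\min\{p_{01},p_{11}\},\max\{p_{01},p_{11}\}]$ after any single unobserved step, so all reachable beliefs (other than the initial one) lie in a bounded interval bounded away from the degenerate endpoints---this is where the footnote excluding absorbing states is used, guaranteeing $p_{01}>0$ and $p_{11}<1$ so that $c<\infty$; (ii) define the comparison policy for the $\omega$-process that activates immediately, compare its value to the optimal value at $\omega'$ using a hitting-time estimate for reaching a common observed state; (iii) bound the discounted sum of per-slot reward differences before coupling by $c$, using the geometric hitting-time tails and $0\le\beta<1$ (uniformly in $\beta$, which is essential since Dutta's condition requires $\beta$-independence); and (iv) symmetrize in $\omega,\omega'$ to obtain the absolute-value bound. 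The main obstacle I anticipate is step (iii): making the coupling-time and reward-difference estimate \emph{uniform in both $\beta$ and $m$}. The uniformity in $\beta$ is delicate because naive bounds blow up as $\beta\to1$; the resolution is that the \emph{difference} of two value functions (rather than either value function alone) only accumulates over the finite coupling horizon, so the $\frac{1}{1-\beta}$ factors cancel and one is left with the $\beta$-free expected-hitting-time bound $c$. Establishing this cancellation cleanly---rather than bounding each value function separately, which would fail---is the crux of the argument.
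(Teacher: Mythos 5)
Your coupling strategy is genuinely different from the paper's proof, which is computational rather than probabilistic: the paper sandwiches $V_{\beta,m}(\omega)$ between endpoint values using the linearity of $V_{\beta,m}(\cdot\,;u=1)$ and the convexity of $V_{\beta,m}(\cdot\,;u=0)$, so that every difference of values reduces to the single quantity $|V_{\beta,m}(p_{01})-V_{\beta,m}(p_{11})|$, and then bounds that quantity by $c$ directly from the closed-form expressions of Lemma~\ref{lemma:valuefunction} in each threshold regime. Your route avoids the closed forms entirely, and its central point --- that the \emph{difference} of two value functions accumulates only until the two belief trajectories merge, so the $\frac{1}{1-\beta}$ blow-ups cancel and the bound is uniform in $\beta$ and $m$ --- is exactly right; it does prove value-boundedness with some finite $\beta$-independent constant, which is all that Dutta's theorem needs downstream. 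Note also that the two proofs meet at the same bottleneck: under your mimicking policy, the first activation contributes an immediate-reward gap of at most $|\omega-\omega'|$ plus a continuation gap of $\beta|\omega-\omega'|\,|V_{\beta,m}(p_{11})-V_{\beta,m}(p_{01})|$, so your argument too ultimately rests on bounding $|V_{\beta,m}(p_{11})-V_{\beta,m}(p_{01})|$ --- the very quantity the paper bounds by algebra.

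There is, however, a genuine gap in your step (iii) if the target is the lemma as stated, with $c=\max\{\frac{2}{1-p_{11}},\frac{2}{p_{01}}\}$. The mechanism that merges the trajectories is not the hitting time of the good/bad physical states: under the natural (maximal) coupling, the processes merge at a joint activation exactly when the coupled observations agree, which fails with probability $|\omega_t-\omega'_t|\le|p_{11}-p_{01}|$. Writing $D=|V_{\beta,m}(p_{11})-V_{\beta,m}(p_{01})|$ and $r=|p_{11}-p_{01}|$, the resulting recursion is $D\le r(1+\beta D)$, giving $D\le\frac{r}{1-r}$ and an overall bound of order $\frac{1}{1-r}$ --- uniform in $\beta$ and $m$, but with the constant $\frac{1}{1-|p_{11}-p_{01}|}$ rather than $c$. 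For negatively correlated channels these differ in the wrong direction: with $p_{11}=0.05$ and $p_{01}=0.9$ one gets $\frac{1}{1-r}\approx 6.7$ while $c+1\approx 3.2$. The same numbers defeat your hitting-time accounting: $\frac{1}{1-p_{11}}$ and $\frac{1}{p_{01}}$ are both near $1$ there, yet two trajectories started in opposite states nearly swap every slot (the chain is almost periodic), and the expected coupling time is about $\frac{1}{1-r}\approx 6.7>c$, so ``the factor $2$ absorbs the two trajectories'' does not survive the case $p_{11}<p_{01}$. Recovering the paper's specific constant along your route would require exploiting the sign alternation of the belief differences for negatively correlated chains (which the paper's closed-form algebra captures automatically) instead of absolute-value coupling bounds. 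A minor additional slip: while mimicking the optimal policy for $\omega'$ you cannot ``force an activation''; if that policy never activates, no coupling occurs, but then both processes collect the subsidy $m$ in every slot and the difference is zero, so this case should simply be dispatched as trivial.
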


\begin{proof}
See Appendix D.
\end{proof}


Under the value boundedness condition, the optimal policy for the
single-armed bandit process with subsidy under the average reward
criterion can be obtained from the limit of any pointwise convergent
series of the optimal policies under the discounted reward
criterion. The following Lemma shows that the optimal policy for the
single-armed bandit process with subsidy under the average reward
criterion is also a threshold policy.
\begin{lemma}\label{lemma:thresholdPolicyavg}
Let $\omega_{\beta}^*(m)$ denote the threshold of the optimal policy
for the single-armed bandit process with subsidy $m$ under the
discounted reward criterion. Then
$\lim_{\beta\rightarrow1}\omega^*_{\beta}(m)$ exists for any $m$.
Furthermore, the optimal policy for the single-armed bandit process with subsidy
$m$ under the average reward criterion is also a threshold policy
with threshold
$\omega^*(m)=\lim_{\beta\rightarrow1}\omega^*_{\beta}(m)$.
\end{lemma}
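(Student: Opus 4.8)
The plan is to obtain the average-reward statement as the $\beta\to1$ limit of its discounted counterpart, with Dutta's theorem serving as the bridge. Lemma~\ref{lemma:valueboundedness} already supplies the value-boundedness hypothesis required by Dutta's theorem, and Lemma~\ref{lemma:thresholdPolicy} guarantees that every discounted-optimal policy is a threshold policy with threshold $\omega_\beta^*(m)$. Consequently, once the thresholds are shown to converge, the discounted-optimal policies $\pi_\beta$ will converge pointwise to a threshold policy, and Dutta's theorem will certify that this limiting policy is average-optimal. The whole burden therefore reduces to establishing the existence of $\lim_{\beta\to1}\omega_\beta^*(m)$. For the trivial ranges $m<0$ and $m\ge1$ the threshold can be chosen constant in $\beta$ (Lemma~\ref{lemma:thresholdPolicy}), so the limit is immediate; the substance lies in the interior range $0\le m<1$.

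To handle the interior range I would exploit the monotonicity of Whittle's index (Corollary~\ref{cor:Wproperty}). Because $W_\beta(\cdot)$ is strictly increasing, the threshold is exactly its inverse, $\omega_\beta^*(m)=W_\beta^{-1}(m)$: the unique state whose discounted index equals the subsidy, i.e. $W_\beta(\omega_\beta^*(m))=m$ (Definition~\ref{def:whittleindex}). The closed-form expressions in Theorem~\ref{thm:whittleindex} are rational in $\beta$ and their combinatorial data---the crossing times $L$ of~\eqref{eqn:Lpositive}--\eqref{eqn:Lnegative}---do not depend on $\beta$; hence each piece extends continuously to $\beta=1$, yielding a pointwise limit $W_1(\omega)\defeq\lim_{\beta\to1}W_\beta(\omega)$ that is continuous and strictly increasing in $\omega$. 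Since every $W_\beta$ is monotone and the limit $W_1$ is continuous, a P\'olya-type argument (pointwise convergence of monotone functions to a continuous limit is uniform) upgrades $W_\beta\to W_1$ to uniform convergence on $[0,1]$. Uniform convergence together with the continuity and strict monotonicity of $W_1$ then forces the inverses to converge: any subsequential limit $\omega^*$ of $\omega_\beta^*(m)$ satisfies $W_1(\omega^*)=m$ upon passing to the limit in $W_\beta(\omega_\beta^*(m))=m$, so $\omega^*=W_1^{-1}(m)$ is uniquely pinned down. This establishes $\omega^*(m)\defeq\lim_{\beta\to1}\omega_\beta^*(m)$.

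With the limit secured, I would close the argument by checking the pointwise convergence of policies demanded by Dutta's theorem. For any fixed state $\omega\ne\omega^*(m)$, the convergence $\omega_\beta^*(m)\to\omega^*(m)$ places $\omega$ strictly on one side of $\omega_\beta^*(m)$ for all $\beta$ sufficiently close to $1$, so the discounted-optimal action at $\omega$ stabilizes to the action dictated by the threshold $\omega^*(m)$; the lone boundary state $\omega=\omega^*(m)$ is irrelevant because both actions are equally rewarding there. Hence $\pi_\beta$ converges pointwise to the threshold policy with threshold $\omega^*(m)$, and Dutta's theorem identifies it as optimal for the single-armed bandit with subsidy $m$ under the average-reward criterion.

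The main obstacle is the rigorous control of the limiting index $W_1$. I must verify that the denominators in Theorem~\ref{thm:whittleindex} (e.g. $1-\beta p_{11}-C_1(\beta(1-\beta p_{11})-\beta(\omega-\beta\Tc^1(\omega)))$) stay bounded away from zero as $\beta\to1$, so that $W_1$ is finite and continuous, and that $W_1$ retains strict monotonicity across the piecewise regions so that $W_1^{-1}(m)$ is a single point rather than an interval. These checks, while routine from the explicit formulas, are precisely what make the inverse-convergence step valid; the remainder is a direct invocation of the P\'olya-type argument and Dutta's theorem.
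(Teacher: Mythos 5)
Your overall route coincides with the paper's own proof (Appendix E): take the limit $W(\omega)=\lim_{\beta\rightarrow1}W_{\beta}(\omega)$ of the discounted Whittle index, invert it to produce the candidate threshold, show $\omega_{\beta}^*(m)$ converges to it, and then feed pointwise policy convergence together with value-boundedness (Lemma~\ref{lemma:valueboundedness}) into Dutta's theorem. The gap sits precisely in the step you defer as a ``routine check'': the limiting index $W$ is \emph{not} strictly increasing for a negatively correlated arm. By Theorem~\ref{thm:whittleindex-timavg}, eq.~\eqref{eq:whittleindexavgN}, $W(\omega)=\frac{p_{01}}{1+p_{01}-\Tc^1(p_{11})}B$ is constant on the entire interval $[\omega_o,\Tc^1(p_{11}))$ when $p_{11}<p_{01}$---the paper states this explicitly right after Theorem~\ref{thm:whittleindex-timavg}. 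Hence for the subsidy $m_0$ equal to this plateau value, $W^{-1}(m_0)$ is an interval, not a point; your subsequential-limit argument only shows that every limit point of $\omega_{\beta}^*(m_0)$ lies in the plateau $[\omega_o,\Tc^1(p_{11})]$, so neither the existence of $\lim_{\beta\rightarrow1}\omega_{\beta}^*(m_0)$ nor pointwise convergence of the policies at plateau states follows. (Your P\'olya-type uniform-convergence step is fine, and your argument does go through for positively correlated arms and for every non-plateau $m$; but the lemma asserts existence of the limit for \emph{all} $m$.)

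The paper closes exactly this hole with a separate case that requires information beyond uniform convergence: it observes that $W_{\beta}(\Tc^1(p_{11}))$ converges to $W(\Tc^1(p_{11}))$ monotonically \emph{from below}, so $W_{\beta}(\Tc^1(p_{11}))\le m_0$ for all $\beta$, which by monotonicity of $W_{\beta}$ pins $\omega_{\beta}^*(m_0)\ge\Tc^1(p_{11})$; combined with the strict increase of $W$ to the right of $\Tc^1(p_{11})$, a contradiction argument then forces $\omega_{\beta}^*(m_0)\rightarrow\Tc^1(p_{11})$, the right endpoint of the plateau. This is also why the paper defines the threshold as $\omega^*(m)=\max\{\omega:W(\omega)=m\}$ rather than as an ordinary inverse. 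That directional fact about \emph{how} $W_{\beta}$ approaches its limit at the plateau endpoint is a genuinely new ingredient you would need to add (or replace with an equivalent one-sided estimate) before your proof covers negatively correlated arms.
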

\begin{proof}
See Appendix E.
\end{proof}

\subsection{Indexability and Whittle's index
policy}\label{subsec:indexabilityavg}

Based on Lemma~\ref{lemma:thresholdPolicyavg}, the restless
multi-armed bandit process $(\Omega, \{{\bf P}_i\}_{i=1}^N,
\{B_i\}_{i=1}^N, 1)$ is indexable if the threshold $\omega^*(m)$ of
the optimal policy is monotonically increasing with subsidy $m$.
Next, we show that the monotonicity holds and the restless
multi-armed bandit process $(\Omega, \{{\bf P}_i\}_{i=1}^N,
\{B_i\}_{i=1}^N, 1)$ is indexable. Moreover, we obtain Whittle's
index in closed-form as shown below.

\vspace{1em}
\begin{theorem}\label{thm:whittleindex-timavg}
The restless multi-armed bandit process $(\Omega(1), \{{\bf
P}_i\}_{i=1}^N, \{B_i\}_{i=1}^N, 1)$ is indexable with Whittle's
index $W(\omega)$ given below.

\begin{itemize}
\item \emph{Case 1: Positively correlated channel $(p^{(i)}_{11}\ge p^{(i)}_{01})$.}
\end{itemize}
\begin{eqnarray}\label{eq:whittleindexavgP}
W(\omega)=\left\{\begin{array}{ll} \omega B_i, & \mbox{if}~\omega\le
p^{(i)}_{01}~\mbox{or}
~\omega\ge p^{(i)}_{11}\\[1em]
\frac{(\omega-\Tc^1(\omega))(L(p^{(i)}_{01},\omega)+1)+\Tc^{L(p^{(i)}_{01},\omega)}(p^{(i)}_{01})}{1-p^{(i)}_{11}+
(\omega-\Tc^1(\omega)L(p^{(i)}_{01},\omega)+\Tc^{L(p^{(i)}_{01},\omega)}(p^{(i)}_{01})}B_i,
 & \mbox{if}~p^{(i)}_{01}<\omega<\omega^{(i)}_o\\[1em]
\frac{\omega}{1-p^{(i)}_{11}+\omega}B_i, &
\mbox{if}~\omega^{(i)}_o\le
\omega<p^{(i)}_{11}\\[1em]
\end{array}\right..
\end{eqnarray}
\\[1em]

\begin{itemize}
\item \emph{Case 2: Negatively correlated channel $(p^{(i)}_{11}<p^{(i)}_{01})$.}
\end{itemize}
\begin{eqnarray}\label{eq:whittleindexavgN}
W(\omega)=\left\{\begin{array}{ll} \omega B_i, & \mbox{if}~\omega\le
p^{(i)}_{11}~\mbox{or}
~\omega\ge p^{(i)}_{01}\\[1em]
\frac{\omega+p^{(i)}_{01}-\Tc^1(\omega)}{1+p^{(i)}_{01}-\Tc^1(p^{(i)}_{11})+\Tc^1(\omega)-\omega}B_i
 & \mbox{if}~p^{(i)}_{11}<\omega<\omega^{(i)}_o\\[1em]
\frac{p^{(i)}_{01}}{1+p^{(i)}_{01}-\Tc^1(p^{(i)}_{11})}B_i, & \mbox{if}~\omega^{(i)}_o\le\omega<\Tc^1(p^{(i)}_{11})\\[1em]
\frac{p^{(i)}_{01}}{1+p^{(i)}_{01}-\omega}B_i, &
\mbox{if}~\Tc^1(p^{(i)}_{11})\le
\omega<p^{(i)}_{01}\\[1em]
\end{array}\right..
\end{eqnarray}
\end{theorem}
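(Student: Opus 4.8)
The plan is to establish Theorem~\ref{thm:whittleindex-timavg} by taking the limit $\beta\rightarrow1$ of the discounted-reward results obtained in Theorem~\ref{thm:whittleindex}. By Lemma~\ref{lemma:valueboundedness} the single-armed bandit with subsidy satisfies Dutta's value-boundedness condition, and by Lemma~\ref{lemma:thresholdPolicyavg} the optimal average-reward policy is itself a threshold policy with threshold $\omega^*(m)=\lim_{\beta\rightarrow1}\omega^*_\beta(m)$. The indexability claim — that the passive set grows monotonically from $\emptyset$ to $[0,1]$ as $m$ increases — is equivalent to showing $\omega^*(m)$ is monotonically increasing in $m$. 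First I would argue monotonicity directly: since each $\omega^*_\beta(m)$ is monotonically increasing in $m$ (this is exactly the content of the discounted indexability in Theorem~\ref{thm:indexability}), the pointwise limit $\omega^*(m)$ inherits monotonicity, establishing indexability under the average criterion.

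With indexability in hand, Whittle's index $W(\omega)$ is, by Definition~\ref{def:whittleindex} adapted to the average criterion, the subsidy $m$ at which states $\omega$ and the threshold coincide, i.e.\ the value of $m$ solving $\omega^*(m)=\omega$. The cleanest route is to take limits in the closed-form discounted index: I expect $W(\omega)=\lim_{\beta\rightarrow1}W_\beta(\omega)$, where $W_\beta(\omega)$ is given in~\eqref{eqn:whittleindexcase1} and~\eqref{eqn:whittleindexcase2}. Concretely, for each of the regions of $\omega$ delineated in Theorem~\ref{thm:whittleindex} I would substitute the closed-form value-function expressions from Lemma~\ref{lemma:valuefunction}, take $\beta\rightarrow1$ in the indifference equation $V_{\beta,m}(\omega;u=0)=V_{\beta,m}(\omega;u=1)$, and simplify. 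Equivalently, and more robustly, I would re-derive the indifference equation in the average-reward setting directly: writing the gain as $J$ and the relative value function $h$, the average-reward analogue of~\eqref{eqn:whittleindex} equates the long-run average reward of activating at $\omega$ against being passive until the belief crosses the threshold. The crossing time $L(p_{01},\omega)$ from~\eqref{eqn:Lpositive}–\eqref{eqn:Lnegative} governs how long the passive phase lasts before an activation, which is why $L(p^{(i)}_{01},\omega)$ appears explicitly in~\eqref{eq:whittleindexavgP}.

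The structural simplification from the discounted case is worth emphasizing and guides the calculation. As $\beta\rightarrow1$, the geometric discount weights $\frac{1-\beta^L}{1-\beta}$ degenerate into the raw count $L$, so the auxiliary constants $C_1,C_2$ (and $C_3,C_4$) collapse and the index becomes a ratio of a total accrued reward to a total elapsed time over one activation cycle. This renewal-reward interpretation makes the positively correlated middle region~\eqref{eq:whittleindexavgP} transparent: the numerator is the reward gained in an activation cycle beginning from belief $\Tc^{L}(p_{01})$, and the denominator is the cycle length weighted by passivity through $1-p_{11}^{(i)}$; the negatively correlated case reduces to at most one passive step since $L\in\{0,1\}$ by~\eqref{eqn:Lnegative}, explaining the finitely many pieces in~\eqref{eq:whittleindexavgN}.

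The main obstacle I anticipate is justifying the interchange of limits, namely that $W(\omega)=\lim_{\beta\rightarrow1}W_\beta(\omega)$ rather than merely solving an independently posed average-reward indifference equation. The subtlety is that both the threshold $\omega^*_\beta(m)$ and the subsidy level defining the index depend on $\beta$, so one must verify that the limit of the solutions equals the solution for the limiting problem. This is precisely where Lemma~\ref{lemma:thresholdPolicyavg} (existence of $\lim_{\beta\rightarrow1}\omega^*_\beta(m)$) and the value-boundedness of Lemma~\ref{lemma:valueboundedness} do the heavy lifting, ensuring the discounted value functions converge uniformly enough that the indifference point varies continuously. A secondary technical nuisance is checking continuity of $W(\omega)$ across the region boundaries $p^{(i)}_{01}$, $\omega^{(i)}_o$, $\Tc^1(p^{(i)}_{11})$, and $p^{(i)}_{11}$, which also confirms that the piecewise expressions genuinely trace out a monotone index and hence reconfirms indexability. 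Once the limit interchange is secured, the remaining work is the routine algebraic verification that the $\beta\rightarrow1$ limits of~\eqref{eqn:whittleindexcase1}–\eqref{eqn:whittleindexcase2} match~\eqref{eq:whittleindexavgP}–\eqref{eq:whittleindexavgN} in each region.
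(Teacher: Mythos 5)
Your proposal follows essentially the same route as the paper's own proof: indexability is obtained, exactly as in Appendix~F, by passing the monotonicity of $\omega^*_{\beta}(m)$ (Theorem~\ref{thm:indexability}) to the pointwise limit $\omega^*(m)=\lim_{\beta\rightarrow1}\omega^*_{\beta}(m)$ guaranteed by Lemma~\ref{lemma:thresholdPolicyavg}, and the index is identified as $W(\omega)=\lim_{\beta\rightarrow1}W_{\beta}(\omega)$, with Lemma~\ref{lemma:valueboundedness} (value-boundedness, enabling Dutta's theorem) and Lemma~\ref{lemma:thresholdPolicyavg} supplying precisely the limit-interchange justification you flag as the main obstacle. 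The paper likewise treats the region-by-region $\beta\rightarrow1$ algebra as routine (it explicitly skips the ``tedious and lengthy'' limit computation in Appendix~E), so your plan matches the published argument in both structure and key lemmas; your renewal-reward reading of the limiting expressions is a nice supplementary interpretation but not a departure.
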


\begin{proof}
See Appendix F.
\end{proof}

\vspace{1em}

The monotonicity and piecewise concave/convex properties of
Whittle's index under the discounted reward criterion given in
Corollary~\ref{cor:Wproperty} are preserved under the average reward
criterion. The only difference is that Whittle's index under the
discounted reward criterion is always strictly increasing with the
belief state while Whittle's index $W(\omega)$ under the average
reward criterion is a constant function of $\omega$ when
$\omega_o\le\omega<\Tc^1(p_{11})$ for a negatively correlated channel (see~\eqref{eq:whittleindexavgN}).

\subsection{The Performance of Whittle's Index
Policy}\label{subsec:performanceavg}

Similar to the case under the discounted reward criterion, Whittle's
index policy is optimal under the average reward criterion when the
constraint on the number of activated arms $K(t)~(t\ge1)$ is relaxed
to the following.
\[\mathbb{E}_{\pi}[\lim_{T\rightarrow\infty}\frac{1}{T}\Sigma_{t=1}^{T}K(t)]=K.\]
Let $\bar{J}(\Omega(1))$ denote the maximum expected average reward
that can be obtained under this relaxed constraint when the initial
belief vector is $\Omega(1)$. Based on the Lagrangian multiplier
theorem, we have~\cite{whittle}
\begin{eqnarray}\label{eqn:relaxavg}
\bar{J}=\inf_{m}\{\Sigma_{i=1}^NJ^{(i)}_m-m(N-K)\},
\end{eqnarray}
where $J^{(i)}_m$ is the value function of the single-armed bandit
process with subsidy $m$ that
corresponds to the $i$-th channel.

Let $J(\Omega(1))$ denote the maximum expected average reward of the
RMBP under the strict
constraint that $K(t)=K$ for all $t$. Obviously,
\[J(\Omega(1))\le\bar{J}.\]
$\bar{J}$ thus provides a performance benchmark for Whittle's index
policy under the strict constraint. To evaluate $\bar{J}$, we
consider the single-armed bandit with subsidy $m$ under the average
reward criterion. The value function $J_m$ and the average
passive time $D_m=\frac{d(J_m)}{dm}$
can be obtained in closed-form as shown in
Lemma~\ref{lemma:passivetimeavg} below.

\begin{lemma}\label{lemma:passivetimeavg}
The value function $J_m$ and $D_m$ can be obtained in closed-form as
given below, where $\omega^*(m)$ is the threshold of the optimal
policy. Furthermore, $D_m$ is piecewise constant and increasing with $m$.
\begin{eqnarray}
J_m=\left\{\begin{array}{ll} \omega_o, & \mbox{if}~
\omega^*(m)<\min\{p_{01},p_{11}\}\\
\frac{(1-p_{11})L(p_{01},\omega^*(m))m+\Tc^{L(p_{01},\omega^*(m))}(p_{01})}{(1-p_{11})(L(p_{01},\omega^*(m))+1)
+\Tc^{L(p_{01},\omega^*(m))}(p_{01})}, & \mbox{if}~
p_{01}\le\omega^*(m)<\omega_o\\
\frac{p_{01}m+p_{01}}{1+2p_{01}-\Tc^1(p_{11})}, & \mbox{if}~
p_{11}\le\omega^*(m)<\Tc^1(p_{11})\\
m, & \mbox{other cases}
\end{array} \right.\label{eq:closeformJ}
\end{eqnarray}
and
\begin{eqnarray}
D_m=\left\{\begin{array}{ll} 0, & \mbox{if}~
\omega^*(m)<\min\{p_{01},p_{11}\}\\
\frac{(1-p_{11})L(p_{01},\omega^*(m))}{(1-p_{11})(L(p_{01},\omega^*(m))+1)+\Tc^{L(p_{01},\omega^*(m))}(p_{01})},
& \mbox{if}~
p_{01}\le\omega^*(m)<\omega_o\\
\frac{p_{01}}{1+2p_{01}-\Tc^1(p_{11})}, & \mbox{if}~
p_{11}\le\omega^*(m)<\Tc^1(p_{11})\\
1, & \mbox{other cases}
\end{array} \right..\label{eq:closeformD}
\end{eqnarray}
\end{lemma}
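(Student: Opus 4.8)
The plan is to obtain both $J_m$ and $D_m$ as limits of their discounted counterparts as $\beta\rightarrow1$, reusing the closed-form expressions already established. By Lemma~\ref{lemma:valueboundedness} the single-armed bandit with subsidy $m$ satisfies the value-boundedness condition, so Dutta's Theorem applies and yields $J_m=\lim_{\beta\rightarrow1}(1-\beta)V_{\beta,m}(\omega)$, a quantity \emph{independent} of the initial belief $\omega$. I would therefore evaluate this limit at the convenient state $\omega=p_{01}$ for the positively correlated channel (and $\omega=p_{11}$ for the negatively correlated channel), since Lemma~\ref{lemma:valuefunction} supplies $V_{\beta,m}(p_{01})$ (resp. $V_{\beta,m}(p_{11})$) in closed form. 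For the passive time I would compute $D_m=\lim_{\beta\rightarrow1}(1-\beta)D_{\beta,m}(\omega)$ directly from the closed forms of Lemma~\ref{lemma:closeformD}; because $D_{\beta,m}=dV_{\beta,m}/dm$, this is consistent with $D_m=dJ_m/dm$, which serves as a sanity check and later supplies monotonicity.

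The computation proceeds by a case analysis on the location of the limiting threshold $\omega^*(m)=\lim_{\beta\rightarrow1}\omega^*_{\beta}(m)$, whose existence and threshold role are guaranteed by Lemma~\ref{lemma:thresholdPolicyavg}. When $\omega^*(m)<\min\{p_{01},p_{11}\}$ the arm is always active, the observed state is stationary, and the limit gives $J_m=\omega_o$, $D_m=0$; when $\omega^*(m)$ is large enough that the arm is always passive, the per-slot reward is the subsidy, so $J_m=m$, $D_m=1$. The two nontrivial branches are $p_{01}\le\omega^*(m)<\omega_o$ (positively correlated) and $p_{11}\le\omega^*(m)<\Tc^1(p_{11})$ (negatively correlated). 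For the first, I substitute the middle branch of $V_{\beta,m}(p_{01})$ from Lemma~\ref{lemma:valuefunction}, multiply by $(1-\beta)$, and use the elementary limit $\frac{1-\beta^{L}}{1-\beta}\rightarrow L$ with $L=L(p_{01},\omega^*(m))$; the expression collapses to $\frac{(1-p_{11})Lm+\Tc^{L}(p_{01})}{(1-p_{11})(L+1)+\Tc^{L}(p_{01})}$, exactly the claimed $J_m$, and the same manipulation on $D_{\beta,m}(p_{01})$ produces the claimed $D_m$. The negatively correlated branch is identical using $V_{\beta,m}(p_{11})$ and $D_{\beta,m}(p_{11})$, where the relevant crossing time $L(p_{11},\omega^*(m))$ equals $1$, so the limit yields $J_m=\frac{p_{01}m+p_{01}}{1+2p_{01}-\Tc^1(p_{11})}$ and $D_m=\frac{p_{01}}{1+2p_{01}-\Tc^1(p_{11})}$.

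The step needing the most care is the interchange of the limit with the case boundaries: I must check that $\omega^*_{\beta}(m)$, which converges to $\omega^*(m)$ by Lemma~\ref{lemma:thresholdPolicyavg}, eventually lies in the same region as $\omega^*(m)$, so that the correct branch of Lemmas~\ref{lemma:valuefunction} and~\ref{lemma:closeformD} is used for all $\beta$ near $1$, and that the integer crossing time $L(p_{01},\omega^*_{\beta}(m))$ stabilizes at $L(p_{01},\omega^*(m))$. Both follow from the monotone, piecewise-constant dependence of $L$ on its second argument together with the convergence of the threshold, except at the isolated subsidies $m$ for which $\omega^*(m)$ sits exactly on a region boundary; those are resolved by the passive-set convention (the right derivative) already fixed earlier. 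Finally, monotonicity of $D_m$ in $m$ follows by passing Lemma~\ref{lemma:passivetimeform} to the limit: each $V_{\beta,m}$ is convex in $m$, hence so is the pointwise limit $J_m$, and thus $D_m=dJ_m/dm$ is nondecreasing. Piecewise constancy is immediate from the closed forms, since $D_m$ depends on $m$ only through the step function $L(p_{01},\omega^*(m))$ of the increasing threshold $\omega^*(m)$ --- countably many values when $p_{11}\ge p_{01}$, and only the finitely many cases listed above when $p_{11}<p_{01}$.
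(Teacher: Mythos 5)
Your proposal is correct and follows essentially the same route as the paper's proof: invoke value-boundedness and Dutta's theorem to get $J_m=\lim_{\beta\rightarrow1}(1-\beta)V_{\beta,m}(\omega)$, evaluate this limit on the closed forms of Lemma~\ref{lemma:valuefunction} (your branch-by-branch computations, including the $\frac{1-\beta^{L}}{1-\beta}\rightarrow L$ limit and the $L(p_{11},\omega^*(m))=1$ identification, all check out against \eqref{eq:closeformJ} and \eqref{eq:closeformD}), obtain $D_m$ via differentiation in $m$, and derive monotonicity from convexity in $m$. In fact you supply details the paper elides --- the stabilization of the crossing time and region membership of $\omega^*_{\beta}(m)$ as $\beta\rightarrow1$, and obtaining convexity of $J_m$ as a pointwise limit of the convex functions $(1-\beta)V_{\beta,m}$ rather than rerunning the Lemma~\ref{lemma:passivetimeform} argument --- which are harmless, correct refinements of the same argument.
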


\begin{proof}
Under the value-boundedness condition as shown in
Sec.~\ref{subsec:vpavg}, we have, according to Dutta's theorem,
\[
J_m=\lim_{\beta_k\rightarrow1}(1-\beta_k)V_{\beta_k}(\omega,m),
\]
which leads to \eqref{eq:closeformJ} directly. The closed-form
expression for $D_m$ can be obtained from the derivative of $J_m$
with respect to $m$. The proof that $D_m$ is increasing with $m$ is
similar to that given in Lemma~\ref{lemma:passivetimeform}.
\end{proof}

Based on the closed-form $D_m$ given in
Lemma~\ref{lemma:passivetimeavg}, we can obtain the subsidy $m^*$
that achieves the infimum in~\eqref{eqn:relaxavg}. Specifically, the
subsidy $m^*$ that achieves the infimum in~\eqref{eqn:relaxavg} is
the supremum value of $m\in[0,1]$ satisfying $\Sigma_{i=1}^ND^{m,i}\le
N-K$. After obtaining $m^*$, it is easy to calculate the infimum
according to the closed-form $J_m$ given in
Lemma~\ref{lemma:passivetimeavg}. With minor changes, the algorithm
in Sec.~\ref{subsec:performancediscount} can be applied to evaluate
the upper bound $\bar{J}$. We notice that the initial belief will
not be considered in the algorithm, which leads to a shorter running
time.

Simulation results similarly to Fig.~\ref{fig:perform} have been observed,
demonstrating
the near-optimal performance of Whittle's index policy under the
average reward criterion .


\section{Whittle's Index Policy for Stochastically Identical
Channels}\label{sec:identical}

Based on the equivalency between Whittle's index policy and the myopic
policy for stochastically identical arms, we can analyze Whittle's index policy
by focusing on the myopic policy which has a much simpler index form.
In this section, we establish the semi-universal structure and study the
optimality of Whittle's index policy for stochastically identical arms.

\subsection{The Structure of Whittle's Index Policy}\label{subsec:structure}

The implementation of Whittle's index policy can be described with a
queue structure. Specifically, all $N$ channels are ordered in a
queue, and in each slot, those $K$ channels at the head of the queue
are sensed. Based on
the observations, channels are reordered at the end of each slot according
to the following simple rules.

When $p_{11}\ge p_{01}$, the channels observed in state $1$ will
stay at the head of the queue while the channels observed in state
$0$ will be moved to the end of the queue (see
Fig.~\ref{fig:Pstructure1}).

When $p_{11}<p_{01}$, the channels
observed in state $0$ will stay at the head of the queue while the
channels observed in state $1$ will be moved to the end of the queue.
The order of the unobserved channels are reversed (see Fig.~\ref{fig:Nstructure1}).

\begin{figure}[htb]
\begin{minipage}{4in}
\hspace{-2em}\leftline{
\begin{psfrags}
\psfrag{s}[l]{Sense} \psfrag{i1}[c]{{\footnotesize
$S_1(t)=1$}}\psfrag{i3}[c]{{\footnotesize$S_3(t)=1$}} \psfrag{b}[c]{
{\footnotesize$S_2(t)=0$}} \psfrag{t1}[c]{$\Kc(t)$}
\psfrag{t2}[c]{$\Kc(t+1)$}
\scalefig{0.8}\epsfbox{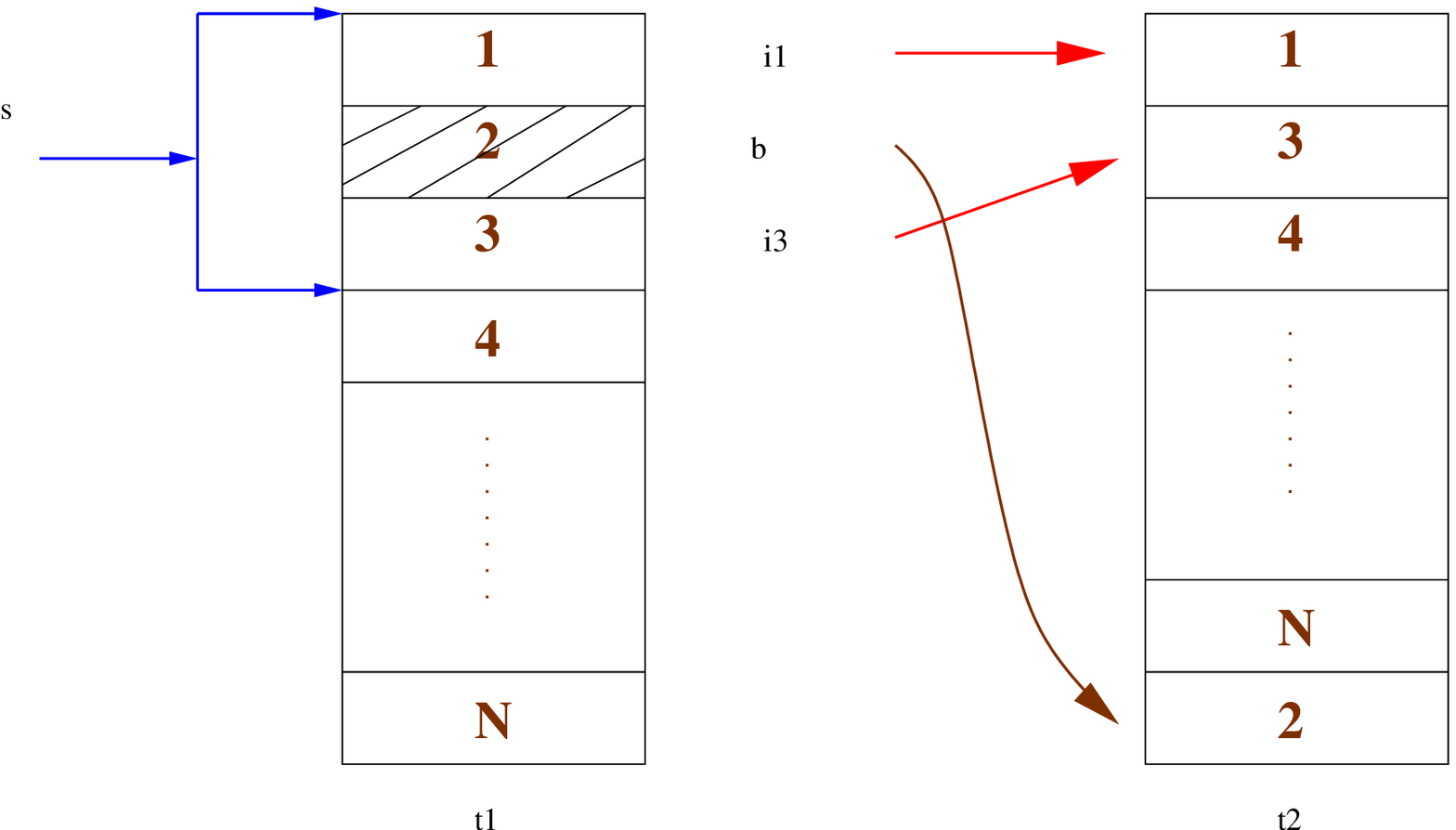}
\end{psfrags}}\caption{The structure of Whittle's index policy ($p_{11}\ge p_{01}$)}\label{fig:Pstructure1}
\end{minipage}\hspace{-4em}
\begin{minipage}{4in}
\hspace{-3em}\centerline{
\begin{psfrags}
\psfrag{s}[l]{Sense}
\psfrag{i1}[c]{{\footnotesize$S_1(t)=1$}}\psfrag{i3}[c]{{\footnotesize$S_3(t)=1$}}
\psfrag{b}[c]{ {\footnotesize$S_2(t)=0$}} \psfrag{t1}[c]{$\Kc(t)$}
\psfrag{t2}[c]{$\Kc(t+1)$}\psfrag{f}[l]{Flip}
\scalefig{0.9}\epsfbox{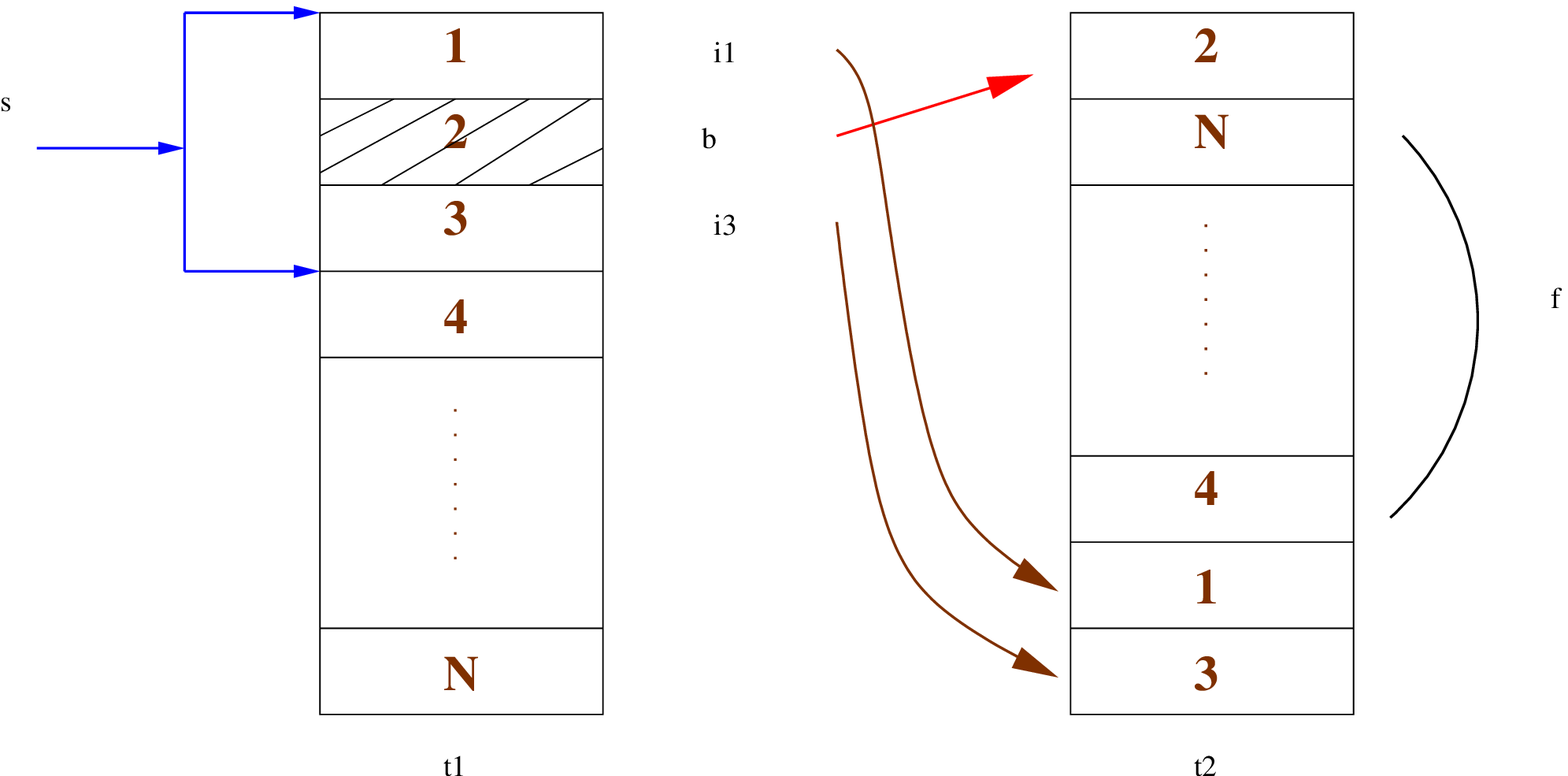}
\end{psfrags}}\caption{The structure of Whittle's index policy ($p_{11}<p_{01}$)}\label{fig:Nstructure1}
\end{minipage}
\end{figure}

The initial channel ordering $\Kc(1)$ is
determined by the initial belief vector as given below.
\begin{equation}
\omega_{n_1}(1)\ge\omega_{n_2}(1)\ge\cdots\ge\omega_{n_N}(1)
~\Longrightarrow ~\Kc(1)=(n_1,n_2,\cdots,n_N). \label{eq:Kc}
\end{equation}

See Appendix G
for the proof of the structure of Whittle's index policy.


The advantage of this structure of Whittle's index policy is twofold.
First, it demonstrates the simplicity of Whittle's index policy:
channel selection is reduced to maintaining a simple queue structure that
requires no computation and little memory. Second, it
shows that Whittle's index policy has a semi-universal structure;
it can be implemented without knowing the channel transition
probabilities except the order of $p_{11}$ and $p_{01}$. As a result, Whittle's index
policy is robust against model mismatch and automatically tracks variations in the channel model provided
that the order of $p_{11}$ and $p_{01}$ remains unchanged. As show
in Fig.~\ref{fig:myopictraking}, the transition probabilities change
abruptly in the fifth slot, which corresponds to an increase in the occurrence of good
channel state in the system. From this figure, we can
observe, from the change in the throughput increasing rate, that
Whittle's index policy effectively tracks the model variations.

\begin{figure}[h]
\centerline{
\begin{psfrags}
\scalefig{0.6}\epsfbox{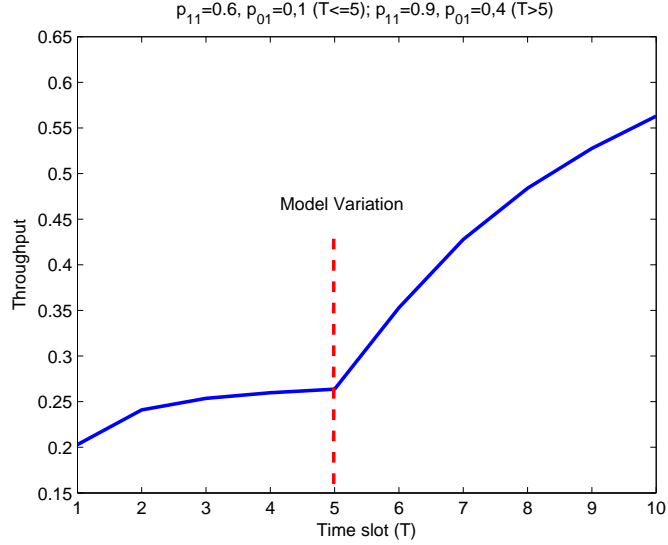}
\end{psfrags}}
\caption{Tracking the change in channel transition probabilities
occurred at $t=6$.}\label{fig:myopictraking}
\end{figure}

\subsection{Optimality and Approximation Factor of Whittle's Index Policy}\label{subsec:worstcase}
Based on the simple structure of Whittle's index policy for
stochastically identical channels, we can obtain a lower bound of
its performance. Combining this lower bound and the upper bound
shown in Sec.~\ref{subsec:performanceavg}, we further obtain the
approximation factor of the performance by Whittle's index policy,
which are independent of channel parameters. Recall that
$J$ denote the average reward achieved by the optimal
policy. Let $J_w$ denote the
average reward achieved by Whittle's
index policy,

\begin{theorem}\label{thm:performanceidentical} {\it Lower and Upper Bounds of The Performance of Whittle's Index Policy}
\begin{eqnarray}\label{eqn:optimalbound}
\frac{K\Tc^{\lfloor\frac{N}{K}\rfloor-1}(p_{01})}{1-p_{11}+\Tc^{\lfloor\frac{N}{K}\rfloor-1}(p_{01})}\le
J_w\le J\le\min\{\frac{K\omega_o}{1-p_{11}+\omega_o},~\omega_oN\}
~~~~~~&\mbox{if}~~p_{11}\ge p_{01}\\\label{eqn:optimalbound1}
\frac{Kp_{01}}{1-\Tc^{2\lfloor\frac{N}{K}\rfloor-2}(p_{11})+p_{01}}\le
J_w\le
J\le\min\{\frac{Kp_{01}}{1-\Tc^1(p_{11})+p_{01}},~\omega_oN\}&\mbox{if}~~p_{11}<p_{01}
\end{eqnarray}
\end{theorem}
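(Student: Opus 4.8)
The plan is to prove the three-term chain by treating the trivial middle inequality, the upper bound on $J$, and the lower bound on $J_w$ separately. The middle inequality $J_w\le J$ is immediate, since $J$ is the value of the optimal policy under the strict constraint $K(t)=K$ and Whittle's index policy is one admissible policy under that constraint.

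For the upper bound on $J$, I would invoke the Lagrangian relaxation of Sec.~\ref{subsec:performanceavg}: since $J\le\bar J$ and, for stochastically identical arms, \eqref{eqn:relaxavg} reads $\bar J=\inf_m\{NJ_m-m(N-K)\}$, any fixed choice of $m$ yields an upper bound on $\bar J$. I would evaluate at $m_0=W(\omega_o)$, which by definition of the index is the subsidy for which the threshold equals $\omega_o$, i.e. $\omega^*(m_0)=\omega_o$. For $p_{11}\ge p_{01}$ this threshold falls in the ``other cases'' branch of \eqref{eq:closeformJ}, giving $J_{m_0}=m_0$ directly; for $p_{11}<p_{01}$ it falls in the ``$p_{11}\le\omega^*<\Tc^1(p_{11})$'' branch, and substituting $m_0$ into \eqref{eq:closeformJ} again yields $J_{m_0}=m_0$ after simplification. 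Hence $\bar J\le NJ_{m_0}-m_0(N-K)=Km_0=KW(\omega_o)$, which is $\frac{K\omega_o}{1-p_{11}+\omega_o}$ by \eqref{eq:whittleindexavgP} when $p_{11}\ge p_{01}$ and $\frac{Kp_{01}}{1-\Tc^1(p_{11})+p_{01}}$ by \eqref{eq:whittleindexavgN} when $p_{11}<p_{01}$. The competing bound $\omega_oN$ is the trivial observation that the average reward cannot exceed the long-run expected number of good channels $N\omega_o$; taking the minimum gives the stated upper bound.

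For the lower bound on $J_w$ I would use the equivalence of Whittle's index policy with the myopic policy (Corollary~\ref{cor:Wproperty}) and its queue description in Sec.~\ref{subsec:structure}, then run a renewal-reward (time-averaging) argument over the whole system. The key structural input is a bound on the minimum time a channel is unobserved between successive senses: once a channel is sent to the tail, it cannot return to the sensed head-block of size $K$ in fewer than $\lfloor N/K\rfloor-1$ slots, so by the monotone convergence of $\Tc^k$ in Lemma~\ref{lemma:ksteptransition} its belief $\omega_c$ when re-sensed is at least $x\defeq\Tc^{\lfloor N/K\rfloor-1}(p_{01})$ in the positive case. For $p_{11}\ge p_{01}$ I would partition time into per-channel cycles, each a run of consecutive good senses terminated by one bad sense followed by the waiting period; since exactly one sense is ``wasted'' per cycle, the system identity \emph{(total reward)}$=$\emph{(total senses)}$-$\emph{(number of cycles)}$=KT-n_c$ holds over a horizon $T$, with $n_c$ the number of cycles. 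Because the expected number of senses per cycle is $1+\omega_c/(1-p_{11})\ge 1+x/(1-p_{11})$ (a good observation recurs with probability $p_{11}$ and $\omega_c\ge x$), renewal-reward gives $n_c\le KT(1-p_{11})/(1-p_{11}+x)$, whence $J_w=\lim_T(KT-n_c)/T\ge Kx/(1-p_{11}+x)$. The negative case $p_{11}<p_{01}$ is dual: a good sense now ends a run, each cycle contributes exactly one unit of reward, so \emph{(total reward)}$=n_c$, and with expected senses per cycle $1+(1-\omega_c)/p_{01}$ the same accounting yields $J_w\ge Kp_{01}/(1-y+p_{01})$ once we know $\omega_c\ge y\defeq\Tc^{2\lfloor N/K\rfloor-2}(p_{11})$.

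The main obstacle is the lower bound, specifically pinning down the worst-case re-sensing belief in the negative case. For $p_{11}<p_{01}$ the queue reverses the order of the unobserved channels every slot (Fig.~\ref{fig:Nstructure1}), so a channel that just turned good and was moved to the tail oscillates as it migrates forward; I expect the reversal to double the effective delay, producing the exponent $2\lfloor N/K\rfloor-2$, with the relevant belief being the smallest \emph{even} iterate $\Tc^{2\lfloor N/K\rfloor-2}(p_{11})$, which lies below $\omega_o$ and is approached monotonically from below by Lemma~\ref{lemma:ksteptransition}. Making the waiting-time count precise under the flip, and confirming that every re-sense — whether it occurs after an even or an odd number of unobserved slots — gives belief at least $\Tc^{2\lfloor N/K\rfloor-2}(p_{11})$, is the delicate combinatorial step. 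A secondary technical point is justifying the renewal-reward limits rigorously (existence of the time averages and negligibility of the initial transient and of the boundary cycle at the horizon $T$), which follows from the finiteness of the induced Markov chain on queue configurations but should be stated carefully.
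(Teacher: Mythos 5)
Your proposal is correct and follows essentially the same route as the paper's Appendix H: the upper bound by evaluating the Lagrangian relaxation $\bar J=\inf_m\{NJ_m-m(N-K)\}$ at $m_0=W(\omega_o)$ (equivalently the paper's $m\in\{W(\omega_o),0\}$) and the lower bound by a renewal/transmission-period analysis of the queue structure, with the re-sensing belief bounded below by $\Tc^{\lfloor N/K\rfloor-1}(p_{01})$, respectively $\Tc^{2\lfloor N/K\rfloor-2}(p_{11})$. Even the step you flag as delicate --- that in the negative case an odd unobserved duration gives belief above $\omega_o$ while an even duration must be at least $2\lfloor N/K\rfloor-2$, so every re-sense belief dominates the smallest even iterate --- is exactly the paper's (equally terse) odd/even argument, so your proposal matches the paper's proof in both structure and level of rigor.
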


\begin{proof}
The upper bound of $J$ is obtained from the upper bound of the
optimal performance for generally non-identical channels as given
in~\eqref{eqn:relaxavg}. The lower bound of $J_w$ is obtained from the
structure of Whittle's index policy. See Appendix H for the complete
proof.
\end{proof}

\vspace{1em}

\begin{corollary}\label{cor:appfac}
Let $\eta=\frac{J_w}{J}$ be the approximation factor defined as the
ratio of the performance by Whittle's index policy to the optimal
performance. We have\\

\begin{minipage}{4in}
\hspace{2em}Positively correlated channels\\[-0.5em]
\[\hspace{-6em}\left\{\begin{array}{ll} \eta=1,~~\mbox{for}~K=1,N-1,N\\
\eta\ge\frac{K}{N},~~\mbox{o.w.}\end{array}\right.,\]
\end{minipage}
\begin{minipage}{4in}
\hspace{-2em}Negatively correlated channels\\[-0.5em]
\[\hspace{-14em}\left\{\begin{array}{ll} \eta=1,~~\mbox{for}~K=N-1,N\\
\eta\ge\max\{\frac{1}{2},\frac{K}{N}\},~~\mbox{o.w.}\end{array}\right..\]
\end{minipage}\\[0.5em]
\end{corollary}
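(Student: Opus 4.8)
The plan is to read the corollary as two separate claims: the constant-factor guarantees ($\eta\ge K/N$ in both cases, plus $\eta\ge\tfrac12$ for negatively correlated channels) and the exact-optimality cases ($\eta=1$). The factor guarantees I would derive purely by arithmetic from Theorem~\ref{thm:performanceidentical}, using $\eta=J_w/J\ge (\text{lower bound on }J_w)/(\text{upper bound on }J)$. The $\eta=1$ cases, however, do \emph{not} follow from merely dividing the two bounds of the theorem (the bounds do not coincide there), so they will require dedicated optimality arguments.

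For $\eta\ge K/N$ I would use only the crude upper bound $J\le\omega_o N$, so that $\eta\ge L/(\omega_o N)$ where $L$ is the lower bound on $J_w$; it then suffices to show $L\ge K\omega_o$. In the positively correlated case this reduces to $\frac{x}{1-p_{11}+x}\ge\omega_o$ with $x=\Tc^{\lfloor N/K\rfloor-1}(p_{01})$. Since $t\mapsto t/(1-p_{11}+t)$ is increasing and, by Lemma~\ref{lemma:ksteptransition}, $\Tc^k(p_{01})$ increases monotonically from $p_{01}$ up to $\omega_o$, we get $x\ge p_{01}$ and hence the bound (recall $\omega_o=p_{01}/(1-p_{11}+p_{01})$). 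In the negatively correlated case the analogous reduction gives $\Tc^{2\lfloor N/K\rfloor-2}(p_{11})\ge p_{11}$, which holds because the even-step subsequence $\Tc^{2k}(p_{11})$ increases from $p_{11}$ toward $\omega_o$ (again Lemma~\ref{lemma:ksteptransition}, using $p_{11}<\omega_o$).

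For the sharper bound $\eta\ge\tfrac12$ in the negatively correlated case I would instead use the tighter upper bound $J\le \frac{Kp_{01}}{1-\Tc^1(p_{11})+p_{01}}$, which cancels the factor $Kp_{01}$ and leaves $\eta\ge \frac{1-\Tc^1(p_{11})+p_{01}}{1-\Tc^{2\lfloor N/K\rfloor-2}(p_{11})+p_{01}}$. This is at least $\tfrac12$ precisely when $1+p_{01}+\Tc^{2\lfloor N/K\rfloor-2}(p_{11})\ge 2\,\Tc^1(p_{11})$, which follows from the two elementary facts $\Tc^1(p_{11})=\Tc(p_{11})\le p_{01}$ (a convex combination of $p_{11}<p_{01}$) and $\Tc^{2\lfloor N/K\rfloor-2}(p_{11})\ge 0$, since then $1+p_{01}\ge 2p_{01}\ge 2\,\Tc^1(p_{11})$.

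The exact-optimality cases require more than the theorem. The case $K=N$ is immediate, since sensing all channels is the only feasible action and is therefore optimal. The case $K=1$ with $p_{11}\ge p_{01}$ I would obtain by invoking the known optimality of the myopic policy for single-channel sensing~\cite{Ahmad&etal,Zhao&etal:08TWC} together with the equivalence between Whittle's index policy and the myopic policy for stochastically identical arms (Corollary~\ref{cor:Wproperty}). The genuinely delicate case, and the one I expect to be the main obstacle, is $K=N-1$ for both correlation signs: here the generic lower and upper bounds of Theorem~\ref{thm:performanceidentical} do not meet, so $\eta=1$ cannot be read off from them. For this case the plan is to prove directly that leaving unobserved the single channel with the smallest belief is optimal, exploiting the fact that with $K=N-1$ all but one channel is observed in every slot, so the belief vector is almost completely known and the relevant sufficient statistic is low-dimensional. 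I would carry this out by an interchange/coupling argument showing that any policy leaving out a larger-belief channel can be modified to leave out the smallest-belief channel without loss, mirroring the myopic-optimality arguments of~\cite{Ahmad&etal,Zhao&etal:08TWC}; establishing this interchange cleanly and simultaneously for the positively and negatively correlated cases is where the real work lies.
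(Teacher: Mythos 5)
Your factor bounds are correct and follow essentially the paper's own route. The paper likewise obtains $\eta\ge\frac{K}{N}$ by showing $J_w\ge K\omega_o$ (exactly your argument: monotonicity of $t\mapsto\frac{t}{1-p_{11}+t}$ together with $\Tc^{\lfloor N/K\rfloor-1}(p_{01})\ge p_{01}$ in the positive case, and $\Tc^{2\lfloor N/K\rfloor-2}(p_{11})\ge p_{11}$ from the increasing even-step subsequence in the negative case) against the crude bound $J\le\omega_o N$. For $\eta\ge\frac12$ the paper also divides the two tight bounds of Theorem~\ref{thm:performanceidentical}, but then computes the resulting ratio exactly, $\frac{1-\Tc^1(p_{11})+p_{01}}{1-p_{11}+p_{01}}=1+\frac{(p_{11}-p_{01})(1-p_{11})}{1-(p_{11}-p_{01})}\ge 1-\frac{(1-p_{11})^2}{2-p_{11}}\ge\frac12$, whereas your rearrangement $1+p_{01}+\Tc^{2\lfloor N/K\rfloor-2}(p_{11})\ge 2\,\Tc^1(p_{11})$, settled by $\Tc^1(p_{11})\le p_{01}$ and nonnegativity, is marginally more elementary; both are valid. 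Your treatment of $K=N$ (trivial) and of $K=1$ with $p_{11}\ge p_{01}$ (myopic optimality from~\cite{Zhao&etal:08TWC,Ahmad&etal} plus the Whittle--myopic equivalence) matches the paper verbatim.

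The genuine gap is the case $K=N-1$, which you correctly identify as not following from the theorem's bounds but then leave as an unexecuted plan. Your proposed interchange/coupling argument is the hard road: the exchange arguments of~\cite{Ahmad&etal} are tailored to positively correlated channels, and carrying out such an argument uniformly for both correlation signs is precisely the nontrivial work you defer. The paper sidesteps all of it with a short genie-aided argument you are missing: augment the system so that the user learns the states $S_i(t)$ of \emph{all} $N$ channels at the end of every slot. In the genie system the belief evolution is independent of the action, so the greedy (myopic, hence Whittle's) selection is optimal slot by slot, and the genie value upper-bounds the original optimum. With $K=N-1$ only one channel is unobserved per slot, and the channel ordering produced by Whittle's index policy in the original system coincides in every slot with the ordering in the genie system; hence Whittle's index policy attains the genie upper bound in the original system and is optimal, for both $p_{11}\ge p_{01}$ and $p_{11}<p_{01}$ simultaneously. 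Without this observation (or a completed interchange proof), your $\eta=1$ claim for $K=N-1$ remains unproven.
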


\begin{proof}
See Appendix I.
\end{proof}

Fig.~\ref{fig:consfac} illustrates the approximation factors of
Whittle's index policy for both positively correlated and negatively
correlated channels. We notice that the approximation factor
approaches to $1$ as $K$ increases. For negatively correlated
channels, Whittle's index policy achieves at least half the optimal
performance. For positively correlated channels, the approximation
factor can be further improved under certain conditions on the transition
probabilities. Specifically, we have
$\eta\ge 1-p_{11}+\omega_o$.

From Corollary~\ref{cor:appfac}, Whittle's index policy is optimal
when $K=1~\mbox{(for positively correlated
channels)}$ and $K=N-1$. The optimality for $K=N$ is trivial.
We point out that for a general $K$, numerical examples have shown
that actions given by Whittle's index policy match with the optimal
actions for randomly generated sample paths, suggesting the optimality of
Whittle's
index policy.

\begin{figure}
\hspace{-3em}\begin{minipage}{4in} \centerline{
\begin{psfrags}
\scalefig{0.9}\epsfbox{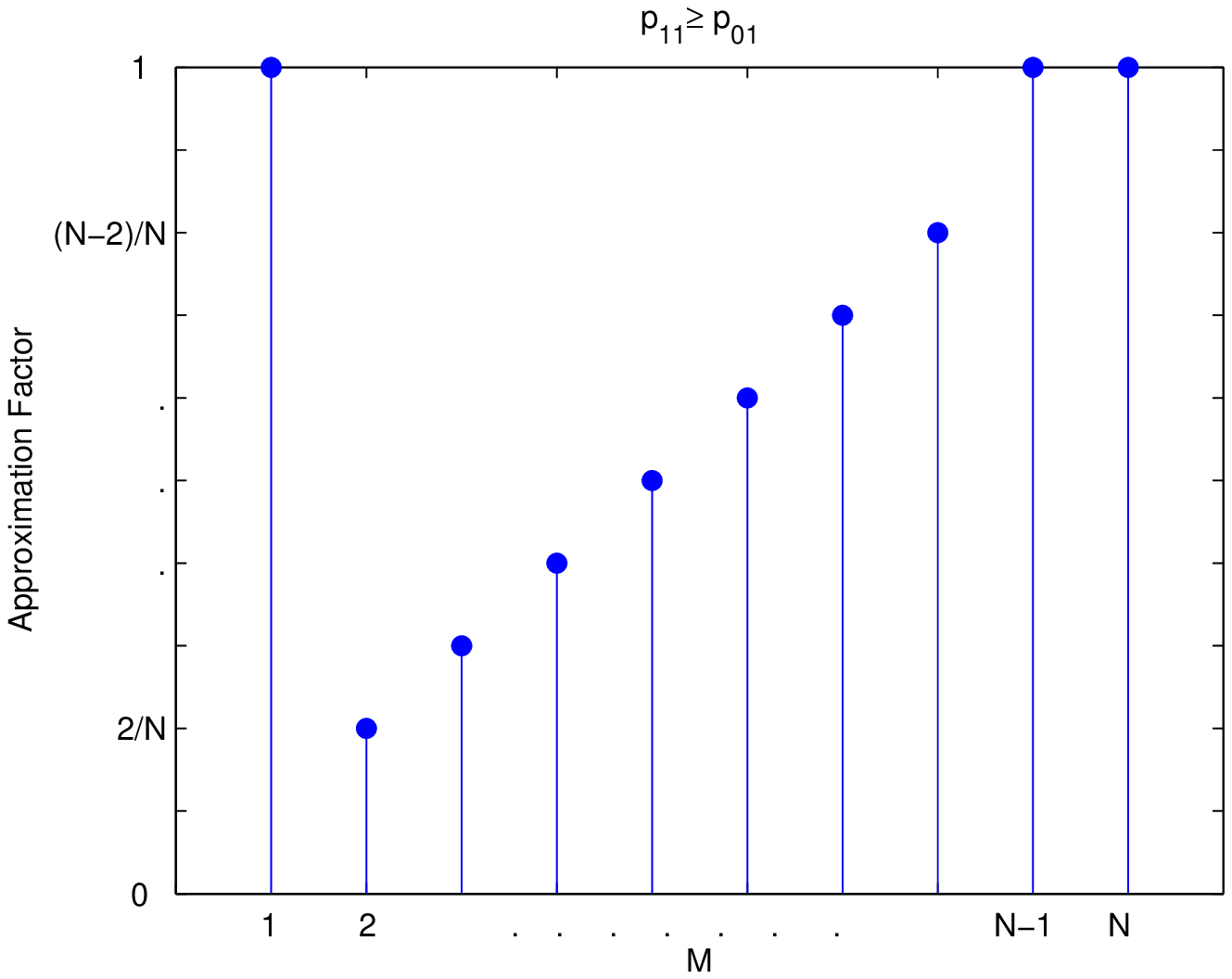}
\end{psfrags}}
\end{minipage}
\hspace{-1em}
\begin{minipage}{4in}
\hspace{-1em}\centerline{
\begin{psfrags}
\scalefig{0.9}\epsfbox{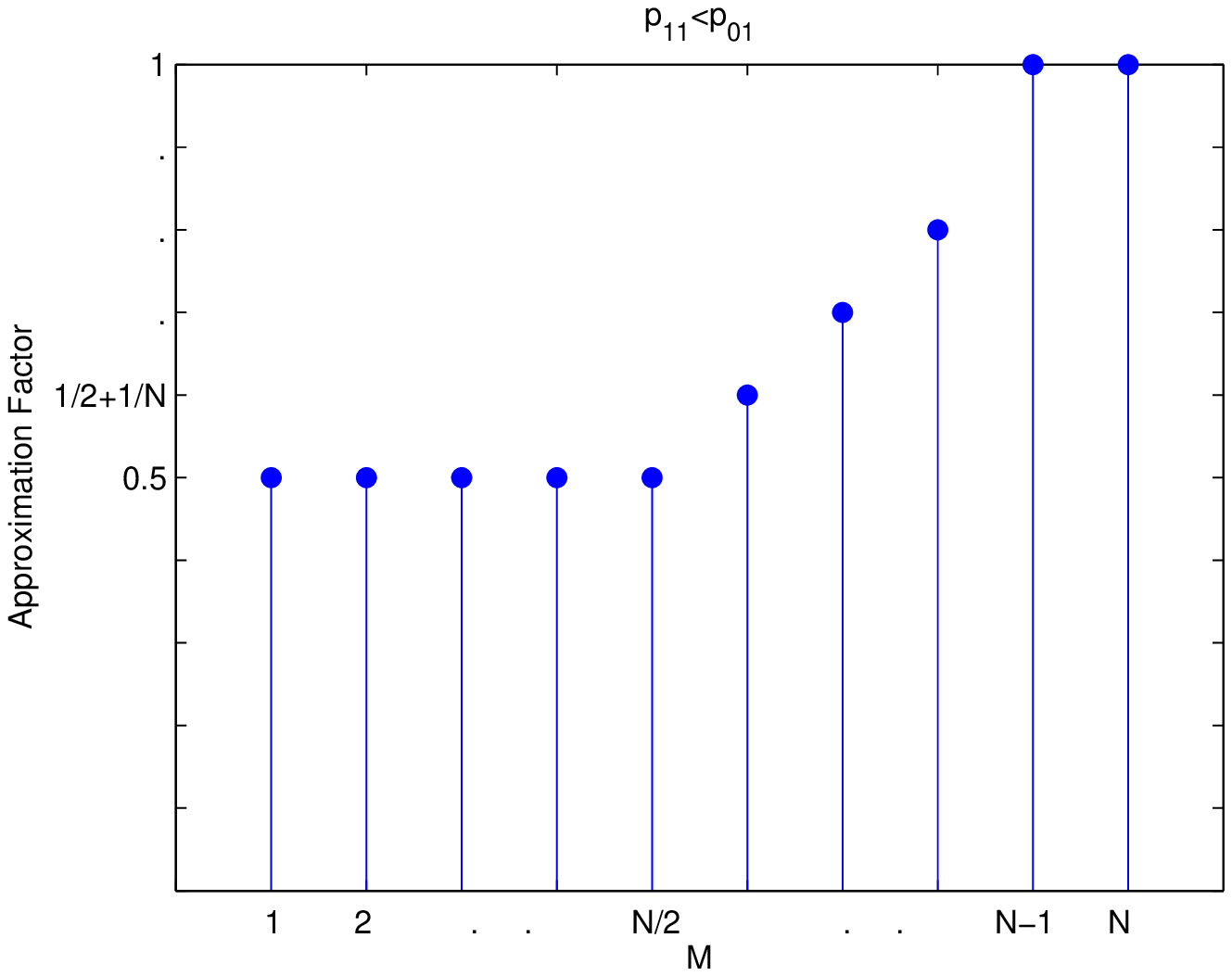}
\end{psfrags}}
\end{minipage}\caption{The approximation factor of Whittle's index
policy.}\label{fig:consfac}
\end{figure}

\section{Conclusion} \label{sec:conclusion}

In this paper, we have formulated the multi-channel opportunistic
access problem as a restless multi-armed bandit process. We
established the indexability and obtained Whittle's index in
closed-form under both discounted and average reward criteria. We
developed efficient algorithms for computing an upper bound of the
optimal policy, which is the optimal performance under the relaxed
constraint on the average number of channels that can be sensed
simultaneously. When channels are stochastically identical, we have
shown that Whittle's index policy coincides with the myopic policy.
Based on this equivalency, we have established the semi-universal
structure and the optimality of Whittle index policy under certain
conditions.


\section*{Appendix A: Proof of Lemma \ref{lemma:valuefunction}}\label{sec:A}

From~\eqref{eq:vcloseform}, we have
\begin{eqnarray}\label{eq:vp01closeform}
V_{\beta,m}(p_{01})=
\frac{1-\beta^{L(p_{01},\omega_{\beta}^*(m))}}{1-\beta}m+\beta^{L(p_{01},\omega_{\beta}^*(m))}V_{\beta,m}
(\Tc^{L(p_{01},\omega_{\beta}^*(m))}(p_{01});u=1),\\\label{eq:vp11closeform}
V_{\beta,m}(p_{11})=
\frac{1-\beta^{L(p_{11},\omega_{\beta}^*(m))}}{1-\beta}m+\beta^{L(p_{11},\omega_{\beta}^*(m))}V_{\beta,m}
(\Tc^{L(p_{11},\omega_{\beta}^*(m))}(p_{01});u=1).
\end{eqnarray}
As shown in~\eqref{eqn:value_a},
$V_{\beta,m}(\Tc^{L(\omega,\omega_{\beta}^*(m))}(\omega);u=1)$ is a
function of $V_{\beta,m}(p_{01})$ and $V_{\beta,m}(p_{11})$ for any
$\omega\in[0,1]$. We thus have two
equations~\eqref{eq:vp01closeform} and~\eqref{eq:vp11closeform} for
two unknowns $V_{\beta,m}(p_{01})$ and $V_{\beta,m}(p_{11})$
provided that we can obtain the two crossing times
$L(p_{01},\omega^*_{\beta}(m))$ and $L(p_{11},\omega^*_{\beta}(m))$.

From~\eqref{eqn:Lpositive} and~\eqref{eqn:Lnegative}, we can obtain
these crossing times by considering different regions that the
threshold $\omega^*_{\beta}(m)$ may lie in (see
Fig.~\ref{fig:activetime} and Fig.~\ref{fig:activetime1}). We can
thus solve for $V_{\beta,m}(p_{01})$ and $V_{\beta,m}(p_{11})$
from~\eqref{eq:vp01closeform} and~\eqref{eq:vp11closeform} by
considering each region within which both crossing times
$L(p_{01},\omega^*_{\beta}(m))$ and $L(p_{11},\omega^*_{\beta}(m))$
are constant.

\begin{figure}[h]
\centerline{
\begin{psfrags}
\psfrag{0}[c]{$0$}\psfrag{1}[c]{$1$}
\psfrag{p0}[c]{$p_{01}$}\psfrag{p1}[c]{$p_{11}$}
\psfrag{w0}[c]{$\omega_o$} \psfrag{a}[c]{$L=0$}\psfrag{p}[c]{
$L=\infty$} \psfrag{al}[c]{~~~\small
$L=\lfloor\log_{p_{11}-p_{01}}^{\frac{p_{01}-\omega_{\beta}^*(m)(1-p_{11}+p_{01})}{p_{01}(p_{11}-p_{01})}}\rfloor+1$
}\psfrag{sp0}[c]{$L=L(p_{01},\omega^*_{\beta}(m))$~~~~~~~~}
\psfrag{sp1}[c]{ $L=L(p_{11},\omega^*_{\beta}(m))$~~~~~~~~~}
\psfrag{w}[l]{$\omega^*_{\beta}(m)$}
\scalefig{0.95}\epsfbox{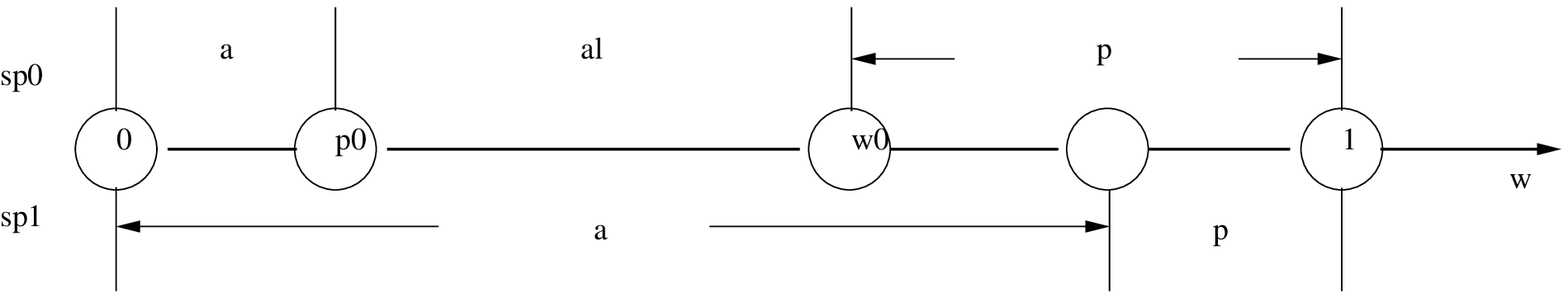}
\end{psfrags}}\caption{The threshold crossing time for different regions of
$\omega^*_{\beta}(m)$ when $p_{11}\ge p_{01}$ (the top partition is
for $L(p_{01},\omega^*_{\beta}(m))$, the bottom for
$L(p_{11},\omega^*_{\beta}(m))$).}\label{fig:activetime}
\end{figure}
\begin{figure}[h]
\centerline{
\begin{psfrags}
\psfrag{0}[c]{$0$}\psfrag{1}[c]{$1$}
\psfrag{p0}[c]{$p_{01}$}\psfrag{p1}[c]{$p_{11}$}
\psfrag{tp1}[c]{\footnotesize $\Tc(p_{11})$}
\psfrag{a}[c]{~~$L=0$}\psfrag{p}[c]{$L=\infty$}
\psfrag{al}[c]{~~~$L=1$
}\psfrag{sp0}[c]{$L=L(p_{11},\omega^*_{\beta}(m))$~~~~~~~~}
\psfrag{sp1}[c]{$L=L(p_{01},\omega^*_{\beta}(m))$~~~~~~~~}
\psfrag{w}[l]{$\omega^*_{\beta}(m)$}
\scalefig{0.95}\epsfbox{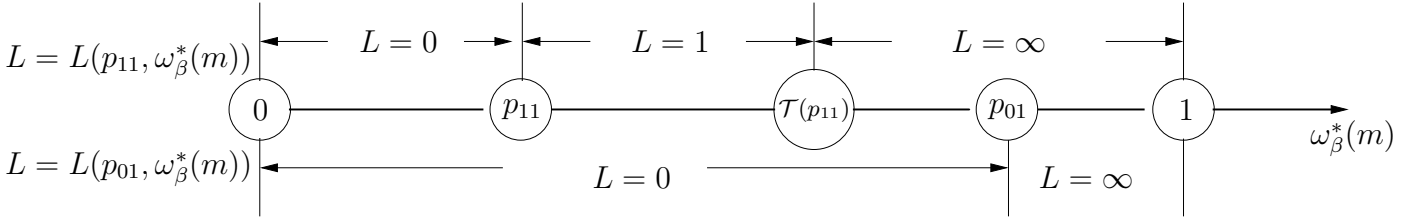}
\end{psfrags}}\caption{The threshold crossing time for different regions of
$\omega^*_{\beta}(m)$ when $p_{11}<p_{01}$ (the top partition is for
$L(p_{11},\omega^*_{\beta}(m))$, the bottom for
$L(p_{01},\omega^*_{\beta}(m))$).}\label{fig:activetime1}
\end{figure}

\section*{Appendix B: Proof of Theorem~\ref{thm:indexability}}\label{sec:B}
It suffices to prove that an arm with an arbitrary transition matrix
$\bf{P}$ is indexable. Based on the threshold structure of the
optimal policy for the single-armed bandit with subsidy $m$ given in
Lemma~\ref{lemma:thresholdPolicy}, indexability is reduced to the
monotonicity of the threshold $\omega^*_{\beta}(m)$, \ie
$\omega^*_{\beta}(m)$ is monotonically increasing with the subsidy
$m$ for $m\in[0,1)$. To prove the monotonicity of
$\omega^*_{\beta}(m)$, we first give the following lemma.

\begin{lemma}\label{lemma:condindex}
Suppose that for any $m\in[0,1)$ we have
\begin{eqnarray}\label{pf:index1}
\frac{dV_{\beta,m}(\omega;u=1)}{dm}|_{\omega=\omega_{\beta}^*(m)}<
\frac{dV_{\beta,m}(\omega;u=0)}{dm}|_{\omega=\omega_{\beta}^*(m)}.
\end{eqnarray}
Then $\omega^*_{\beta}(m)$ is monotonically increasing with $m$.
\end{lemma}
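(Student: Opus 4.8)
The plan is to study the single \emph{gap function}
\[
\Delta(\omega,m)\defeq V_{\beta,m}(\omega;u=1)-V_{\beta,m}(\omega;u=0),
\]
so that the threshold is characterized by $\Delta(\omega^*_{\beta}(m),m)=0$, and hypothesis~\eqref{pf:index1} says exactly that the right derivative $\frac{\partial}{\partial m}\Delta(\omega,m)\big|_{\omega=\omega^*_{\beta}(m)}<0$. By Lemma~\ref{lemma:thresholdPolicy}, for each fixed $m$ the map $\omega\mapsto\Delta(\omega,m)$ is continuous, has a \emph{unique} zero at $\omega^*_{\beta}(m)$, and is negative below it and positive above it (active value linear, passive value convex, crossing from below). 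Heuristically, implicit differentiation of $\Delta(\omega^*_{\beta}(m),m)=0$ gives $\frac{d\omega^*_{\beta}}{dm}=-(\partial_m\Delta)/(\partial_\omega\Delta)$, whose numerator is negative by~\eqref{pf:index1} and whose denominator is positive by the single-crossing structure, yielding $\frac{d\omega^*_{\beta}}{dm}>0$. I would not pursue this directly, however, because Lemma~\ref{lemma:passivetimeform} only guarantees that $V_{\beta,m}$ is piecewise linear (not everywhere differentiable) in $m$, so differentiating the threshold curve is delicate.

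Instead I would argue by contradiction, invoking only one-sided derivatives at a single point. Fix $m_1<m_2$ and set $\omega_1\defeq\omega^*_{\beta}(m_1)$; monotonicity can fail only if $\omega_1$ lies in the active set at subsidy $m_2$, i.e. $\Delta(\omega_1,m_2)>0$. Since $\omega_1$ is the threshold at $m_1$ we have $\Delta(\omega_1,m_1)=0$, and by~\eqref{pf:index1} at $m_1$ the right derivative of $m\mapsto\Delta(\omega_1,m)$ is strictly negative, so the gap is negative just to the right of $m_1$. This map is continuous and piecewise linear in $m$ (a difference of two convex, piecewise-linear functions, by Lemma~\ref{lemma:passivetimeform}), so I would set $m_0\defeq\sup\{m\in[m_1,m_2]:\Delta(\omega_1,m)\le0\}$; by closedness and continuity $m_0\in(m_1,m_2)$ with $\Delta(\omega_1,m_0)=0$ and $\Delta(\omega_1,m)>0$ on $(m_0,m_2]$, whence the right derivative of $m\mapsto\Delta(\omega_1,m)$ at $m_0$ is strictly positive.

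Finally I would close the loop: because $\Delta(\omega_1,m_0)=0$ and, by Lemma~\ref{lemma:thresholdPolicy}, $\omega\mapsto\Delta(\omega,m_0)$ vanishes only at its own threshold, we must have $\omega_1=\omega^*_{\beta}(m_0)$. Hence~\eqref{pf:index1} applies at $m_0$ and forces the right derivative of $m\mapsto\Delta(\omega_1,m)$ there to be strictly negative, contradicting the strictly positive right derivative just established. Therefore no such $m_2$ exists and $\omega^*_{\beta}(m)$ is monotonically increasing. I expect the one genuine obstacle to be the non-differentiability of the value functions in $m$ noted after Lemma~\ref{lemma:closeformD}; the contradiction argument sidesteps it by invoking~\eqref{pf:index1} only through the right derivative at the single crossing subsidy $m_0$, consistent with the paper's convention of taking right derivatives, and it needs no control over how the threshold function behaves between its kinks.
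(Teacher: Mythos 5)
Your proposal is correct and is essentially the paper's own argument: a proof by contradiction that compares the one-sided $m$-derivatives of $V_{\beta,m}(\omega;u=1)$ and $V_{\beta,m}(\omega;u=0)$ at a subsidy where the gap $\Delta$ vanishes, so that the unique-intersection property from Lemma~\ref{lemma:thresholdPolicy} identifies that belief as the threshold and hypothesis~\eqref{pf:index1} delivers the contradiction. The only differences are presentational: your first-crossing construction $m_0=\sup\{m\in[m_1,m_2]:\Delta(\omega_1,m)\le 0\}$ makes rigorous the paper's informal assumption that the threshold ``is decreasing at $m_0$,'' and note that you only need the right derivative of $m\mapsto\Delta(\omega_1,m)$ at $m_0$ to be $\ge 0$ (it exists by convexity of $V_{\beta,m}$ in $m$ even where the breakpoints of the piecewise-linear structure accumulate, as happens for positively correlated arms), so your strict-positivity claim via piecewise linearity, while harmless, is not needed.
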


We prove Lemma~\ref{lemma:condindex} by contradiction. Assume that there
exists an $m_0\in[0,1)$ such that $\omega^*_{\beta}(m)$ is
decreasing at $m_0$. Then, there exists an $\epsilon>0$ such that
for any $\Delta m\in[0,\epsilon]$, we have
\begin{eqnarray}\label{eq:Vincrease}
V_{\beta,m_0+\Delta
m}(\omega^*_{\beta}(m_0);u=1)\ge~V_{\beta,m_0+\Delta
m}(\omega^*_{\beta}(m_0);u=0).
\end{eqnarray}
Since $\omega^*_{\beta}(m_0)$ is the threshold of the optimal policy
under subsidy $m_0$, we have
\begin{eqnarray}\label{eq:Vequal}
V_{\beta,m_0}(\omega^*_{\beta}(m_0);u=1)=V_{\beta,m_0}(\omega^*_{\beta}(m_0);u=0).
\end{eqnarray}
From~\eqref{eq:Vincrease} and~\eqref{eq:Vequal}, we have
\[\frac{dV_{\beta,m}(\omega;u=1)}{dm}|_{\omega=\omega_{\beta}^*(m_0)}\ge
\frac{dV_{\beta,m}(\omega;u=0)}{dm}|_{\omega=\omega_{\beta}^*(m_0)},\]
which contradicts with~\eqref{pf:index1}.
Lemma~\ref{lemma:condindex} thus holds.

According to Lemma~\ref{lemma:condindex}, it is sufficient to
prove~\eqref{pf:index1}. Recall that
$D_{\beta,m}(\omega)=\frac{d(V_{\beta,m}(\omega))}{dm}$.
From~\eqref{eqn:value_a} and~\eqref{eqn:value_b}, we can
write~\eqref{pf:index1} as
\begin{eqnarray}
\beta
(\omega_{\beta}^*(m)D_{\beta,m}(p_{11})+(1-\omega_{\beta}^*(m))D_{\beta,m}(p_{01}))<
1+\beta
D_{\beta,m}(\Tc^1(\omega_{\beta}^*(m))).\label{eqn:passiveinequal}
\end{eqnarray}

To prove \eqref{eqn:passiveinequal}, we consider the following three
regions of $\omega_{\beta}^*(m)$.

\textbf{Region 1:} $0\le\omega_{\beta}^*(m)< \min\{p_{01},p_{11}\}.$

Based on the lower bound of the updated belief given in
Lemma~\ref{lemma:ksteptransition}, the arm will be activated in
every slot when the initial belief $\omega>\omega_{\beta}^*(m)$.
Thus,
$D_{\beta,m}(p_{11})=D_{\beta,m}(p_{01})=D_{\beta,m}(\Tc^1(\omega_{\beta}^*(m)))=0$;
\eqref{eqn:passiveinequal} holds trivially.

\textbf{Region 2:} $\omega_o\le\omega_{\beta}^*(m)\le 1$.

In this region, the arm is made passive in every slot when the
initial belief state is $\Tc^1(\omega_{\beta}^*(m))$. This is
because $\Tc^k(\omega_{\beta}^*(m))\le\omega_{\beta}^*(m)$ for any
$k\ge1$ (see Lemma~\ref{lemma:ksteptransition},
Fig.~\ref{fig:ksteppositive} and Fig.~\ref{fig:kstepnegative}).
Therefore,
$D_{\beta,m}(\Tc^1(\omega_{\beta}^*(m)))=\frac{1}{1-\beta}$. Since
both $D_{\beta,m}(p_{11})$ and $D_{\beta,m}(p_{01})$ are upper
bounded by $\frac{1}{1-\beta}$, it is easy to see that
\eqref{eqn:passiveinequal} holds.

\textbf{Region 3:}
$\min\{p_{01},p_{11}\}\le\omega_{\beta}^*(m)<\omega_o$.

In this region, $\Tc^1(\omega_{\beta}^*(m))>\omega_{\beta}^*(m)$
(see Fig.~\ref{fig:ksteppositive} and Fig.~\ref{fig:kstepnegative}).
Thus, $\Tc^1(\omega_{\beta}^*(m))$ is in the active set, which gives
us
\begin{eqnarray}\label{eqn:d1}
D_{\beta,m}(\Tc^1(\omega_{\beta}^*(m)))&=&\beta(\Tc^1(\omega_{\beta}^*(m))
D_{\beta,m}(p_{11})+(1-\Tc^1(\omega_{\beta}^*(m)))D_{\beta,m}(p_{01}))
\end{eqnarray}
To prove~\eqref{eqn:passiveinequal}, we consider the positively
correlated and negatively correlated cases separately.
\begin{itemize}
\item \emph{Case 1: Negatively correlated channel ($p_{11}<p_{01}$).}
\end{itemize}
Since $p_{01}>\omega_o>\omega_{\beta}^*(m)$, $p_{01}$ is in the
active set. We thus have {\begin{eqnarray}\label{eqn:dd2}
D_{\beta,m}(p_{01})&=&\beta(p_{01}
D_{\beta,m}(p_{11})+(1-p_{01})D_{\beta,m}(p_{01})).
\end{eqnarray}}
Substituting~\eqref{eqn:d1} and~\eqref{eqn:dd2}
into~\eqref{eqn:passiveinequal}, we reduce
~\eqref{eqn:passiveinequal} to the following inequality.
\begin{eqnarray}\label{eqn:p01ineq}
\frac{\beta}{1-\beta(1-p_{01})}D_{\beta,m}(p_{11})(1-\beta)(\beta
p_{01}+\omega_{\beta}^*(m)-\beta\Tc^1(\omega_{\beta}^*(m)))<1.
\end{eqnarray}
Notice that the left-hand side of~\eqref{eqn:p01ineq} is increasing
with $\omega_{\beta}^*(m)$ and $D_{\beta,m}(p_{11})$. It thus
suffices to show the inequality by replacing $\omega_{\beta}^*(m)$
with its upper bound $\omega_o$ and $D_{\beta,m}(p_{11})$ with its
upper bound $\frac{1}{1-\beta}$. After some simplifications, it is
sufficient to prove
\begin{eqnarray}
f(\beta)\defeq
p_{01}(p_{01}-p_{11})\beta^2+\beta(p_{01}+1-p_{11}-p_{01}^2+p_{01}p_{11})-1-p_{01}+p_{11}<0.
\end{eqnarray}
It is easy to see that $f(\beta)$ is convex in $\beta$,
$f(0)=-1-p_{01}+p_{11}<0$, and $f(1)=0$. We thus conclude that
$f(\beta)<0$ for any $0\le\beta<1$.

\begin{itemize}
\item \emph{Case 2: Positively correlated channel ($p_{11}>p_{01}$).}
\end{itemize}

Since $p_{11}\ge\omega_o>\omega_{\beta}^*(m)$, $p_{11}$ is in the
active set. We thus have
\begin{eqnarray}\label{eqn:d2}
D_{\beta,m}(p_{11})&=&\beta(p_{11}
D_{\beta,m}(p_{11})+(1-p_{11})D_{\beta,m}(p_{01})).
\end{eqnarray}

Substituting~\eqref{eqn:d1} and \eqref{eqn:d2}
into~\eqref{eqn:passiveinequal}, we
reduce~\eqref{eqn:passiveinequal} to the following inequality.
\begin{eqnarray}\label{eqn:key1}
\beta D_{\beta,m}(p_{01})
(1-\beta)(1-\frac{\omega_{\beta}^*(m)-\beta\Tc^1(\omega_{\beta}^*(m))}{1-\beta
p_{11}})<1.
\end{eqnarray}

Substituting the closed-form of $D_{\beta,m}(p_{01})$ given
in~\eqref{eq:PcloseformDp01} into~\eqref{eqn:key1}, we end up with
an inequality in terms of $L(p_{01},\omega^*_{\beta}(m))$ and
$\omega^*_{\beta}(m)$. Notice that the left-hand side
of~\eqref{eqn:key1} is decreasing with $\omega^*_{\beta}(m)$. It
thus suffices to show the inequality by replacing
$\omega^*_{\beta}(m)$ with its lower bound
$\Tc^{L(p_{01},\omega^*_{\beta}(m))-1}(p_{01})$ (by the definition
of $L(p_{01},\omega^*_{\beta}(m))$). Let $x=p_{11}-p_{01}$. After
some simplifications, it is sufficient to show that for any
$0\le\beta<1,~0\le p_{01}\le 1,~0\le x\le 1-p_{01},~
L\in\{0,1,2,..\}$, {\small\begin{eqnarray}
f(\beta)\defeq\beta^{L+2}p_{01} x^{L+1}(1-x)+\beta^2(p_{01}
x^{L+2}+x-x^2-p_{01} x)+\beta(x^2+p_{01} x-p_{01} x^{L+1}-1)+1-x>0.
\end{eqnarray}}
Since $f(0)=1-x>0$ and $f(1)=0$, it is sufficient to prove that
$f(\beta)$ is strictly decreasing with $\beta$ for $0\le\beta\le1$,
which follows by showing $\frac{d(f(\beta))}{d(\beta)}<0$ for
$0\le\beta<1$. {\small\begin{eqnarray}\label{eqn:diff}
\frac{d(f(\beta))}{d(\beta)}&=&(L+2)\beta^{L+1}p_{01}
x^{L+1}(1-x)+2\beta(p_{01} x^{L+2}+x-x^2-p_{01} x)+(x^2+p_{01}
x-p_{01} x^{L+1}-1).~~~~~~~
\end{eqnarray}}
To show $\frac{d(f(\beta))}{d(\beta)}<0$ for $0\le\beta<1$, we will
establish the following two facts:
\begin{itemize}
\item[(i)] $\frac{d(f(\beta))}{d(\beta)}|_{\beta=1}\le
0$.\\[-1em]
\item[(ii)] $\frac{d(f(\beta))}{d(\beta)}$ is strictly increasing
with $\beta$.
\end{itemize}
To prove (i), we set $\beta=1$ in~\eqref{eqn:diff}. After some
simplifications, we need to prove {\begin{eqnarray}\label{eqn:inq}
h(p_{01})\defeq-p_{01} Lx^{L+2}+p_{01}(L+1)x^{L+1}-x^2-p_{01}
x+2x-1\le 0.
\end{eqnarray}}
Since  $h(0)=-(x-1)^2\le 0$, it is sufficient to prove that
$h(p_{01})$ is monotonically decreasing with $p_{01}$, \ie we need
to prove
\begin{eqnarray}\label{eqn:diff1}
\frac{d(h(p_{01}))}{d(p_{01})}=-Lx^{L+2}+(L+1)x^{L+1}-x\le0.
\end{eqnarray}
Since $Lx^{L+1}\le\Sigma_{k=1}^{L}x^k=\frac{x-x^{L+1}}{1-x}$, it is
easy to see that~\eqref{eqn:diff1} holds. We thus proved (i).

To prove (ii), it suffices to show that the coefficient of $\beta$
in~\eqref{eqn:diff} is nonnegative, \ie we need to prove
\begin{eqnarray}\label{eqn:diff2}
x^{L+2}+x-x^2-p_{01} x\ge0.
\end{eqnarray}
Since $0\le x\le1-p_{01}$, we have $p_{01}x(x^{L+1}-1)\ge
-p_{01}x\ge(x-1)x$. It is easy to see that~\eqref{eqn:diff2} holds.
We thus proved (ii).

From (i) and (ii), it is easy to see that
$\frac{d(f(\beta))}{d(\beta)}<0$ for any $0\le\beta<1$. We thus
proved the indexability.

\section*{Appendix C: Proof of Theorem~\ref{thm:algorithm}}\label{sec:C}

We notice that Step 1 runs in $O(N)$ time. In Step 2, the number
of regions that needs to be calculated for each channel is at most
$O(\log\frac{\delta}{N})=O(\log N)$. It runs in constant time to
find $l_i$ and $d_i$ for channel $i$. So Step 2 runs in at most
$O(N\log N)$ time. In Step 3, the ordering of all those
probabilities needs at most $O(N\log N)(\log(O(N\log N)))=O(N(\log
N)^2)$ time. Step 4 runs in $O(N)$ time for each region that does
not belong to $V$. So Step 4 runs in at most $O(N^2\log N)$ time.
Finally, Step $5$ runs in $O(N)$ time. Overall, the algorithm runs
in at most $O(N^2\log N)$ time.

\section*{Appendix D: Proof of Lemma~\ref{lemma:valueboundedness}}\label{sec:D}

From the closed-form $V_{\beta,m}(p_{01})$ (see Lemma
\ref{lemma:valuefunction}), we have, for any $\beta~(0\le\beta<1)$,
\begin{eqnarray}
|V_{\beta,m}(p_{01})-V_{\beta,m}(p_{11})|\le c.
\end{eqnarray}

From Fig.~\ref{fig:valuefunc2}, Fig.~\ref{fig:valuefunc1}, and
Fig.~\ref{fig:valuefunc}, we have, for any $\omega\in[0,1]$,
\begin{eqnarray}
\min\{V_{\beta,m}(0;u=1),V_{\beta,m}(1;u=1)\}\le
V_{\beta,m}(\omega)\le\max\{V_{\beta,m}(0;u=0),V_{\beta,m}(1;u=1)\}.
\end{eqnarray}

Consequently, we have, for any $\omega,\omega'\in [0,1]$,
{\small\begin{eqnarray}\nn
&&|V_{\beta,m}(\omega)-V_{\beta,m}(\omega')|\\\nn
&&\le\max(|V_{\beta,m}(0;u=1)-V_{\beta,m}(1;u=1)|,|V_{\beta,m}(0;u=0)-V_{\beta,m}(0;u=1)|,|V_{\beta,m}(0;u=0)
-V_{\beta,m}(1;u=1)|)\\\nn &&=\max(|\beta
(V_{\beta,m}(p_{01})-V_{\beta,m}(p_{11}))-1|, |\beta
(V_{\beta,m}(p_{01})-V_{\beta,m}(p_{11}))|, 1).
\end{eqnarray}}
Since $|V_{\beta,m}(p_{01})-V_{\beta,m}(p_{11})|\le c$ for any
$\beta~(0\le\beta<1)$, then
$V_{\beta,m}(\omega)-V_{\beta,m}(\omega')|\le c+1$ for any
$\beta~(0\le\beta<1)$ and $\omega,\omega'\in [0,1]$. Thus the
value-boundedness condition is satisfied.

\section*{Appendix E: Proof of Lemma~\ref{lemma:thresholdPolicyavg}}\label{sec:E}
The convergence of $\omega^*_{\beta}(m)$ is trivial for $m<0$ and
$m\ge 1$.

For $0\le m<1$, let
$W(\omega)=\lim_{\beta\rightarrow1}W_{\beta}(\omega)$. This limit
exists and is given in Theorem~\ref{thm:whittleindex-timavg} (it is
tedious and lengthy to get the limit and we skip the detailed
calculation). Define $\omega^*(m)$ as the inverse function of
$W(\omega)$. We notice that $W(\omega)$ is a constant function (thus
not invertible) when $\omega_o\le\omega\le\Tc^1(p_{11})$
(see~\eqref{eq:whittleindexavgN}). In this case, we set
$\omega^*(m)=\Tc^1(p_{11})$. Formally, we have
\begin{eqnarray}
\omega^*(m)=\left\{\begin{array}{c} c~(c<0)~~~~~~~~~~~~\mbox{if}~~m<0\\
\max\{\omega:W(\omega)=m\}~~~\mbox{if}~~0\le m<1\\
b~(b>1)~~~~~~~~~~~~~\mbox{if}~~m\ge
1\end{array}\right..\label{eqn:thresholdavg}
\end{eqnarray}
Next, we prove that
$\lim_{\beta\rightarrow1}\omega^*_{\beta}(m)=\omega^*(m)$ as
$\beta\rightarrow1$ by contradiction. Since
$W(\omega)=\lim_{\beta\rightarrow1}W_{\beta}(\omega)$ and
$W_{\beta}(\omega)$ is increasing with $\omega$, $W(\omega)$ is also
increasing with $\omega$. Assume first that $W_{\beta}(\omega)$ is
strictly increasing at point $\omega^*_{\beta}(m)$. We prove
$\lim_{\beta\rightarrow1}\omega^*_{\beta}(m)=\omega^*(m)$ by
contradiction as follows.


Assume $\omega_{\beta}^*(m)\nrightarrow\omega^*(m)$, \ie there
exists an $\epsilon>0$, a $\beta'~(0\le\beta'<1)$, and a series
$\{\beta_k\}~(\beta_k\rightarrow 1)$ such that
$|\omega_{\beta_k}^*(m)-\omega^*(m)|>\epsilon$ for any
$\beta_k>\beta'$. If $\omega^*(m)-\epsilon>\omega_{\beta_k}^*(m)$
for any $\beta_k>\beta'$, then $W_{\beta_k}(\omega^*(m)-\epsilon)\ge
W_{\beta_k}(\omega_{\beta_k}^*(m))$ for any $\beta_k>\beta'$ by the
monotonicity of $W_{\beta_k}(\omega)$. Since $W(\omega)$ is strictly
increasing at point $\omega^*(m)$, there exists a $\delta>0$ such
that $W(\omega^*(m))-W(\omega^*(m)-\epsilon)>\delta$. Then we have,
for any $\beta_k>\beta'$,
\[W_{\beta_k}(\omega^*(m)-\epsilon)\ge
W_{\beta_k}(\omega_{\beta_k}^*(m))=m=W(\omega^*(m))>W(\omega^*(m)-\epsilon)+\delta,\]
which contradicts with the fact that
$W_{\beta_k}(\omega_{\beta_k}^*(m)-\epsilon)\rightarrow
W(\omega^*(m)-\epsilon)$ as $\beta_k\rightarrow 1$. The proof for
the case when $\omega^*(m)+\epsilon<\omega_{\beta_k}^*(m)$ for any
$\beta_k>\beta'$ is similar to the above.

Consider next that $W(\omega)$ is not strictly increasing at
point $\omega^*(m)$. This case only occurs when $p_{11}<p_{01}$ and
$\omega^*(m)=\Tc^1(p_{11})$. We notice that
$W_{\beta}(\Tc^1(p_{11}))$ increasingly converges to
$W(\Tc^1(p_{11}))$ as $\beta\rightarrow 1$. Thus
$\omega_{\beta}^*(m)\ge\Tc^1(p_{11})=\omega^*(m)$ by the
monotonicity of $W_{\beta}(\omega)$. Assume
$\omega_{\beta}^*(m)\nrightarrow\omega^*(m)$, \ie, there exist an
$\epsilon>0$, a $\beta'~(0\le\beta'<1)$ and a series
$\{\beta_k\}~(\beta_k\rightarrow 1)$ such that
$\omega_{\beta_k}^*(m)-\omega^*(m)>\epsilon$ for any
$\beta_k>\beta'$. We have $W_{\beta_k}(\omega^*(m)+\epsilon)<
W_{\beta_k}(\omega_{\beta_k}^*(m))$ for any $\beta_k>\beta'$ by the
monotonicity of $W_{\beta_k}(\omega)$. Since $W(\omega)$ is strictly
increasing in $[\omega^*(m),\omega^*(m)+\epsilon]$, there exists a
$\delta'>0$ such that
$W(\omega^*(m)+\epsilon)-W(\omega^*(m))>\delta'$. Then we have, for
any $\beta_k>\beta'$, \[W_{\beta_k}(\omega^*(m)+\epsilon)\le
W_{\beta_k}(\omega_{\beta_k}^*(m))=m=W(\omega^*(m))<W(\omega^*(m)+\epsilon)-\delta',\]
which contradicts with the fact that
$W_{\beta_k}(\omega_{\beta_k}^*(m)+\epsilon)\rightarrow
W(\omega^*(m)+\epsilon)$ as $\beta_k\rightarrow 1$.

Next, we show that the optimal policy $\pi_{\beta_k}^*$ for the
single-armed bandit process with subsidy under the discounted reward
criterion pointwise converges to a threshold policy $\pi^*$ as
$\beta_k\rightarrow1$. To see this, we construct $\pi^*$ as follows:
(1) If $m<0$, then the arm is made active all the time; (2) If
$m\ge1$, the arm is made passive all the time; (3) If $0\le m<1$,
then $\omega$ is made passive when current state
$\omega\le\omega^*(m)$, otherwise it is activated. Since
$\omega_{\beta}^*(m)$ converges to $\omega^*(m)$ as
$\beta\rightarrow1$, it is easy to see that $\pi_{\beta_k}^*$
pointwise converges to $\pi^*$ for any $\beta_k\rightarrow1$.
Because the single-armed bandit process with subsidy under the
discounted reward criterion satisfies the value boundedness
condition (see Lemma \ref{lemma:valueboundedness}), the threshold
policy $\pi^*$ is optimal for the single-armed bandit process with
subsidy under the average reward criterion based on Dutta's theorem.

\section*{Appendix F: Proof of Theorem~\ref{thm:whittleindex-timavg}}\label{sec:F}

Since $\omega^*(m)=\lim_{\beta\rightarrow1}\omega^*_{\beta}(m)$ and
$\omega^*_{\beta}(m)$ is monotonically increasing with $m$ (see
Theorem~\ref{thm:indexability}), it is easy to see that
$\omega^*(m)$ is also monotonically increasing with $m$. Therefore,
the bandit is indexable.

Next, we prove that $W(\omega)\defeq \lim_{\beta\rightarrow 1} W_\beta(\omega)$ is
indeed Whittle's index under the average reward criterion. For a belief
state $\omega$ of an arm, its Whittle's index is the infimum subsidy
$m$ such that $\omega$ is in the passive set under the optimal
policy for the arm, \ie the infimum subsidy $m$ such that
$\omega\le\omega^*(m)$ (according to
Lemma~\ref{lemma:thresholdPolicyavg}). From~\eqref{eqn:thresholdavg}
and the monotonicity of $W(\omega)$ with $\omega$, we have that
$W(\omega)$ is the infimum subsidy $m$ such that
$\omega\le\omega^*(m)$.


\section*{Appendix G: Proof of The Structure of Whittle's Index Policy}\label{sec:G}

The proof is an extension of the proof given
in~\cite{Zhao&etal:08TWC} under single-channel sensing ($K=1$).
Consider the belief update of unobserved channels (see
\eqref{eq:omega}).
\begin{equation}
\Tc^1(\omega)=p_{01}+\omega (p_{11}-p_{01}). \label{eq:tau}
\end{equation}
We notice that $\Tc^1(\omega)$ is an increasing function of $\omega$
for $p_{11}>p_{01}$ and a decreasing function of $\omega$ for
$p_{11}<p_{01}$. Furthermore, the belief value $\omega_i(t)$ of
channel $i$ in slot $t$ is bounded between $p_{01}$ and $p_{11}$ for
any $i$ and $t>1$ (see \eqref{eq:omega}).

Consider first $p_{11}\ge p_{01}$. The channels observed to be in
state $1$ in slot $t-1$ will achieve the upper bound $p_{11}$ of the
belief value in slot $t$ while the channels observed to be in state
$0$ the lower bound $p_{01}$. Whittle's index policy, which is
equivalent to the myopic policy, will stay in channels observed to
be in state $1$ and recognize channels observed to be in state $0$
as the least favorite in the next slot. The unobserved channels
maintains the ordering of belief values in every slot due to the
monotonically increasing property of $\Tc^1(\omega)$. The structure
of Whittle index policy for $p_{11}<p_{01}$ can be similarly
obtained by noticing that reversing the order of unobserved channels
in every slot maintains the ordering of belief values due to the
monotonically decreasing property of $\Tc^1(\omega)$.

\section*{Appendix H: Proof of Theorem~\ref{thm:performanceidentical}}\label{sec:H}

The proof for the lower bound of $J_w$ is an extension of that with
single-channel sensing ($K=1$) given in~\cite{Zhao&etal:08TWC}. It
is, however, much more complex to analyze the performance of Whittle's
index policy when $K\ge1$. The lower bound obtained here is looser
than that in~\cite{Zhao&etal:08TWC} when applied to the case of
$K=1$.

Define a transmission period on a channel as the number of
consecutive slots in which the channel has been sensed. Based on the
structure of Whittle index policy, it is easy to show that
\begin{equation}\label{eqn:U-TP}
J_w=\left\{\begin{array}{ll}K(1-\frac{1}{\mbbE[\tau]});~~~&\mbox{if}~~p_{11}\ge
p_{01};\\[0.5em]
K\frac{1}{\mbbE[\tau]};~~~&\mbox{if}~~p_{11}< p_{01}
\end{array}\right.,
\end{equation}
where $\mbbE[\tau]$ is the average length of the transmission period
over the infinite time horizon.

To bound the throughput $J_w$, it is equivalent to bound the average
length of the transmission period $\mbbE[\tau]$ as shown in equation
\eqref{eqn:U-TP}. We consider the following two cases.

\begin{itemize}
\item {\it Case 1: $p_{11}\ge p_{01}$}
\end{itemize}

Let $\omega$ denote the belief value of the chosen channel in the
first slot of a transmission period. The length $\tau(\omega)$ of
this transmission period has the following distribution.

\begin{equation}\label{eq:TPomega}
\Pr[\tau(\omega)=l]= \left\{\begin{array}{ll} 1-\omega, & l = 1 \\
\omega p_{11}^{l-2}p_{10}, & l>1\\
\end{array}
\right..
\end{equation}
It is easy to see that if $\omega'\ge \omega$, then $\tau(\omega')$
stochastically dominates $\tau(\omega)$.

From the structure of Whittle index policy, $\omega=\Tc^k(p_{01})$,
where $k$ is the number of consecutive slots in which the channel
has been unobserved since the last visit to this channel. When the
user leaves one channel, this channel has the lowest priority. It
will take at least $\lfloor\frac{N-K}{K}\rfloor$ slots before the
user returns to the same channel, \ie $k\ge
\lfloor\frac{N}{K}\rfloor-1$. Based on the monotonically increasing
property of the $k$-step transition probability $\Tc^k(p_{01})$ (see
Fig.~\ref{fig:ksteppositive}), we have
$\omega=\Tc^k(p_{01})\ge\Tc^{\lfloor\frac{1}{\alpha}\rfloor-1}(p_{01})$.
Thus $\tau(\Tc^{\lfloor\frac{N}{K}\rfloor-1}(p_{01}))$ is
stochastically dominated by $\tau(\omega)$, and the expectation of
the former leads to the lower bound of $J_w$ given
in~\eqref{eqn:optimalbound}.

\begin{itemize}
\item {\it Case 2: $p_{11}< p_{01}$}
\end{itemize}

In this case, $\tau(\omega)$ has the following distribution:

\begin{equation}\label{eq:TPomega1}
\Pr[\tau(\omega)=l]= \left\{\begin{array}{ll} \omega, & l = 1 \\
(1-\omega) p_{00}^{l-2}p_{01}, & l>1\\
\end{array}
\right..
\end{equation}
Opposite to case 1, $\tau(\omega')$ stochastically dominates
$\tau(\omega)$ if $\omega'\le\omega$ .

From the structure of Whittle's index policy,
$\omega=\Tc^k(p_{11})$, where $k$ is the number of consecutive slots
in which the channel has been unobserved since the last visit to
this channel. If $k$ is odd, then
$\Tc^k(p_{11})\ge\Tc^{2\lfloor\frac{N}{K}\rfloor-2}(p_{11})$ since
$2\lfloor\frac{N}{K}\rfloor-2$ is an even number (see
Fig.~\ref{fig:kstepnegative}). If $k$ is even, then $k$ is at least
$2\lfloor\frac{N-K}{K}\rfloor$. we have $\omega=\Tc^k(p_{11})\ge
\Tc^{2\lfloor\frac{N}{K}\rfloor-2}(p_{11})$. Thus $\tau(\omega)$ is
stochastically dominated by
$\tau(\Tc^{2\lfloor\frac{N}{K}\rfloor-2}(p_{11}))$, and the
expectation of the latter leads to the lower bound of $J_w$ as given
in~\eqref{eqn:optimalbound1}.

Next, we show the upper bound of $J$. From~\eqref{eqn:relaxavg},
we have $J\le\inf_{m}\{NJ_m-m(N-K)\}$ since channels are
stochastically identical.

When $p_{11}\ge p_{01}$, we have
\begin{eqnarray}\label{eqn:j1}
J\le\min_{m\in\{\frac{\omega_o}{1-p_{11}+\omega_o},~
0\}}NJ_m-m(N-K)= \min\{\frac{K\omega_o}{1-p_{11}+\omega_o},
N\omega_o\}.
\end{eqnarray}
When $p_{11}>p_{01}$, we have
\begin{eqnarray}\label{eqn:j2}
J\le\min_{m\in\{\frac{p_{01}}{1-\Tc^1(p_{11})+p_{01}},~0\}}NJ_m-m(N-K)=\min\{\frac{Kp_{01}}{1-\Tc^1(p_{11})+p_{01}},
N\omega_o\}.
\end{eqnarray}

\section*{Appendix I: Proof of Corollary~\ref{cor:appfac}}\label{sec:I}

It has been shown that the myopic policy is optimal when $K=1$
and $p_{11}\ge p_{01}$~\cite{Zhao&etal:08TWC,Ahmad&etal} (note that
for $N=2,3$ negatively correlated channels, the optimality of the myopic policy has also been
established). Based on the equivalency between Whittle's index policy
and the myopic policy, we conclude that Whittle index policy is optimal
for $K=1$ and $p_{11}\ge p_{01}$.

We now prove that Whittle's index policy is optimal when $K=N-1$.
We construct a genie-aided system where the user knows the states
$S_i(t)$ of all channels at the end of each slot $t$. In this
system, Whittle's index policy is clearly optimal, and the optimal
performance is the upper bound of the original one. For the original
system where the user only knows the states of the sensed $N-1$
channels, we notice that the channel ordering by Whittle's index
policy in each slot is the same as that in the genie-aided system.
Whittle's index policy thus achieves the same performance as in the
genie-aided system. It is thus optimal.

According to Theorem~\ref{thm:performanceidentical}, we arrive at
the following inequalities (notice that $J_w\ge K\omega_o$).
\begin{equation}\label{eqn:U-TP1}
\eta\ge\left\{\begin{array}{ll}\max\{1-p_{11}+\omega_o,
\frac{K}{N}\},~~~&\mbox{if}~~p_{11}\ge
p_{01}\\[0.5em]
\max\{\frac{1-\Tc^1(p_{11})+p_{01}}{1-p_{11}+p_{01}},
\frac{K}{N}\},~~~&\mbox{if}~~p_{11}< p_{01}
\end{array}\right..
\end{equation}
From~\eqref{eqn:U-TP1}, we have $\eta\ge\frac{K}{N}$.

Next, we show that Whittle's index policy achieves at least half the
optimal performance for negatively correlated channels
($p_{11}<p_{01}$). In this case, we have
\[\eta\ge\frac{1-\Tc^1(p_{11})+p_{01}}{1-p_{11}+p_{01}}=1+\frac{(p_{11}-p_{01})(1-p_{11})}{1-(p_{11}-p_{01})}
\ge 1-\frac{(1-p_{11})^2}{2-p_{11}}\ge 0.5.\]

\bibliographystyle{ieeetr}

\begin{thebibliography}{10}


\bibitem{Mahajan&Teneketzis:07bookchapter}
A. Mahajan and D. Teneketzis, ``Multi-armed Bandit Problems,'' in
{\em Foundations and Applications of Sensor Management}, A. O. Hero
III, D. A. Castanon, D. Cochran and K. Kastella (Editors),
Springer-Verlag, 2007.

\bibitem{Gittins}
J.~C.~Gittins, ``{Bandit Processes and Dynamic Allocation
Indices}.'' in {\em Journal of the Royal Statistical Society},
Vol.41, No.2., pp.148-177, 1979.

\bibitem{whittle}
P.~Whittle, "{Restless bandits: Activity allocation in a changing
world}", in {\em Journal of Applied Probability},  Vol. 25, 1988.

\bibitem{Complexity}
C. H. Papadimitriou and J. N. Tsitsiklis, ``The Complexity of
Optimal Queueing Network Control,'' in {\em Mathematics of
Operations Research}, Vol. 24, No. 2, May 1999, pp. 293-305.

\bibitem{Ananthram&etal:87TAC}
V. Anantharam, P. Varaiya, and J. Walrand, ``Asymptotically
efficient adaptive allocation rules for the multi-armed bandit
problem with multiple plays (Part I : I.I.D. rewards. Part II :
Markovian rewards.),'' in {\em IEEE Transactions on Automatic
Control}, Vol. AC-32, No. 11, pp. 968 -982, Nov. 1987.

\bibitem{Pandelis&Teneketzis:99}
D. G. Pandelis and D. Teneketzis, ``On the Optimality of the Gittins
Index Rule in Multi-armed Bandits with Multiple Plays,'' in {\em
Mathematical Methods of Operations Research}, Vol. 50, pp. 449-461,
1999.

\bibitem{Weber}

R. R. Weber and G. Weiss, ``{On an Index Policy for Restless
Bandits},'' in {\em Journal of Applied Probability}, Vol.27, No.3,
pp. 637-648, Sep 1990.

\bibitem{Ansell}

P. S. Ansell, K. D. Glazebrook, J.E. Niño-Mora, and M. O'Keeffe,
``Whittle's index policy for a multi-class queueing system with
convex holding costs,'' in {\em Math. Meth. Operat. Res.} 57,
21--39, 2003.

\bibitem{Glazebrook1}
K.~D.~Glazebrook, D.~Ruiz-Hernandez, and C.~Kirkbride, ``{Some
Indexable Families of Restless Bandit Problems },'' in {\em Advances
in Applied Probability}, 38:643-672, 2006.

\bibitem{Zhao&etal:08TWC}
Q.~Zhao, B.~Krishnamachari, and K.~Liu, ``{On Myopic Sensing for
Multi-Channel
  Opportunistic Access: Structure, Optimality, and Performance},'' {\em to
  appear in IEEE Trans. Wireless Communications}, Dec., 2008, available
at http://arxiv.org/abs/0712.0035v3

\bibitem{Ahmad&etal}
S. H. Ahmad, M. Liu, T. Javadi, Q. Zhao and B. Krishnamachari,
``Optimality of Myopic Sensing in Multi-Channel Opportunistic
Access,'' submitted to IEEE Transactions on Information Theory, May,
2008, available at http://arxiv.org/abs/0811.0637


\bibitem{Gilbert:60}
E. N. Gilbert, ``Capacity of burst-noise channels,'' Bell Syst.
Tech. J., vol. 39, pp. 1253-1265, Sept. 1960.

\bibitem{Zorzi&etal:98COM}
M. Zorzi, R. Rao, and L. Milstein, ``{Error statistics in data
transmission over fading channels},'' in {\em IEEE Trans. Commun.},
vol.~46, pp.~1468-1477, Nov. 1998.

\bibitem{Johnston&Krishnamurthy:06TWC}
L.A. Johnston and V. Krishnamurthy, ``{Opportunistic File Transfer
over a Fading Channel: A POMDP Search Theory Formulation with
Optimal Threshold Policies},'' in {\em IEEE Trans. Wireless
Communications}, vol.~5, no.~2, 2006.

\bibitem{Zhao&Sadler:07SPM}
Q.~Zhao and B.~Sadler, ``{A Survey of Dynamic Spectrum Access},'' in
{\em IEEE Signal Processing
  magazine}, vol.~24, no.~3, pp.~79-89, May 2007.

\bibitem{Leny08ACC}
J. Le Ny, M. Dahleh, E. Feron, ``Multi-UAV Dynamic Routing with
Partial Observations using Restless Bandit Allocation Indices,'',
in {\em Proceedings of the 2008 American Control Conference},
Seattle, WA, June 2008.

\bibitem{Zhao&etal:07JSAC}
Q.~Zhao, L.~Tong, A.~Swami, and Y.~Chen, ``{Decentralized Cognitive
MAC for Opportunistic Spectrum Access in Ad Hoc Networks: A POMDP
Framework},'' in {\em IEEE Journal on Selected Areas in
Communications (JSAC): Special Issue on Adaptive, Spectrum Agile and
Cognitive Wireles Networks }, April 2007.

\bibitem{Chen&etal:08IT}
Y.~Chen, Q.~Zhao, and A.~Swami, ``Joint design and separation
principle for opportunistic spectrum access in the presence of
sensing errors,'' in {\em IEEE Transactions on Information Theory},
vol.~54, no.~5, pp.~2053-2071, May, 2008

\bibitem{Zhao&Krish:08TACsub}
Q. Zhao and B. Krishnamachari, ``Structure and Optimality of Myopic Policy in Opportunistic Access with Noisy Observations,''
submitted to {\em IEEE Transactions on Automatic Control}, Feb., 2008, available at
http://arxiv.org/abs/0802.1379v2

\bibitem{Lott&Teneketzis:00}
C. Lott and D. Teneketzis, ``On the Optimality of an Index Rule in
Multi-Channel Allocation for Single-Hop Mobile Networks with
Multiple Service Classes,'' in {\em Probability in the Engineering
and Informational Sciences}, Vol. 14, pp. 259-297, 2000.

\bibitem{Raghunathan&etal:08INFOCOM}
V. Raghunathan, V. Borkar, M. Cao and P. R. Kumar, ''Index policies
for real-time multicast scheduling for wireless broadcast systems,''
in {\em IEEE INFOCOM}, 2008.

\bibitem{Kleinberg:08}
R. Kleinberg, A. Slivkins, and E. Upfal,
``Multi-armed bandit problems in metric spaces,'' in {\em Proc. of the 40th ACM
Symposium on Theory of Computing (STOC)}, 2008.


\bibitem{Nino}
J. E. Niño-Mora, ``Restless bandits, partial conservation laws and
indexability,'' in {\em Advances in Applied Probability}, 33:76–98,
2001.


\bibitem{Guha}
S. Guha and K. Munagala, ``Approximation algorithms for
partial-information based stochastic control with Markovian
rewards,'' in {\em Proc. 48th IEEE Symposium on Foundations of
Computer Science (FOCS)}, 2007.


\bibitem{Guha1}
S. Guha, K. Munagala, ``Approximation Algorithms for Restless Bandit
Problems,'' http://arxiv.org/abs/0711.3861.

\bibitem{Smallwood&Sondik:71OR}
R.~Smallwood and E.~Sondik, ``{The optimal control of partially
ovservable Markov processes over a finite horizon},'' in {\em
Operations Research}, pp.~1071--1088, 1971.

\bibitem{arapostathis}
A. Arapostathis, V. S. Borkar, E. Fern$\acute{a}$ndez-gaucherand, M.
K. Ghosh, and S. I. Marcus, ``Discrete-time controlled markov
processes with average cost criterion: a survey,'' in {\em SIAM J.
Control and Optimization}, Vol. 31, No. 2, pp. 282–344, 1993.


\bibitem{Gallager:95book}
R.~G. Gallager, {\em Discrete Stochastic Processes}.
\newblock Kluwer Academic Publishers, 1995.

%





\bibitem{Sondik}

E. J. Sondik, ``{ The Optimal Control at Partially Observable Markov
Processes Over the Infinite Horizon: Discounted Costs},''  in {\em
Operations Research}, Vol.26, No.2 (Mar. - Apr,. 1978), 282 - 304.

\bibitem{Dutta}
P. K. Dutta, ``{What do discounted optima converge to? A theory of
discount rate asymptotics in economic models},'' in Journal of
Economic Theory 55, pp. 64–94, 1991.

%

\end{thebibliography}
{\scriptsize

}

\end{document}